\definecolor{armygreen}{rgb}{0.29, 0.33, 0.13}
\numberwithin{figure}{section}
\numberwithin{table}{section}
\numberwithin{equation}{section}
\numberwithin{equation}{section}
\newtheorem{theorem}{Theorem}[section]
\newtheorem{lemma}[theorem]{Lema}
\newtheorem{remark}[theorem]{Remark}
\theoremstyle{remark}
\numberwithin{equation}{section}
\begin{document}
\title{Analysis and optimal control of a malaria mathematical model under resistance and population movement}
\author{
\, Cristhian Montoya\footnotemark[1],\,\,\,and\, Jhoana P. Romero--Leiton \footnotemark[2]
} 

\footnotetext[1]{Institute for Mathematical and Computational Engineering. Pontificia Universidad Cat\'olica de Chile. Santiago, Chile.
{\tt cdmontoya85@gmail.com},\,\,\, \url{http://cmontoya.mat.utfsm.cl/}}

\footnotetext[2]{ Universidad de Investigaci\'on y Tecnolog\'ia Experimental Yachay Tech,
             Urcuqu\'{i}, Ecuador, 
             {\tt jpatirom3@gmail.com}}
\date{}

\maketitle
\abstract{In this work, two mathematical models for malaria under resistance are presented. More precisely,
the first model shows the interaction between humans and mosquitoes inside a patch under infection of malaria
when the human population is resistant to antimalarial drug and mosquitoes population is resistant to insecticides.
For the second model, human--mosquitoes population movements in two patches is analyzed under the same malaria
transmission dynamic established in one patch.
For a single patch, existence and stability conditions for the equilibrium solutions in terms
of the local basic reproductive number are developed. These results reveal the existence of a forward bifurcation and the
global stability of disease--free equilibrium. In the case of two patches, a theoretical and numerical
framework on sensitivity analysis of parameters is presented. After that, the use of antimalarial drugs and insecticides are incorporated as control strategies and  an optimal
control problem is formulated.  Numerical experiments are carried out in both models to show the
feasibility of our theoretical results.
}\\

\textbf{Key Words:} Insecticides, Antimalarial Drug, Qualitative analysis, Stability, Bifurcation, Resident Budgeting Time Matrix.
%Frances
\section{Introduction}\label{Section_introduction}
Malaria is a hematoprotozoan parasitic infection transmitted by certain species of anopheline mosquitoes.
Four species of plasmodium commonly infect to humans, but one, \textit{Plasmodium falciparum} is  the most
lethal in humans, causing many deaths per year. Malaria also provides an unbalance that impairs the economic and social development of certain zones of the planet \cite{guinovart2006malaria}.
In reviewing  history,  control programs have been focused in two directions: control of the anopheles mosquito
through removal of breeding sites, use of insecticides, prevention of contact with humans
(by using of screens and bed nets), and use of antimalarial drug (or effective case management) \cite{malaria2002community}. Unfortunately, the implementation of this control mechanisms has not been entirely effective.  Amongst the reasons we can mention: a) resistance of the malaria parasites to antimalarial drugs such as  chloroquine and sulfadoxine--pyrimethamine.  In this case, and from a mathematical point of view,  Aneke in \cite{aneke2002mathematical} describes the
phenomenon of antimalarial drug resistance  in a hyperendemic region by a model of ordinary
differential equations (ODEs). Esteva et al. in \cite{esteva2009qualitative} present a deterministic model for monitoring the impact of antimalarial drug resistance on the transmission dynamics of malaria in a human population. Tchuenche et al. in \cite{tchuenche2011mathematical} formulate and analyze a mathematical model for malaria with treatment and  three levels of resistance in humans incorporing both, sensitive and resistant strains of the parasites. Agusto in  \cite{agusto2014malaria}
formulates and analyzes a deterministic system of ODES for malaria transmission incorporating human movement as well as the development of antimalarial drug resistance  in a multipatch--type system. Other works to underline in this topic are  \cite{koella2003epidemiological,bacaer2005reaction,okosun2011modelling}. b) The use of pyrethroid insecticides (a man-made pesticides similar to the natural pesticide pyrethrum) in malaria vector control. Here we can find the work of Luz et al. in \cite{luz2009impact} in which a model of the seasonal population dynamics of \textit{Aedes aegypti}, both to assess the effectiveness of insecticide interventions on reducing adult mosquito abundance, and to predict evolutionary trajectories of insecticide resistance. In addition,  Aldila et al.  formulate and analyze a mathematical model for transmission of temephos resistance in \textit{Aedes aegypti} population \cite{aldila2014mathematical}, meanwhile in the works    \cite{alphey2014managing,gourley2011slowing}, the authors treat the insecticide resistance in general cases. c) The population migration problem. The movement of infected people or infected mosquitos from areas where malaria is still endemic to areas where the disease had been eradicated led to resurgence of the disease, and this situation also results in a increasing of  resistance to insecticides and antimalarial drug \cite{bloland2001drug}. With respect to migration problem, the works  have been addressed through multipatch--type models see for instance \cite{Gao-Ruan2012,Prosper2012,agusto2014malaria}. Migration problems for dengue virus and other general epidemic models have been reviewed in \cite{Hasler2016,Bichara-Abderrahman2018} and \cite{ lee2015role,Zhang2018,barrios2018assessing,Mishra-Sunita2018,Bock-Yashira2018}, respectively.
\par
As far as we know, does not exist mathematical models considering resistance to antimalarial drug and insecticides and  movement of populations simultaneously,  as factors that hinder the malaria control.  Thus, in this paper we give a first response to this situation, including numerical experiments that allow us to verify the feasibility of our theoretical results.
\par
In this paper, we propose two mathematical models for the malaria transmission dynamics and whose equations are based in  \cite{romero2018optimal}. More precisely, in the first model, we consider the interaction between humans and mosquitoes inside a patch 	when the human population is resistant to antimalarial drug and mosquitoes population is resistant to insecticides.  Existence and stability conditions for the equilibrium solutions in terms
of the local  basic reproductive number are determined. For the second model, human--mosquitoes population movements in two patches is considered  under the same conditions established in one patch and also following the ideas from  \cite{lee2015role}. Besides, by incorporating the use of antimalarial drugs and insecticides  as control strategies,  we formulate an optimal control problem for the disease.

%%%%%%%%%%%%%    ONE PATCH %%%%%%%%%%%%%%%%%%%%%%%%%
%%%%%%%%%%%%%%%%%%%%%%%%%%%%%%%%%%%%%%%%%%%%%%%%%%%%%%%%%%%%%%%%%%%%%%%%

\section{One patch model}
\label{Section_one_patch_model}
In this section, we consider a single patch with a susceptible--infected--recovered (SIR) structure for humans and a susceptible--infected (SI) structure for mosqui\-toes.
In order to present the complete model, we describe the dynamic equations that form our model as follows: let us denote as $S_{h}(t)$, $I_{h}(t)$ and $R_{h}(t)$  the number of susceptible, infected, and recovered humans at time $t$, respectively.  The total human population at time $t$ is  denoted by $N_{h}(t)=S_{h}(t)+I_{h}(t)+R_{h}(t)$.  Similarly, let us denote as $S_{v}(t)$ and $I_{v}(t)$  the number of susceptible, and infected mosquitoes at time $t$, respectively.  The total mosquito population at time $t$ is denoted by $N_{v}(t)=S_{v}(t)+I_{v}(t)$.
\newline
Moreover, from  \cite{romero2018optimal}, we define the force of infection for humans  by
$
  \beta_{h}\epsilon \frac{I_{v}}{N_{h}},
$
where $\beta_{h}$ represents the probability of a human being infected by the bite of an infected mosquito, and $\epsilon$ represents the per capita biting rate of mosquitoes. Similarly, we define the force of infection for mosquitoes  as
$
\beta_{v}\epsilon\frac{I_{h}}{N_{h}},
$
where $\beta_{v}$ represents the probability of infection of mosquito by contact with infected humans.
\par
%susceptible humans
Respect to susceptible humans population, it is increasing due to  recruitment at a constant rate of $\Lambda_{h}$ and by recovered humans from infection,  which are represented by the term $\omega R_{h}$.  Simultaneously, this population decrease due to infection by contact with infected mosquitoes through the term $\beta_{h}\epsilon \frac{I_{v}}{N_{h}}S_{h}$ and by natural death through the term $\mu_h S_h$. Thus, the ODE that represents the variation of the  susceptible humans population is
\begin{equation}\label{Sh}
   \dot {S_h}= \Lambda_{h}+\omega R_{h}-S_{h}\beta_{h}\epsilon\frac{I_{v}}{N_{h}}-\mu_{h}S_{h},
\end{equation}
where the symbol $\cdot$ corresponds to the derivative in time, i.e, $\dot{S_h}=\frac{d}{dt}S_h(t)$.
%infected humans
Now, respect to the infected humans population, it is treated with drug at a constant rate of $\xi_{1}\theta_{1}$, where $\xi_{1}$ is the drug  efficacy and  $\theta_{1}$ is the recovery rate due to the drug.  Besides, the number of  infected individuals  resistant to the drug (by selective pressure) is  $\xi_{1}\theta_{1}q_{1}I_{h}$, where $q_{1}\in [0,1]$ represents the resistance acquisition ratio to the drug.  Thus the term  $\xi_{1}\theta_{1}(1-q_1)I_{h}$ represents the proportion of  sensitive individuals to the drug. Additionally, a proportion of infected individuals recover spontaneously at a rate of  $\delta$ (by action of the immune system), others die from infection at a rate of  $\rho$ and others from natural death at a rate of $\mu_h$.  Thus,  the equation for the variation of the  infected humans population  is given by
\begin{equation}\label{Ih}
   \dot I_h= S_{h}\beta_{h}\epsilon\frac{I_{v}}{N_{h}}-\xi_{1}\theta_{1}(1-q_{1})I_{h}-(\delta+\rho+\mu_{h})I_{h}.
\end{equation}
%recovered humans
Finally, in our model the recovered humans population increase by the action of the drug and by spontaneous recovery, and decrease as consequence of natural death and loss of immunity. Thus, the variation of the recovered humans population in time is described by
\begin{equation}\label{Rh}
   \dot{R}_h= \xi_1\theta_{1}(1-q_1)I_h+\delta I_{h}-(\omega+\mu_{h})R_{h}.
\end{equation}
On the other hand, the description for the SI model is the following: the susceptible mosquitoes population is recruited at a constant rate of  $\Lambda_{v}$.  It  is diminished by infection due to contact with infected humans, which is described through the term $\beta_{v}\epsilon\frac{I_{h}}{N_{h}}S_{v}$. Simultaneously, it is reduced due to natural death with a rate $\mu_{v}$ and by action of insecticides at a rate of $\xi_{2}\theta_{2}$, where $\xi_{2}$ represents the efficacy of insecticide and  $\theta_{2}$ is the death of mosquitoes due to insecticides. The number of mosquitos  resistant to the insecticides is $\xi_{2}\theta_{2}q_{2}$, with $q_{2}\in [0,1]$ represents the resistance acquisition  ratio to the insecticides. Thus, the expression $\xi_{2}\theta_{2}(1-q_{2})$ represents the proportion of sensitive mosquitos to the insecticides. Then, the system describing the variation of the mosquitoes population in time is
\begin{equation}\label{Nv}
    \left\{\begin{array}{ll}
              &\dot {S_v}= \Lambda_{v}-S_{v}\beta_{v}\epsilon\frac{I_{h}}{N_{h}}-\xi_{2}\theta_{2}(1-q_{2})S_v-\mu_{v}S_{v} \\ \\
             &\dot{ I_v}= S_{v}\beta_{v}\epsilon\frac{I_{h}}{N_{h}}-\xi_{2}\theta_{2}(1-q_{2})I_v-\mu_{v}I_{v}.
           \end{array}
     \right.
\end{equation}
\noindent
In summary, from (\ref{Sh})-(\ref{Nv}), our model for malaria under resistance in one patch is given by

\begin{equation}\label{modeloone}
  \left\{\begin{array}{ll}
  &\dot{S_{h}} = \Lambda_{h}+\omega R_{h}-S_{h}\beta_{h}\epsilon\frac{I_{v}}{N_{h}}-\mu_{h}S_{h} \\ \\
  &\dot{I_{h}} = S_{h}\beta_{h}\epsilon\frac{I_{v}}{N_{h}}-\xi_{1}\theta_{1}(1-q_{1})I_{h}-(\delta+\rho+\mu_{h})I_{h} \\ \\
  &\dot{R_{h}} = \xi_{1}\theta_{1}(1-q_1)I_h+\delta I_{h}-(\omega+\mu_{h})R_{h} \\ \\
  &\dot{S_{v}} = \Lambda_{v}-S_{v}\beta_{v}\epsilon\frac{I_{h}}{N_{h}}-\xi_{2}\theta_{2}(1-q_{2})S_v-\mu_{v}S_{v}  \\ \\
  &\dot{I_{v}} = S_{v}\beta_{v}\epsilon\frac{I_{h}}{N_{h}}-\xi_{2}\theta_{2}(1-q_{2})I_v-\mu_{v}I_{v}\\ \\
  &(\mathbf{N}_h(0),\mathbf{ N}_v(0))=(S_h(0), I_h(0). R_h(0), S_v(0), I_v(0)),
\end{array}
     \right.
\end{equation}
where $(\mathbf{N}_h(0),\mathbf{ N}_v(0))$ denotes a initial condition and $\mathbf{N}_h$ and $\mathbf{N}_v$ are vectors formed by $S_h$, $I_h$, $R_h$ and $S_v$, $I_v$, respectively.

\begin{remark}
The novelty in this work involves the parameters $\xi_i$, $\theta_i$ and $q_i$ with $i=1,2$.  Their interpretation and values are given in Tables \ref{tabla_droga} and \ref{tabla_insecticidas} from Section \ref{seccion_experimentos_one}. A complete description and interpretation of  the others parameters involved in the model (\ref{modeloone}) can be found in \cite{romero2018optimal}.
\end{remark}

Now,  a set of biological interest for the solutions of the system (\ref{modeloone}) is defined as follows
\begin{equation}\label{Omega}
    \Omega=\left\{(\mathbf{N}_h, \mathbf{N}_v)\in \mathbb{R}_5^+: \; N_h\leq \frac{\Lambda_h}{\mu_h}, \; N_v\leq \frac{\Lambda_v}{\mu_v}\right\}.
\end{equation}
The following lemma establishes the invariance property for $\Omega$.

\begin{lemma}\label{teoinvarianzaone}
For $(\mathbf{N}_h(0), \mathbf{N}_v(0))$ a non--negative initial condition,  the system (\ref{modeloone}) has a unique solution and all state variables remain non--negative for all time $t\geq 0$.  Moreover, the set defined on (\ref{Omega})
is positively invariant with respect the system (\ref{modeloone}).
\end{lemma}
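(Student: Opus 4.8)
The plan is to split the statement into three parts and treat them in the order \emph{local well-posedness}, \emph{non-negativity of the coordinates}, \emph{boundedness}; the last two together yield the positive invariance of $\Omega$ and, as a by-product, global existence in time.

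\emph{Local existence and uniqueness.} First I would note that the right-hand side of (\ref{modeloone}) is a rational map in $(S_h,I_h,R_h,S_v,I_v)$ whose only singularity is the hyperplane $\{N_h=0\}$; hence on the open set $\mathcal{U}=\{N_h>0\}$ it is of class $C^1$, in particular locally Lipschitz. For any initial datum in $\mathcal{U}$ (the biologically meaningful situation; the case $N_h(0)=0$ makes the human block degenerate and is excluded), the Picard--Lindel\"of theorem provides a unique maximal solution on an interval $[0,T_{\max})$.

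\emph{Non-negativity.} Here I would invoke the standard tangency / quasi-positivity criterion: it suffices to check that on each coordinate hyperplane bounding $\mathbb{R}^{5}_{+}$ the corresponding velocity component is non-negative. Indeed,
\begin{align*}
\dot S_h\big|_{S_h=0}&=\Lambda_h+\omega R_h\ge 0, & \dot I_h\big|_{I_h=0}&=S_h\beta_h\epsilon\,I_v/N_h\ge 0,\\
\dot R_h\big|_{R_h=0}&=\big(\xi_1\theta_1(1-q_1)+\delta\big)I_h\ge 0, & \dot S_v\big|_{S_v=0}&=\Lambda_v\ge 0,\\
\dot I_v\big|_{I_v=0}&=S_v\beta_v\epsilon\,I_h/N_h\ge 0,
\end{align*}
so $\mathbb{R}^{5}_{+}$ is positively invariant and every coordinate stays non-negative on $[0,T_{\max})$. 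Moreover, summing the first three equations gives $\dot N_h=\Lambda_h-\mu_h N_h-\rho I_h\ge-(\mu_h+\rho)N_h$, whence $N_h(t)\ge N_h(0)e^{-(\mu_h+\rho)t}>0$; this a priori lower bound guarantees that the orbit never reaches the singular set and stays inside $\mathcal{U}$ for all $t\in[0,T_{\max})$.

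\emph{Boundedness and invariance of $\Omega$.} Adding the equations blockwise yields
\[
\dot N_h=\Lambda_h-\mu_h N_h-\rho I_h\le\Lambda_h-\mu_h N_h,\qquad \dot N_v=\Lambda_v-\big(\xi_2\theta_2(1-q_2)+\mu_v\big)N_v\le\Lambda_v-\mu_v N_v .
\]
By the comparison principle, if $N_h(0)\le\Lambda_h/\mu_h$ and $N_v(0)\le\Lambda_v/\mu_v$, then $N_h(t)\le\Lambda_h/\mu_h$ and $N_v(t)\le\Lambda_v/\mu_v$ for all $t\in[0,T_{\max})$, i.e.\ the solution remains in the compact set $\Omega$. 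Being confined to a compact subset of $\mathcal{U}$, the maximal solution cannot escape to infinity in finite time, so $T_{\max}=+\infty$ and $\Omega$ is positively invariant. The only genuinely delicate point is the singularity of the vector field at $N_h=0$: the whole argument rests on first producing the bound $N_h(t)\ge N_h(0)e^{-(\mu_h+\rho)t}$, which is why the non-negativity estimate and the well-posedness step must be interlocked rather than performed strictly one after the other.
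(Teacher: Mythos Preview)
Your argument is correct and follows the same core strategy as the paper: sum the equations blockwise to obtain $\dot N_h\le\Lambda_h-\mu_h N_h$ and $\dot N_v\le\Lambda_v-\mu_v N_v$, then integrate (the paper via an explicit integrating factor, you via the comparison principle) to confine the totals. Your version is in fact more thorough than the paper's, which simply invokes $C^1$-smoothness for well-posedness and does not explicitly verify non-negativity of the individual coordinates, address the singularity at $N_h=0$, or spell out global existence from boundedness---all points you handle carefully.
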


\begin{proof}
Since the vector field defined on the right side of (\ref{modeloone}) is continuously differentiable, the existence and uniqueness of the solutions is fullfied.  On the other hand,
\begin{eqnarray*}
% \nonumber to remove numbering (before each equation)
  \dot{N_{h}} &=& \Lambda_{h}-\mu_{h}N_{h}-\rho I_{h} \leq
  \Lambda_{h}-\mu_{h}N_{h}.
\end{eqnarray*}
\noindent Thus
\begin{equation*}
   \dot{N_{h}}+\mu_{h}N_{h}\leq \Lambda_{h}.
\end{equation*}
Multiplying both sides of the above inequality by the integrating factor $e^{\mu_h\tau}$ and integrating from $0$ to $t$, we obtain that
\begin{equation*}
    N_{h}(t)\leq N_{h}(0)e^{-\mu_h t}+\frac{\Lambda_{h}}{\mu_h}(1-e^{-\mu_h t}),
\end{equation*}
from where
\begin{equation*}
    \lim_{t\rightarrow \infty}N_{h(t)}\leq \frac{\Lambda_{h}}{\mu_h}.
\end{equation*}
\noindent
Similar calculation shows that $N_{v}(t)\rightarrow\frac{\Lambda_v}{\mu_v}$ as $t\rightarrow \infty$.  Thus, the region $\Omega$ is positively invariant. This complete the proof.
\end{proof}

%%%%%%%%%%%%%%%%%%%%%%%%%%%%%%%%%%%%%%%%%%%%%%%%%%%%%%%%%%%%%%%%%%%%%%%%
%%%%%%%%%%%%%   Qualitative analysis of the model %%%%%%%%%%%%%%%%%%%%%%%%%
%%%%%%%%%%%%%%%%%%%%%%%%%%%%%%%%%%%%%%%%%%%%%%%%%%%%%%%%%%%%%%%%%%%%%%%%
\subsection{Qualitative analysis }
\label{Seccion_analisis_cualitativo}
In this subsection, we first compute the local basic reproductive number associated to the system (\ref{modeloone}) .  Afterward, conditions for existence and stability of the equilibrium solutions are developed.

\subsubsection{Local basic reproductive number}
\label{section_reproductivo_one}
It is well known that a  disease--free equilibrium (DFE) is a steady state solution of a system  where there is no disease, in our case,  $S_{h}=S_{h}^*>0$, $S_{v}=S_{v}^*>0$, and all others variables $I_{h}$, $I_{v}$, $R_{h}$ are zero. It will be denoted  by $\mathbf{E}_{0_{one}}=\left(\bar N_{h},0,0,\bar N_{v},0 \right)$, where
\begin{equation}\label{barN}
    \bar N_{h}=\frac{\Lambda_{h}}{\mu_{h}}, \;  \bar N_{v}=\frac{\Lambda_{v}}{\xi_2\theta_2(1-q_2)+\mu_{v}} \; \text{and} \; \mathbf{E}_{0_{one}}\in \Omega.
\end{equation}
Since the basic reproductive number, commonly denoted by $\mathcal{R}_0$ (but in this case denoted by $\mathcal{R}_{0_{one}}$) is the average number of secondary infective generated by a single infective during the curse of the infection in a whole susceptible population, it is a threshold for determining  when an outbreak can occur, or when a disease remains endemic.
Using the next generation operator method \cite{van2002reproduction} on the system (\ref{modeloone}), the Jacobian matrices $\mathbf{F}_{one}$ and $\mathbf{V}_{one}$ evaluated in the DFE are given by
\begin{equation*}
\mathbf{F}_{one}=\left(
    \begin{array}{cccc}
      0 & \beta_{h}\epsilon  \\
     \beta_{v}\epsilon\frac{\bar N_{v}}{\bar N_{h}} & 0  \\
    \end{array}
  \right)
\end{equation*}
and
\begin{equation*}
\mathbf{V}_{one}=\left(
            \begin{array}{cccc}
              \xi_1\theta_1(1-q_1)+\delta+\rho+\mu_h& 0  \\
              0 & \xi_2\theta_2(1-q_2)+\mu_v  \\
            \end{array}
          \right).
\end{equation*}
Thus, the next generator operator of model (\ref{modeloone}) is given by

\begin{equation*}\label{FV-11}
  \mathbf{F}_{one}\mathbf{V}_{one}^{-1}= \left(
    \begin{array}{cccc}
      0 & \frac{\beta_{h}\epsilon}{\xi_2\theta_2(1-q_2)+\mu_v}  \\

      \frac{\beta_{v}\epsilon \bar N_{v}}{\bar N_{h}\left(\xi_1\theta_1(1-q_1)+\delta+\rho+\mu_h\right)} & 0  \\
    \end{array}
  \right).
\end{equation*}
\noindent
It follows that  the local  basic reproduction number of the system (\ref{modeloone}), denoted by $\mathcal{R}_{0_{one}}$ is

\begin{equation}\label{R0one}
        \mathcal{R}_{0_{one}} = \left(\frac{\beta_{h}\beta_{v}\epsilon^2}{\left(\xi_1\theta_1(1-q_1)+\delta+\rho+\mu_h  \right)\left(\xi_2\theta_2(1-q_2)+\mu_v    \right)}\frac{\bar N_{v}}{\bar N_{h}}\right)^{1/2}.
\end{equation}

%%%%%%%%%%%%%%%%%%%%%%%%%%%%%%%%%%%%%%%%%%%%%%%%%%%%%%%%%%%%%%%%%%%%%%%%
%%%%%%%%%%%%%   Endemic equilibria %%%%%%%%%%%%%%%%%%%%%%%%%
%%%%%%%%%%%%%%%%%%%%%%%%%%%%%%%%%%%%%%%%%%%%%%%%%%%%%%%%%%%%%%%%%%%%%%%%

\subsubsection{Existence of endemic equilibria}
\label{seccion_equilibrios_endemicos}

In this subsection, conditions for existence of endemic equilibria of the model (\ref{modeloone}) are studied. First of all, the existence of the DFE, denoted by $\mathbf{E}_{0_{one}}$, is guaranteed as consequence of the previous subsection. Now, in order to analyze the endemic equilibria of the model (\ref{modeloone}) we  consider the solutions to the  algebraic equation system
\begin{equation}\label{ecuaciondeequilibrio}
   \left\{  \begin{array}{rll}
    \Lambda_{h}+\omega R_{h}-S_{h}\beta_{h}\epsilon\frac{I_{v}}{N_{h}}-\mu_{h}S_{h}&=&0 \\ \\
   S_{h}\beta_{h}\epsilon\frac{I_{v}}{N_{h}}-\xi_{1}\theta_{1}(1-q_{1})I_{h}-(\delta+\rho+\mu_{h})I_{h} &=&0\\ \\
   \xi_{1}\theta_{1}(1-q_1)I_h+\delta I_{h}-(\omega+\mu_{h})R_{h} &=&0 \\ \\
  \Lambda_{v}-S_{v}\beta_{v}\epsilon\frac{I_{h}}{N_{h}}-\xi_{2}\theta_{2}(1-q_{2})S_v-\mu_{v}S_{v} &=&0  \\ \\
   S_{v}\beta_{v}\epsilon\frac{I_{h}}{N_{h}}-\xi_{2}\theta_{2}(1-q_{2})I_v-\mu_{v}I_{v}&=&0.
    \end{array} \right.
\end{equation}
Let us define
\begin{equation}\label{alpha}
    \alpha=\frac{\omega}{\omega+\mu_h}\frac{
\xi_1\theta_1(1-q_1)+\delta}{
\xi_1\theta_1(1-q_1)+\delta+\rho+\mu_h}\; \text{where} \; \alpha<1.
\end{equation}
Thus, after some algebraic manipulations of the system (\ref{ecuaciondeequilibrio}), we obtain the following expressions for $S_h$, $R_h$, $S_v$ and $I_v$ in terms of $I_h$
\begin{equation}\label{solucionesdeequilirio}
    \begin{array}{ll}
      & S_h=\frac{\Lambda_h}{\mu_h}-\frac{
\xi_1\theta_1(1-q_1)+\delta+\rho+\mu_h}{\mu_h}(1-\alpha)I_h \\ \\
      & R_h=\frac{\xi_1\theta_1(1-q_1)+\delta}{\omega+\mu_h}I_h \\ \\
      &S_v=\bar N_v\left(1-\frac{\beta_v\epsilon \bar N_v}{(
\xi_2\theta_2(1-q_2)+\mu_v)\bar N_h+\left[\beta_v\epsilon+\frac{\rho}{\mu_h}(
\xi_2\theta_2(1-q_2)+\mu_v)  \right]}I_h \right) \\ \\
      &I_v=\frac{\beta_v\epsilon \bar N_v}{(
\xi_2\theta_2(1-q_2)+\mu_v)\bar N_h+\left[\beta_v\epsilon+\frac{\rho}{\mu_h}(
\xi_2\theta_2(1-q_2)+\mu_v)  \right]}I_h,
    \end{array}
\end{equation}
and the  following cuadratic equation for $I_h$
\begin{equation}\label{cuadratica}
    aI_h^2+bI_h+c=0, \; \text{where}
\end{equation}

\begin{equation}\label{coeficientescuad}
\begin{array}{ll}
          a=&  \frac{\rho}{\mu_h}\left[\beta_v\epsilon+\frac{\rho}{\mu_h}\left(\xi_2\theta_2(1-q_2)+\mu_v  \right)\right] \\ \\
  b =& a\bar N_h+\frac{(1-\alpha)\bar N_h}{\mu_h}\left( \xi_2\theta_2(1-q_2)+\mu_v \right)\left(
\xi_1\theta_1(1-q_1)+\delta+\rho+\mu_h \right)\times \\ \\ & \left[\frac{\rho}{(1-\alpha)(
\xi_1\theta_1(1-q_1)+\delta+\rho+\mu_h)}+\mathcal{R}_{0_{one}}^2 \right] \\ \\
  c =& \left(\xi_2\theta_2(1-q_2)+\mu_v  \right)\bar N_h^2\left[1- \mathcal{R}_{0_{one}}^2 \right].
       \end{array}
\end{equation}
\noindent
From (\ref{coeficientescuad}), we have e that the coefficients $a$ and $b$ are non--negatives, while $c\geq 0$ if $\mathcal{R}_{0_{one}}^2\leq 1$, otherwise $c<0$.  Thus, the polynomial $P(I_h)=aI_h^2+bI_h+c$ has only one sign change and by the Descartes' rule of sign \cite{anderson1998descartes} it has one or zero positive roots.
\newline
This result is summarized in the following theorem.

\begin{theorem}\label{teoendemicas}
For the model (\ref{modeloone}) always exists the DFE contained in $\Omega$.  Additionally,
\noindent\begin{enumerate}
\item If $\mathcal{R}_{0_{one}}\leq 1$, there are not endemic equilibria.
\item If $\mathcal{R}_{0_{one}}>1$ there exist one endemic equilibrium .
 \end{enumerate}
\end{theorem}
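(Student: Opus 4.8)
The plan is to build on the reduction already performed above the statement. By the manipulations leading to (\ref{cuadratica})--(\ref{solucionesdeequilirio}), every equilibrium of (\ref{modeloone}) is determined by a root $I_h$ of $P(I_h)=aI_h^2+bI_h+c$, the coordinates $S_h$, $R_h$, $S_v$, $I_v$ being then recovered from (\ref{solucionesdeequilirio}); the value $I_h=0$ returns exactly the DFE, whose existence and membership in $\Omega$ is already guaranteed by (\ref{barN}). Hence a biologically admissible endemic equilibrium is precisely a root $I_h^{*}>0$ of $P$ whose associated state lies in $\Omega$, and the proof splits into two tasks: (i) counting the positive roots of $P$ according to the sign of $\mathcal{R}_{0_{one}}-1$, and (ii) checking that such a positive root, when it exists, produces a point of $\Omega$.

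For (i) I would use the sign information recorded just before the theorem: $a\ge 0$, $b>0$ (because $\alpha<1$ by (\ref{alpha}) and $\bar N_h>0$), and $c\ge 0$ iff $\mathcal{R}_{0_{one}}\le 1$. If $\mathcal{R}_{0_{one}}\le 1$ the coefficient string of $P$ has no sign change, so by Descartes' rule $P$ has no positive root; the borderline case $c=0$ only contributes the roots $I_h=0$ and (when $a>0$) $I_h=-b/a<0$, neither endemic, which proves Part 1. If $\mathcal{R}_{0_{one}}>1$ then $c<0<b$, so the coefficient string has exactly one sign change and $P$ has exactly one positive root $I_h^{*}$; existence also follows directly from $P(0)=c<0$ together with $P(I_h)\to+\infty$ as $I_h\to\infty$ (or, in the degenerate case $a=0$, simply $I_h^{*}=-c/b>0$), and uniqueness from the single sign change.

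The main obstacle is (ii): showing that substituting $I_h^{*}$ into (\ref{solucionesdeequilirio}) yields a state in $\Omega$. The quantities $R_h^{*}$ and $I_v^{*}$ are positive whenever $I_h^{*}>0$, and the bounds $N_h^{*}\le\Lambda_h/\mu_h$, $N_v^{*}\le\Lambda_v/\mu_v$ follow from the equilibrium identities (summing the human equations of (\ref{ecuaciondeequilibrio}) gives $\mu_h N_h^{*}=\Lambda_h-\rho I_h^{*}$, and similarly for $N_v^{*}$), so the delicate points are $S_h^{*}\ge 0$ and $S_v^{*}\ge 0$. By (\ref{solucionesdeequilirio}) each of these is equivalent to an explicit upper bound on $I_h^{*}$ -- for $S_h^{*}$ the bound $I_h^{*}\le \mu_h\bar N_h/[(1-\alpha)(\xi_1\theta_1(1-q_1)+\delta+\rho+\mu_h)]$, and for $S_v^{*}$ an analogous threshold. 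I would verify both by evaluating $P$ at the relevant threshold value and showing it is strictly positive there; since $P(0)=c<0$ and $P$ is increasing on $(0,\infty)$, this forces $I_h^{*}$ to lie strictly below the threshold. It is precisely here that the elaborate form in which $b$ is written in (\ref{coeficientescuad}) pays off: on substituting the $S_h^{*}=0$ threshold, the $\mathcal{R}_{0_{one}}^2$-term of $b\,I_h$ cancels exactly the $-\mathcal{R}_{0_{one}}^2$ contribution of $c$, leaving the remaining (manifestly positive) terms, and an entirely parallel computation handles $S_v^{*}$. Collecting (i) and (ii) gives the stated dichotomy.
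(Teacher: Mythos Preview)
Your argument for part (i) is exactly the paper's: the text immediately preceding the theorem records that $a,b\ge 0$ and that $c\ge 0$ iff $\mathcal{R}_{0_{one}}\le 1$, and then invokes Descartes' rule of signs on $P(I_h)=aI_h^2+bI_h+c$ to conclude that there is no positive root when $\mathcal{R}_{0_{one}}\le 1$ and exactly one when $\mathcal{R}_{0_{one}}>1$. The theorem is stated as a summary of that computation, with no further proof.

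Where you differ from the paper is part (ii): the paper does \emph{not} verify that the unique positive root $I_h^{*}$ actually produces a state in $\Omega$ (in particular it never checks $S_h^{*}\ge 0$ or $S_v^{*}\ge 0$). Your proposal fills this gap, and your mechanism is sound: with $a\ge 0$ and $b>0$ the vertex of $P$ lies at $-b/(2a)\le 0$, so $P$ is increasing on $[0,\infty)$; hence $P(0)=c<0$ together with $P(\text{threshold})>0$ forces $I_h^{*}$ below the threshold. The cancellation you describe does occur: writing $A=\xi_1\theta_1(1-q_1)+\delta+\rho+\mu_h$ and $B=\xi_2\theta_2(1-q_2)+\mu_v$, the $S_h^{*}=0$ threshold is $I_h^{\rm thr}=\mu_h\bar N_h/[(1-\alpha)A]$, and the $\mathcal{R}_{0_{one}}^2$-piece of $b\,I_h^{\rm thr}$ equals $B\bar N_h^2\mathcal{R}_{0_{one}}^2$, exactly offsetting the negative part of $c$, leaving $P(I_h^{\rm thr})=a(I_h^{\rm thr})^2+[a\bar N_h+\rho B\bar N_h/\mu_h]\,I_h^{\rm thr}+B\bar N_h^2>0$. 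So your proof is correct and strictly more complete than the paper's; the added step (ii) buys genuine rigor that the paper omits.
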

%%%%%%%%%%%%%%%%%%%%%%%%%%%%%%%%%%%%%%%%%%%%%%%%%%%%%%%%%%%%%%%%%%%%%%%%
%%%%%%%%%%%%%  Stability analysis%%%%%%%%%%%%%%%%%%%%%%%%%
%%%%%%%%%%%%%%%%%%%%%%%%%%%%%%%%%%%%%%%%%%%%%%%%%%%%%%%%%%%%%%%%%%%%%%%%
\subsubsection{Stability analysis}
\label{seccion_analisis_estabilidad}

In this subsection, we proof the stability of the equilibrium solutions of the system (\ref{modeloone}) given on Theorem \ref{teoendemicas}.  First, using the  linearization of the system (\ref{modeloone}) at the DFE, we proof it local stability, which is determined by the sign of the real part of the eigenvalues of the Jacobian matrix denoted by $J(\mathbf{E}_{0_{one}})$, which is given by
{\footnotesize
\begin{equation}\label{jacobiangeneral}
  J(\mathbf{E}_{0_{one}})=  \left(
        \begin{array}{ccccc}
        -\mu_h & 0 & \omega & 0 & -\beta_h\epsilon \\ \\
          0      & -\left[\xi_1\theta_1(1-q_1)+\delta+\rho+\mu_h  \right] & 0 & 0 & \beta_h\epsilon \\ \\
          0 & \xi_1\theta_1(1-q_1)+\delta & -(\omega+\mu_h) & 0 & 0 \\ \\
          0 & -\beta_v\epsilon\frac{\bar N_v}{\bar N_h} & 0 & -\left[\xi_2\theta_2(1-q_2)+\mu_v \right] & 0 \\ \\
          0 & \beta_v\epsilon\frac{\bar N_v}{\bar N_h} & 0 & 0 & -\left[\xi_2\theta_2(1-q_2)+\mu_v \right] \\ \\
        \end{array}
      \right),
\end{equation}}
\noindent

Three eigenvalues of $J(\mathbf{E}_{0_{one}})$ are $\eta_1=-\mu_h$, $\eta_2=-(\omega+\mu_h)$ and $\eta_3=-\left[\xi_2\theta_2(1-q_2)+\mu_v \right]$, while the others eigenvalues are given by the roots of the following quadratic equation
\begin{equation}\label{cuadratica2}
    a_0\eta^2+a_1\eta+a_2=0, \; \text{where}
\end{equation}
\begin{eqnarray*}
% \nonumber to remove numbering (before each equation)
  a_0 &=& 1 \\
  a_1 &=& \xi_2\theta_2(1-q_2)+\mu_v+\xi_1\theta_1(1-q_1)+\delta+\rho+\mu_h\\
  a_2 &=& \left(\xi_2\theta_2(1-q_2)+\mu_v \right)\left(\xi_1\theta_1(1-q_1)+\delta+\rho+\mu_h \right)(1-\mathcal{R}_{0_{one}}^2).
\end{eqnarray*}
From above, the coefficients $a_0$ and $a_1$ are positives, while the sign of the coefficient $a_2$ depends of $\mathcal{R}_{0_{one}}$.  From the Routh--Hurwitz criterion \cite{dejesus1987routh} we can guarantee that the quadratic equation (\ref{cuadratica2}) has roots with  negative real part if and only if its coefficients are positives and the following determinants called minors of Hurwitz are positives
\begin{eqnarray*}
% \nonumber to remove numbering (before each equation)
  \Delta_1 &=& a_1 \\
  \Delta_2 &=& \left|
                 \begin{array}{cc}
                   a_1 & 1 \\
                   0 & a_2 \\
                 \end{array}
               \right|=a_1a_2.
\end{eqnarray*}
We verify that $\Delta_1>0$ and $\Delta_2>0$ if and only if  $\mathcal{R}_{0_{one}}\leq 1$.  In consequence, when $\mathcal{R}_{0_{one}}\leq 1$ the DFE is a locally asymptotically stable (LAS) equilibrium point of the system (\ref{modeloone}).
\par

Now, we are going to proof the stability of the endemic equilibrium of the system (\ref{modeloone}).  For this end,  we use results based on the center manifold theory described in \cite{castillo2004dynamical} to show that the system (\ref{modeloone})  exhibits a forward bifurcation when $\mathcal{R}_{0_{one}}=1$ or equivalently when
\begin{equation}\label{eq}
    \beta_h\epsilon\doteq\beta^*=\frac{\bar N_h \left( \xi_2\theta_2(1-q_2)+\mu_v \right)\left(\xi_1\theta_1(1-q_1)+\delta+\rho+\mu_h  \right)}{\beta_v\epsilon\bar N_v}.
\end{equation}
\noindent
The eigenvalues of the Jacobian matrix given on  (\ref{jacobiangeneral}) evaluated in $(\mathbf{E}_{0_{one}},\beta^*)$
are $0$ and  $-\mu_h$, \, $-(\omega+\mu_h)$, \, $-\left[\xi_2\theta_2(1-q_2)+\mu_v\right]$ \, and $-\left[\xi_2\theta_2(1-q_2)+\mu_v+\xi_1\theta_1(1-q_1)+\delta+\rho+\mu_h  \right]$,\, where the last four have negative real part.
In consequence, in $\beta^*$,  the DFE is a non--hyperbolic equilibrium.
\noindent
Let  $\mathbf{W}=(w_1, w_2, w_3, w_4, w_5)^T$ a right eigenvector associated  to the zero eigenvalue, which satisfies $J(\mathbf{E}_{0_{one}},\beta^*)\mathbf{W}=0\mathbf{W}=\textbf{0}$ or equivalently
\begin{equation*}
    \left\{\begin{array}{rl}
               -\mu_hw_1+\omega w_3-\beta^*w_5&=0 \\\\
         -\left(\xi_1\theta_1(1-q_1)+\delta+\rho+\mu_h\right)w_2+\beta^*w_5&=0 \\ \\
         -(\omega+\mu_h)w_3+\frac{\left(
\xi_2\theta_2(1-q_2)+\mu_v  \right)\left(\xi_1\theta_1(1-q_1)+\delta  \right)\bar N_h}{\beta_v\epsilon\bar N_v}w_5&=0 \\\\
 -\left( \xi_2\theta_2(1-q_2)+\mu_v \right)w_4-\left( \xi_2\theta_2(1-q_2)+\mu_v \right)w_5&=0. \\
           \end{array}
    \right.
\end{equation*}
The vectorial form for the solutions of above linear system is given by

\begin{equation}\label{W}
    \mathbf{W}=\left[ \frac{\beta_h\epsilon}{\mathcal{R}_{0_{one}}^2}(1-\alpha), \frac{\left(\xi_2\theta_2(1-q_2)+\mu_v  \right)\bar N_h}{\beta_v\epsilon\bar N_v}, \frac{\left( \xi_2\theta_2(1-q_2)+\mu_v \right)\left( \xi_1\theta_1(1-q_1)+\delta \right)\bar N_h}{\beta_v\epsilon(\omega+\mu_h)\bar N_v},-1,1\right]^Tw_5,
\end{equation}
\noindent
where the parameter $\alpha$ is defined on (\ref{alpha}). Similarly, a left eigenvector $\mathbf{V}=(v_1, v_2, v_3, v_4, v_5)$  of the matrix $J(\mathbf{E}_{0_{one}},\beta^*)$ associated to the zero eigenvalue satisfies that $\mathbf{V}J(\mathbf{E}_{0_{one}},\beta^*)=0\mathbf{V}=\textbf{0}$ or equivalently $v_1=v_3=v_4=0$ and
\begin{eqnarray*}
% \nonumber to remove numbering (before each equation)
  \left(\xi_1\theta_1(1-q_1)+\delta+\rho+\mu_h  \right)v_2+\beta_v\epsilon\frac{\bar N_v}{\bar N_h}v_5 &=& 0,
\end{eqnarray*}
from where
\begin{equation}\label{V}
    \mathbf{V}=\left[0,\frac{\beta_v\epsilon \bar N_v}{\bar N_h \left(\xi_1\theta_1(1-q_1)+\delta+\rho+\mu_h  \right)}, 0,0,1    \right]v_5.
\end{equation}
\noindent
The values for $w_5$ and $v_5$ such that $\mathbf{W} \cdot\mathbf{V} =1$, are
\begin{equation}\label{w2v5}
    w_5=1 \; \text{and} \; v_5= \frac{\xi_1\theta_1(1-q_1)+\delta+\rho+\mu_h}{\xi_1\theta_1(1-q_1)+\delta+\rho+\mu_h+\xi_2\theta_2(1-q_2)+\mu_v}.
\end{equation}
\noindent
Thus, the coefficients $\tilde a$ and $\tilde b$ given on Theorem 4.1 from \cite{castillo2004dynamical}
\begin{equation}\label{tildesgenerales}
   \begin{array}{ll}
     \tilde a =& \frac{1}{2}\sum_{k,i,j=1}^5 v_kw_iw_j \frac{\partial^2f_k}{\partial x_i\partial x_j}(\mathbf{E}_{0_{one}}, \beta^*) \\ \\
     \tilde b =& \sum_{k,i=1}^5 v_kw_i \frac{\partial^2f_k}{\partial x_i\partial \beta^*}(\mathbf{E}_{0_{one}}, \beta^*),
   \end{array}
\end{equation}
\noindent
can be explicitly computed as follows. Let us denote as $f_i$, $i=1,...,5$ to the scalar functions of the right hand of the system (\ref{modeloone}),  and  $x_1=S_h$, $x_2=I_h$, $x_3=R_h$, $x_4=S_v$, $x_5=I_v$. The coefficients $w_p$ and $v_p$ with $p=1,2,...5$ of (\ref{tildesgenerales}), represent to the components of the eigenvectors $\mathbf{W}$ and $\mathbf{V}$ defined on (\ref{W}) and (\ref{V}), respectively.  After some calculations we have that the second order partial derivatives evaluated in $(\mathbf{E}_{0_{one}}, \beta^*)$ are given by
\begin{equation*}
    \begin{array}{lll}
      \frac{\partial^2f_1}{\partial x_5\partial x_2}= \frac{\partial^2f_1}{\partial x_5\partial x_3} = \frac{\beta_h\epsilon}{\bar N_h} & \frac{\partial^2f_5}{\partial x_2\partial x_1}= \frac{\partial^2f_5}{\partial x_2\partial x_3} = \frac{-\beta_v\epsilon\bar N_v}{\bar N_h^2} & \frac{\partial^2f_2}{\partial x_5\partial x_2}= \frac{\partial^2f_2}{\partial x_5\partial x_3}= -\frac{\beta_h\epsilon}{\bar N_h} \\ \\
      \frac{\partial^2f_4}{\partial x_2\partial x_1} = \frac{\beta_v\epsilon\bar N_v}{\bar N_h^2} & \frac{\partial^2f_4}{\partial x_2\partial x_4} = -\frac{\beta_v\epsilon}{\bar N_h} & \frac{\partial^2f_5}{\partial x_2\partial x_4} = \frac{\beta_v\epsilon}{\bar N_h} \\ \\
      \frac{\partial^2f_4}{\partial x_2^2} =\frac{2\beta_v\epsilon\bar N_v}{\bar N_h^2} & \frac{\partial^2f_5}{\partial x_2^2} =-\frac{2\beta_v\epsilon\bar N_v}{\bar N_h^2}. &
    \end{array}
\end{equation*}
\noindent
In the above expressions  we did not consider to the zero  and   cross partial derivatives. Additionally, the second order partial derivatives with respect to the  bifurcation parameter $\beta^*$ evaluated in  $\mathbf{E}_{0_{one}}$ are all zero except
\begin{equation*}
% \nonumber to remove numbering (before each equation)
  \frac{\partial^2f_1}{\partial x_5\partial \beta^*}=-1 \quad \text{and} \quad
   \frac{\partial^2f_2}{\partial x_5\partial \beta^*} = 1.
\end{equation*}
Thus, the coefficients $\tilde a$ and $\tilde b$ given on (\ref{tildesgenerales})
can be expressed as

\begin{equation}\label{tildes}
   \begin{array}{ll}
     \tilde a &= 2\frac{\partial^2f_2}{\partial x_5\partial x_2}v_2(w_2+w_3)+2\frac{\partial^2f_5}{\partial x_1\partial x_2}w_2(w_1+w_3)-\frac{\partial^2f_5}{\partial x_2\partial x_4}w_2+\frac{1}{2}\frac{\partial^2f_5}{\partial x_2^2}w_2^2 \\ \\
  &= -\frac{2\beta^*}{\bar N_h}v_2(w_2+w_3)-\frac{2\beta_v\epsilon\bar N_v} {\bar N_h^2}w_2(w_1+w_3)-\frac{\beta_v\epsilon}{\bar N_h}w_2-\frac{\beta_v\epsilon\bar N_v}{\bar N_h^2}w_2^2 \\ \\
     \tilde b &= v_2w_5\frac{\partial^2f_2}{\partial x_5\partial \beta^*}=\frac{\beta_v\epsilon\bar N_v}{\bar N_h\left(\xi_1\theta_1(1-q_1)+\delta+\rho+\mu_h+\xi_2\theta_2(1-q_2)+\mu_v  \right)}.
   \end{array}
\end{equation}
\noindent
From (\ref{tildes}) we have that  $\tilde b>0$ while the sign of $\tilde a$ depends of the sign of $w_2$, $v_2(w_2+w_3)$ and $w_2(w_1+w_3)$.  From  (\ref{W}) and (\ref{V}) we verify that $w_2\geq 0$, $v_2(w_2+w_3)\geq 0$,  and
\begin{equation*}
    w_2(w_1+w_3)=w_2\left[\frac{\beta_h\epsilon}{\mathcal{R}_{0_{one}}^2}(1-\alpha)+\frac{\left( \xi_2\theta_2(1-q_2)+\mu_v \right)\left( \xi_1\theta_1(1-q_1)+\delta \right)\bar N_h}{\beta_v\epsilon(\omega+\mu_h)\bar N_v}\right]\geq 0.
\end{equation*}
\noindent
Thus, by Theorem 4.1 from \cite{castillo2004dynamical}, the endemic equilibrium is LAS when $\mathcal{R}_{0_{one}}>1$, which suggest the global stability of the DFE. The previous results are summarized in the following theorem.

\begin{theorem}\label{teorema_estabilidadlocal}
If $\mathcal{R}_{0_{one}}\leq 1$ the DFE is LAS in $\Omega$, and the endemic equilibrium is unestable. If $\mathcal{R}_{0_{one}}>1$ the DFE becomes an unstable hyperbolic equilibrium point, and the endemic equilibrium is LAS in $\Omega$.
\end{theorem}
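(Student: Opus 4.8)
The plan is to split the argument according to the value of $\mathcal{R}_{0_{one}}$ and to handle the two equilibria separately, leaning on the computations already assembled before the statement. For the disease--free equilibrium I would start from the linearization of (\ref{modeloone}) at $\mathbf{E}_{0_{one}}$, i.e. the Jacobian $J(\mathbf{E}_{0_{one}})$ displayed in (\ref{jacobiangeneral}). Its block structure immediately yields three negative eigenvalues ($-\mu_h$, $-(\omega+\mu_h)$, $-[\xi_2\theta_2(1-q_2)+\mu_v]$), so stability is governed by the two roots of the quadratic (\ref{cuadratica2}). Here I would invoke the Routh--Hurwitz criterion \cite{dejesus1987routh}: since $a_0=1>0$ and $a_1>0$ hold unconditionally, those two eigenvalues have negative real part exactly when $a_2>0$, that is when $\mathcal{R}_{0_{one}}<1$, and a simple zero eigenvalue appears when $\mathcal{R}_{0_{one}}=1$. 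When $\mathcal{R}_{0_{one}}>1$ one has $a_2<0$, so the product of the two roots of (\ref{cuadratica2}) is negative, one of them is positive, $\mathbf{E}_{0_{one}}$ is a saddle, and since no eigenvalue vanishes it is a \emph{hyperbolic} unstable point.

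To cover the threshold case $\mathcal{R}_{0_{one}}=1$ and, above all, the stability of the endemic branch, I would use the center--manifold reduction of Castillo-Chavez and Song (Theorem 4.1 in \cite{castillo2004dynamical}), with $\beta_h\epsilon$ as bifurcation parameter and $\beta^*$ as in (\ref{eq}) — precisely the value making $\mathcal{R}_{0_{one}}=1$. At $(\mathbf{E}_{0_{one}},\beta^*)$ the Jacobian has a simple zero eigenvalue while the remaining four have negative real part, so the hypotheses of that theorem are met. The steps are: (i) compute a right null vector $\mathbf{W}$ and a left null vector $\mathbf{V}$ of $J(\mathbf{E}_{0_{one}},\beta^*)$, normalized by $\mathbf{V}\cdot\mathbf{W}=1$, as in (\ref{W})--(\ref{V}), noting that all relevant components are nonnegative; (ii) evaluate the nonzero second--order partial derivatives of the vector field at $(\mathbf{E}_{0_{one}},\beta^*)$ and assemble the bifurcation coefficients $\tilde a$ and $\tilde b$ of (\ref{tildesgenerales}); (iii) read off their signs. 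For $\tilde b$ only the term $v_2 w_5\,\partial^2 f_2/\partial x_5\partial\beta^*$ survives, and it is strictly positive. For $\tilde a$, as in (\ref{tildes}), the surviving contributions are each a negative multiple of $w_2$, of $v_2(w_2+w_3)$ or of $w_2(w_1+w_3)$; since all three quantities are $\ge 0$ and $w_2>0$, one gets $\tilde a<0$. With $\tilde a<0<\tilde b$, Theorem 4.1 of \cite{castillo2004dynamical} gives a forward (supercritical) transcritical bifurcation at $\mathcal{R}_{0_{one}}=1$: the DFE loses stability as $\mathcal{R}_{0_{one}}$ crosses $1$, it remains (non--hyperbolically) stable at $\mathcal{R}_{0_{one}}=1$ on the feasible side, and the unique endemic equilibrium guaranteed by Theorem \ref{teoendemicas} for $\mathcal{R}_{0_{one}}>1$ is LAS for $\mathcal{R}_{0_{one}}$ slightly above $1$. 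Combining this with the Routh--Hurwitz computation yields LAS of $\mathbf{E}_{0_{one}}$ on $\Omega$ for all $\mathcal{R}_{0_{one}}\le 1$, and the forward (as opposed to backward) character of the bifurcation suggests — but does not prove — the global stability of the DFE in that range.

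The delicate part is step (ii)--(iii): tracking which of the second--order partials $\partial^2 f_k/\partial x_i\partial x_j$ are nonzero at the DFE, pairing them with the correct products $v_k w_i w_j$, and then recognizing that every surviving term of $\tilde a$ carries the same negative sign (this is what forces the bifurcation to be forward rather than backward, despite the recovered class feeding back into $S_h$). A secondary point is the interpretation of the clause ``the endemic equilibrium is unstable when $\mathcal{R}_{0_{one}}\le 1$'': since Theorem \ref{teoendemicas} says no \emph{feasible} endemic equilibrium exists there, this has to be understood as a statement about the continuation of the endemic branch through the bifurcation, which the sign $\tilde a<0$ already encodes, so it needs no separate argument. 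Finally, I would flag explicitly that the theorem asserts only \emph{local} asymptotic stability; upgrading the $\mathcal{R}_{0_{one}}\le1$ case to global stability of the DFE would require an additional ingredient, e.g. a Lyapunov function on $\Omega$ or a monotone/comparison argument, and is not delivered by the center--manifold analysis alone.
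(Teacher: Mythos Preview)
Your proposal is correct and follows essentially the same route as the paper: Routh--Hurwitz on the quadratic (\ref{cuadratica2}) for the DFE, then the Castillo--Chavez--Song center--manifold reduction with bifurcation parameter $\beta_h\epsilon=\beta^*$, the same right/left eigenvectors (\ref{W})--(\ref{V}), and the same sign analysis of $\tilde a,\tilde b$ in (\ref{tildes}). Your additional remarks --- that the center--manifold argument literally yields LAS of the endemic branch only for $\mathcal{R}_{0_{one}}$ near $1$, and that the ``endemic equilibrium is unstable for $\mathcal{R}_{0_{one}}\le 1$'' clause has to be read as a statement about the nonfeasible continuation of the branch --- are more careful than the paper, which simply summarizes the computation without flagging these caveats.
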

\noindent
Figure \ref{bifurcation} shows the bifurcation diagram.

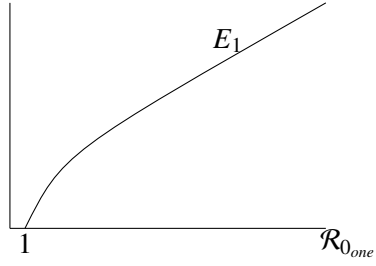
\begin{figure}
\begin{center}
\begin{tikzpicture}
\draw (-1.2,0) -- (3,0);
\draw (-1.2,0) -- (-1.2,3);
\draw (3.3,-0.2) node{$\mathcal{R}_{0_{one}}$};
\draw (-1,-0.2) node{1};
\draw (1.7,2.5) node{$E_1$};
\draw (-1,0) .. controls  (-0.5,1) .. (3,3);
\end{tikzpicture}
\caption{ {\scriptsize A forward bifurcation occurs  when $\mathcal{R}_{0_{one}}=1$.}} \label{bifurcation}
\end{center}
\end{figure}

\begin{theorem}\label{teoglobalDFE}
If $\mathcal{R}_{0_{one}}^2\leq 1$, then the DFE is globally asymptotically (GAS) stable in $\Omega$.
\end{theorem}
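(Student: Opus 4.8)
The plan is to prove global attractivity of $\mathbf E_{0_{one}}$ on $\Omega$ by a comparison argument on the infected subsystem, and to combine it with the local stability of Theorem~\ref{teorema_estabilidadlocal} to obtain global asymptotic stability (note that $\mathcal R_{0_{one}}^2\le 1$ is equivalent to $\mathcal R_{0_{one}}\le 1$). First I would sharpen Lemma~\ref{teoinvarianzaone}: from $\dot S_v\le\Lambda_v-(\xi_2\theta_2(1-q_2)+\mu_v)S_v$ and the identity $\dot N_v=\Lambda_v-(\xi_2\theta_2(1-q_2)+\mu_v)N_v$ one gets $N_v(t)\to\bar N_v$ and $\limsup_{t\to\infty}S_v(t)\le\bar N_v$, so the $\omega$-limit set of every solution of (\ref{modeloone}) lies in the compact forward-invariant set $\Omega^\star=\{(\mathbf N_h,\mathbf N_v)\in\Omega:\ S_v\le\bar N_v\}$ and it suffices to argue on $\Omega^\star$. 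There, with $X=(S_h,R_h,S_v)$ and $Z=(I_h,I_v)^T$, I would write the infected equations as $\dot Z=AZ-\widehat G(X,Z)$, where $A=\mathbf F_{one}-\mathbf V_{one}$ is the Metzler matrix with diagonal entries $-(\xi_1\theta_1(1-q_1)+\delta+\rho+\mu_h)$, $-(\xi_2\theta_2(1-q_2)+\mu_v)$ and off-diagonal entries $\beta_h\epsilon$, $\beta_v\epsilon\bar N_v/\bar N_h$; a short computation gives $\widehat G_1=\beta_h\epsilon I_v(1-S_h/N_h)$ and $\widehat G_2=\beta_v\epsilon I_h(\bar N_v/\bar N_h-S_v/N_h)$.

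Since $A$ is exactly the next-generation linearization, its stability modulus is negative precisely when the spectral radius of $\mathbf F_{one}\mathbf V_{one}^{-1}$, namely $\mathcal R_{0_{one}}^2$, is less than $1$, while the non-infected subsystem $\dot X=(\Lambda_h+\omega R_h-\mu_h S_h,\,-(\omega+\mu_h)R_h,\,\Lambda_v-(\xi_2\theta_2(1-q_2)+\mu_v)S_v)$ is linear, triangular, with globally attracting equilibrium $(\bar N_h,0,\bar N_v)$. Granting $\widehat G\ge 0$ on $\Omega^\star$, the inequality $\dot Z\le AZ$ forces $Z(t)\to 0$ by a standard comparison lemma, and substituting this into the $X$-equations gives $(S_h,R_h,S_v)\to(\bar N_h,0,\bar N_v)$, hence global attractivity of the DFE. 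The borderline $\mathcal R_{0_{one}}=1$, where $A$ has a zero eigenvalue, I would treat separately with LaSalle's invariance principle for the linear functional $V=\mathbf V\cdot Z$, $\mathbf V$ being the positive left eigenvector of $A$ whose components are read off from (\ref{V}): using $S_h\le N_h$ and $S_v\le\bar N_v$ one checks $\dot V\le 0$ on the $\omega$-limit set and identifies the largest invariant subset of $\{\dot V=0\}$ there with $\{\mathbf E_{0_{one}}\}$.

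The hard part is the sign of $\widehat G_2$: it is non-negative only where $S_v/N_h\le\bar N_v/\bar N_h$, and on $\Omega^\star$ one has $S_v\le\bar N_v$ but merely $N_h\le\bar N_h$, so when disease-induced mortality keeps $N_h$ strictly below $\bar N_h$ this condition — and likewise the inequality $\dot V\le 0$ in the borderline case — is not automatic. To repair it I would first prove $N_h(t)\to\bar N_h$, equivalently $\rho I_h(t)\to 0$, from $\dot N_h=\Lambda_h-\mu_h N_h-\rho I_h$ and the ultimate boundedness of $I_h$; on the $\omega$-limit set this forces $N_h\equiv\bar N_h$, after which the comparison and LaSalle estimates apply verbatim. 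Making this bootstrap work for the whole range $\mathcal R_{0_{one}}^2\le 1$, rather than only on a smaller sub-interval where a cruder lower bound on $N_h$ already suffices, is the only genuinely delicate step; everything else is bookkeeping with $\mathbf F_{one}$, $\mathbf V_{one}$ and the eigenvectors (\ref{W})--(\ref{V}).
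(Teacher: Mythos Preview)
Your plan coincides with the paper's argument: the linear functional $V=\mathbf V\cdot Z$ with $\mathbf V$ the left Perron vector from (\ref{V}) is, up to normalization, exactly the paper's Lyapunov function $V^*=I_h+\dfrac{\beta_h\epsilon}{\xi_2\theta_2(1-q_2)+\mu_v}\,I_v$, and your comparison inequality $\dot Z\le AZ$ is the differential form of the paper's computation of $\dot V^*$, followed by LaSalle. On the point you single out as ``the hard part'' the paper does \emph{not} do anything more than you: after recording $S_v\le\bar N_v$ and $N_h>\Lambda_h/(\rho+\mu_h)$ in (\ref{eq01}) it simply writes $\dot V\le(\xi_1\theta_1(1-q_1)+\delta+\rho+\mu_h)\bigl[\mathcal R_{0_{one}}^2-1\bigr]I_h$, which silently uses $S_v/N_h\le\bar N_v/\bar N_h$. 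Since on $\Omega$ one only has $N_h\le\bar N_h$, and the lower bound $\Lambda_h/(\rho+\mu_h)$ lies strictly below $\bar N_h$ whenever $\rho>0$, that step is not justified in the paper either; you have correctly located a gap that the paper glosses over.

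Your proposed repair, however, is circular as stated. From $\dot N_h=\Lambda_h-\mu_h N_h-\rho I_h$ and mere ultimate boundedness of $I_h$ you cannot conclude $\rho I_h(t)\to 0$; boundedness of $I_h$ only keeps $N_h$ in a compact sub-interval of $(0,\bar N_h]$. You need $I_h\to 0$ to obtain $N_h\to\bar N_h$, but you need $N_h\ge\bar N_h$ on the $\omega$-limit set to make $\widehat G_2\ge 0$ (equivalently $\dot V\le 0$) and thereby force $I_h\to 0$. So neither your bootstrap nor the paper's appeal to (\ref{eq01}) closes the loop for the full range $\mathcal R_{0_{one}}^2\le 1$ when $\rho>0$; something extra---e.g.\ a Lyapunov function that also penalises the deficit $\bar N_h-N_h$, or the hypothesis $\rho=0$---would be required.
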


\begin{proof}
From Theorem \ref{teorema_estabilidadlocal}, when $\mathcal{R}_{0_{one}}^2\leq 1$,  the $DFE$ is LAS in $\Omega$.  Let $(\mathbf{N}_h(t), \mathbf{N}_v(t))$ a positive solution of the system (\ref{modeloone}), then by Lemma \ref{teoinvarianzaone} it satisfies that
\begin{equation}\label{eq01}
    S_v(t)\leq \bar N_v(t) \; \text{and} \; N_h(t)>\frac{\Lambda_h}{\rho+\mu_h}.
\end{equation}
\noindent
We will proof the existence of a Lyapunov function for the traslated system $\dot{ \mathbf{y}}=f(\mathbf{y}+\mathbf{E}_{0_{one}})-f(\mathbf{E}_{0_{one}})=F(\mathbf{y})$, where $f$ is the vectorial field defined from right hand of the system (\ref{modeloone}) and $\mathbf{y}=\mathbf{0}$ is a trivial solution of the system $\dot {\mathbf{y}}=F(\mathbf{y})$.  Let us consider the following function
\begin{equation*}
    V^*(\mathbf{N}_h, \mathbf{N}_v)=\frac{\beta_h\epsilon}{\xi_2\theta_2(1-q_2)+\mu_v}I_v+I_h,
\end{equation*}
and let
\begin{equation}\label{lyapunovfunction}
    V(\tilde {\mathbf {N}}_h, \tilde{ \mathbf {N}}_v)=V^*(S_h-\bar N_h, I_h, R_h, S_v-\bar N_v, I_v).
\end{equation}
The function  $V$ defined on (\ref{lyapunovfunction}) satisfies the following properties
\begin{itemize}
  \item [(P1)] $V(\bar N_h, 0, 0, \bar N_v, 0)=V(\mathbf{E}_{0_{one}})=V^*(\mathbf{0})=0$.
  \item [(P2)] $V>0$ $\forall(\tilde{\mathbf{N}_h},\tilde{\mathbf{N}_v} )\neq \mathbf{E}_{0_{one}}$ in $\Omega$ (V is positive definite).
  \item [(P3)] The orbital derivative of $V$  along the trajectories of (\ref{modeloone}) is negative definite.  In fact,
      \begin{eqnarray*}
      % \nonumber to remove numbering (before each equation)
        \dot V &=& \frac{\partial V^*}{\partial (S_h-\bar N_h)}f_1+\frac{\partial V^*}{\partial I_h}f_2+\frac{\partial V^*}{\partial R_h}f_3+\frac{\partial V^*}{\partial (S_v-\bar N_v)}f_4+\frac{\partial V^*}{\partial I_v} \\
         &=&S_{h}\beta_{h}\epsilon\frac{I_{v}}{N_{h}}-\xi_{1}\theta_{1}(1-q_{1})I_{h}-(\delta+\rho+\mu_{h})I_{h}  +\\
         &&\frac{\beta_h\epsilon}{\xi_2\theta_2(1-q_2)+\mu_v}\left(S_{v}\beta_{v}\epsilon\frac{I_{h}}{N_{h}}-\xi_{2}\theta_{2}(1-q_{2})I_v-\mu_{v}I_{v}\right) \\
         &=& \left[\frac{S_v\beta_h\beta_v\epsilon^2}{N_h\left( \xi_2\theta_2(1-q_2)+\mu_v \right)}-\left(\xi_1\theta_1(1-q_1)+\delta+\rho+\mu_h \right)\right]I_h\\
         &&+\left[\beta_h\epsilon\frac{S_h}{N_h}-\beta_h\epsilon  \right]I_v \\
         &\leq& \left(\xi_1\theta_1(1-q_1)+\delta+\rho+\mu_h  \right)\left[\mathcal{R}_{0_{one}}^2-1  \right]I_h \leq 0.
      \end{eqnarray*}
\end{itemize}
Thus, the DFE is globally stable in $\Omega$. To verify its  global asymptotic stability, let us consider $\triangle =\{ (\mathbf{N}_h, \mathbf{N}_v): \dot{V}=0 \}$.
Then $\triangle \subset \{ (\mathbf{N}_h, \mathbf{N}_v): I_h=0\} $.  Let $\triangle^\prime \subset \triangle$ the biggest invariant set with respect to (\ref{modeloone}) and $(\mathbf{N}_h, \mathbf{N}_v)$ a solution of (\ref{modeloone}) in $\triangle^\prime$, then  $(\mathbf{N}_h, \mathbf{N}_v)$ is defined and is bounded $\forall t\in  \mathbb{R}$ and $I_h(t)= 0$ in $\triangle^\prime$ for all $t$. Replacing this value in the system  (\ref{modeloone}) we obtain that $R_h(t)=I_v(t) = 0$ for all $t$, while from the first and fourth equation of  (\ref{modeloone}) we obtain that  $S_h(t)= \Lambda_h/\mu_h=\bar N_h$ and $S_v(t) = \Lambda_v/(\xi_2\theta_2(1-q_2)+\mu_v)=\bar N_v$.  Thus,  $\triangle^\prime=\{\mathbf{E}_{0_{one}}\}$ and from the Lasalle invariance principle \cite{wei2011controlling} $\mathbf{E}_{0_{one}}$ is GAS in $\Omega$.
\end{proof}

%%%%%%%%%%%%%%%%%%%%%%%%%%%%%%%%%%%%%%%%%%%%%%%%%%%%%%%%%%%%%%%%%%%%%%%%
%%%%%%%%%%%%%%%%%%%%%%%%%%%%%%%%%%%%%
%%%%%%%%%%%%%%%%%%%%%%%%%%%%%%%%%%%%%%%%%%%%%%%%%%%%%%%%%%%%%%%%%%%%%%%%
\subsection{Numerical experiments}
\label{seccion_experimentos_one}

In this subsection, we validate our theoretical results with numerical experiments.  For this end, we take data from rural areas of  Tumaco (Colombia) reported in \cite{romero2018optimal} and make some numerical simulations.  For the values of the parameters corresponding to insecticides, we assume that the fumigation is done with two  pyrethroids insecticides (deltamethrin and cyfluthrin) according to the recommendations of Palomino et al. in \cite{palomino2008eficacia}. Pyrethroids insecticides are  a special chemicals class of active ingredients found in many of the modern insecticides used by pest management professionals. Due to the low concentrations in which these products are applied, a constant safety of use and a decrease in the toxic impact on vector control have been achieved.  For the values of the parameters corresponding to the drug, we assume that the infected patients are treated with artemisinin--based combination therapy (ACT) according to the  recomendations of Smith in \cite{smith2018efficacy}. Artemisinin (also called qinghaosu), is an antimalarial drug derived from the sweet wormwood plant: \textit{Artemisia annua}. Fast acting artemisinin--based compounds are combined with other drugs, for example, lumefantrine, mefloquine, amodiaquine, sulfadoxine/pyrimethamine, piperaquine and chlorproguanil/dapsone. The artemisinin derivatives include dihydroartemisinin, artesunate and artemether \cite{smith2018efficacy}.   Tables \ref{tabla_droga} and \ref{tabla_insecticidas} show the values of the parameters corresponding to  the drugs and insecticides supply, respectively.

\begin{table}[H]
\begin{center}
\caption{Values of the parameters corresponding to the ACT  supply.}%%%Table caption goes here
\label{tabla_droga}
\begin{tabular}{llll}\hline
  Parameter    & Interpretation     &Dimension                      &  Value          \\ \hline
  $\xi_1$       & Drug efficacy   &Dimensionless                     &   0.7\\
  $\theta_1$    & Recovery rate due to the drug  &Day$^{-1}$              & 0.6\\
  $q_1$         & Resistance acquisition ratio to the drug &
Dimensionless & 0.1 \\\hline
\end{tabular}
\end{center}
\end{table}
%%%%%%
\begin{table}[H]
\begin{center}
\caption{ Values of the parameters corresponding to insecticides  supply.}%%%Table caption goes here
\label{tabla_insecticidas}
\begin{tabular}{lll}\hline
  Parameter    & Interpretation       &Dimension \\                       \\ \hline
  $\xi_2$       & Insecticide efficacy   &Dimensionless    \\
  $\theta_2$    & Death rate due to the insecticides &Day $^{-1}$     \\
  $q_2$         & Resistance acquisition ratio to the insecticides &Dimensionless \\ \hline
               &Value for deltamethrin  & Value for cyfluthrin     \\ \hline
               &0.7              &0.2       \\
               & 0.3     &0.3 \\
               & 0.05 & 0.2 \\ \hline
\end{tabular}
\end{center}
\end{table}

Figure \ref{modelosincontrol2_unparche} shows the behavior of human and mosquito populations when the patients are treated with ACT and the mosquitoes are fumigated with cyfluthrin and deltamethrin, respectively. In Figure \ref{modelosincontrol2_unparche} (a) the solutions tend to an endemic equilibrium and $\mathcal{R}_{0_{one}}^2=2.15$, while in Figure \ref{modelosincontrol2_unparche} (b) the solutions tend to the DFE and   $\mathcal{R}_{0_{one}}^2=0.0012$.  In fact, given that cyfluthrin  is an insectcide with less efficacy  than deltamethrin, its application generates greater resistance hindering the disease control.

\begin{figure}[H]
\begin{center}
\subfigure[Fumigation with cyfluthrin]{\includegraphics[width=7 cm, height=7 cm]{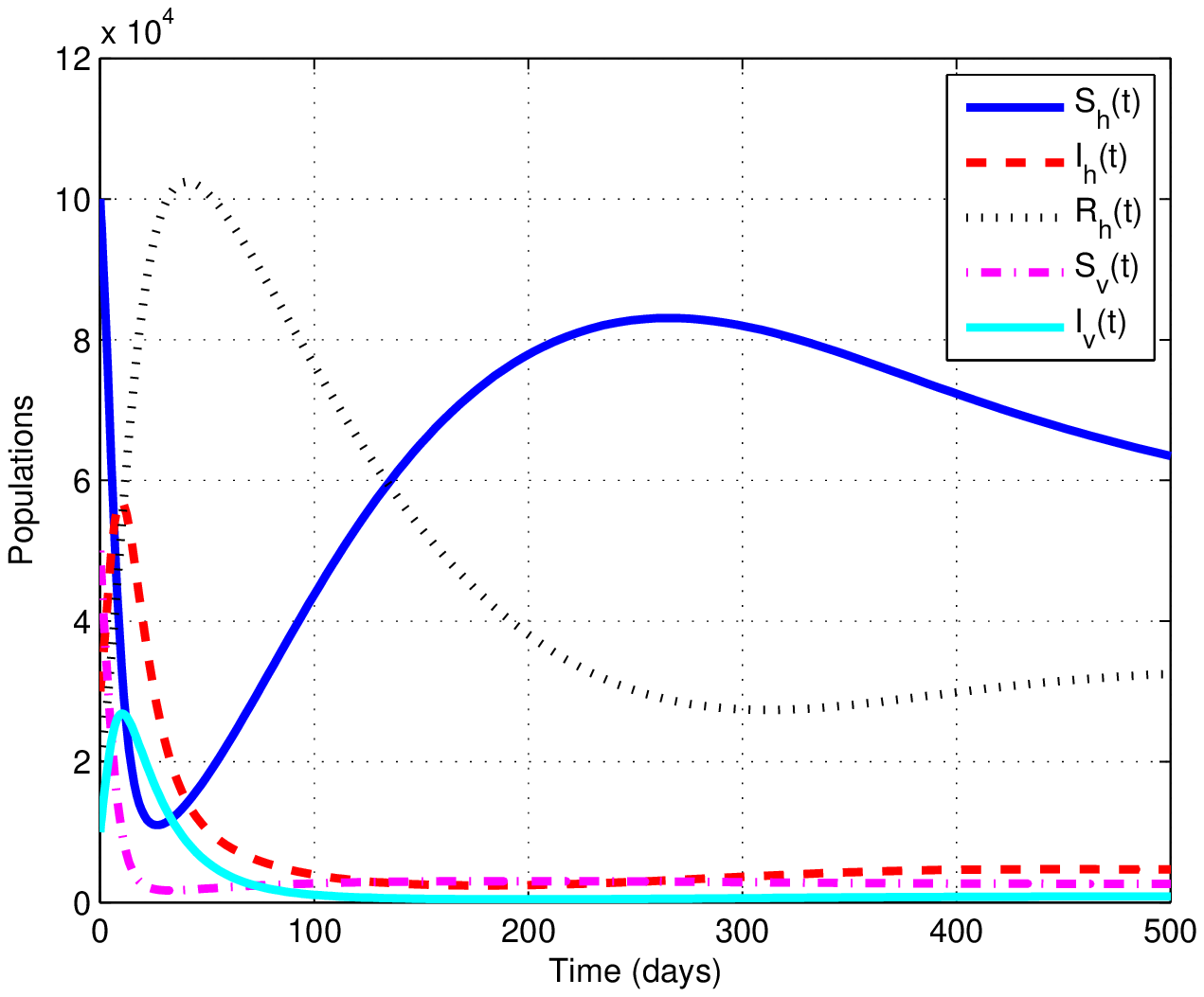}}
\subfigure[ Fumigation with deltamethrin]{\includegraphics[width=7 cm, height=7 cm]{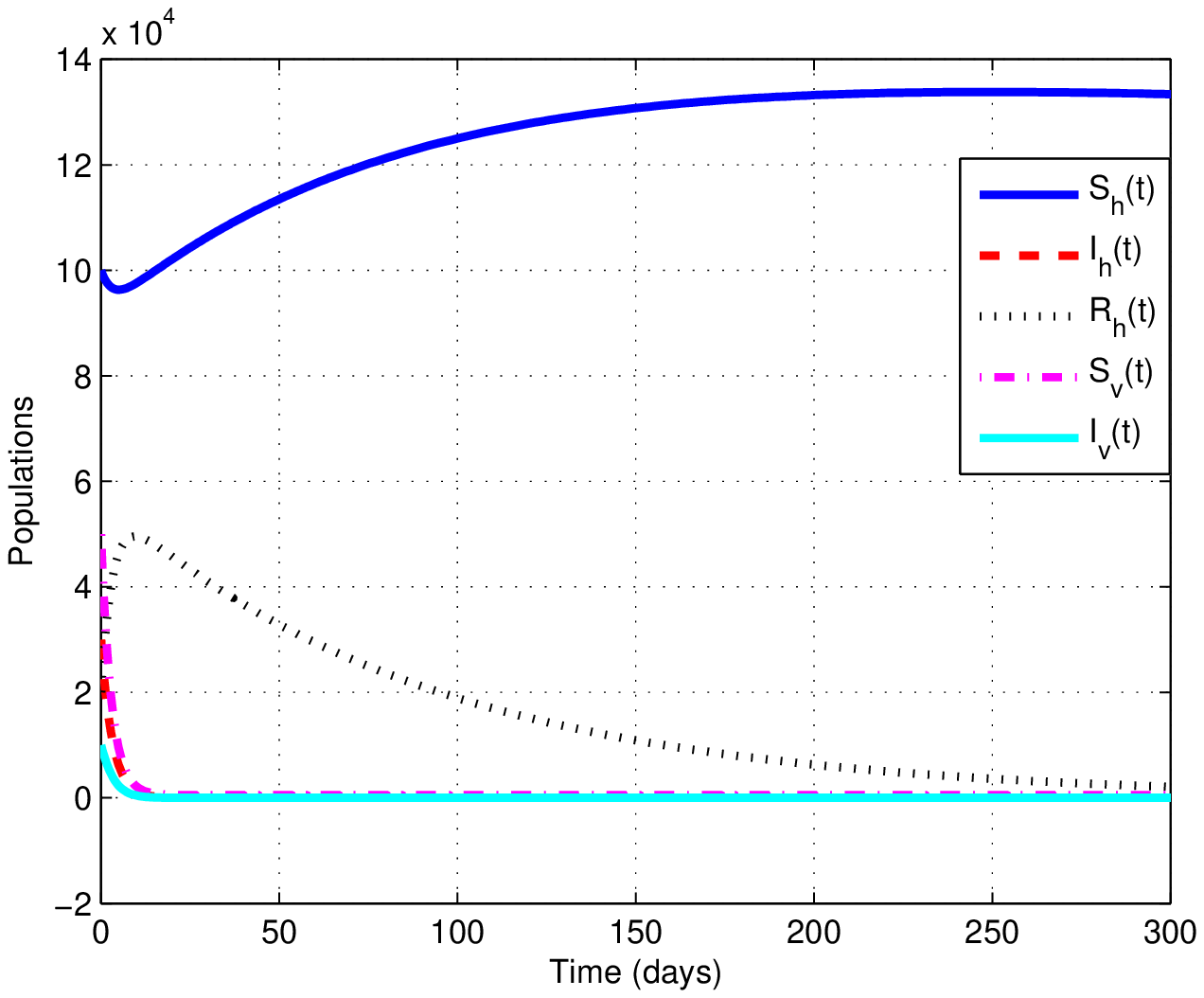}}
\end{center}
\caption{ Numerical simulations of model (\ref{modeloone}) with data from rural areas of Tumaco (Colombia) reported in \cite{romero2018optimal} and  initial condition (100000, 30000, 20000,  50000, 10000). On the left, the fumigation is done with cyfluthrin,  here $\mathcal{R}_{0_{one}}^2=2.15$ and the solutions tend to the endemic equilibrium  (63480, 4690, 32480, 2630, 840). On the right, the fumigation is done with deltamethrin, $\mathcal{R}_{0_{one}}^2=0.0012$ and  the solutions tend to the DFE.}
\label{modelosincontrol2_unparche}
\end{figure}

In Figure \ref{comparacion} we consider the effects of  resistance in the population dynamics.  In Figures \ref{comparacion} (a) and  (b) we assume that there is no resistance ($q_1=q_2=0$). Then, when the fumigation is done  with cyflutrin,  $\mathcal{R}_{0_{one}}^2=1.41$ and the solutions tend to the endemic equilibrium (8136, 227, 1541, 250, 32), which evidences a considerable reduction in the persistence of the infection, while  if the fumigation is done with deltamethrin, $\mathcal{R}_{0_{one}}^2=0.00095$ and the solutions tend to the DFE. In Figures \ref{comparacion} (c) and  (d) we assume total resistance ($q_1=q_2=1$).  Then, when the fumigation is done with cyflutrin, $\mathcal{R}_{0_{one}}^2=2724.4$ and the solutions tend to the endemic equilibrium (0, 789, 0, 0, 4699), which evidences that after the first 30 days, all individuals (humans and mosquitoes) will be infected, while if the fumigation is done with deltamethrin, $\mathcal{R}_{0_{one}}^2=264.8$ and the solutions tend to the endemic equilibrium (6, 824, 0, 195, 4617), which evidences a persistence of the infection.

\begin{figure}[H]
\begin{center}
\subfigure[ Fumigation with cyfluthrin and $q_1=q_2=0$]{\includegraphics[width=7 cm, height=7 cm]{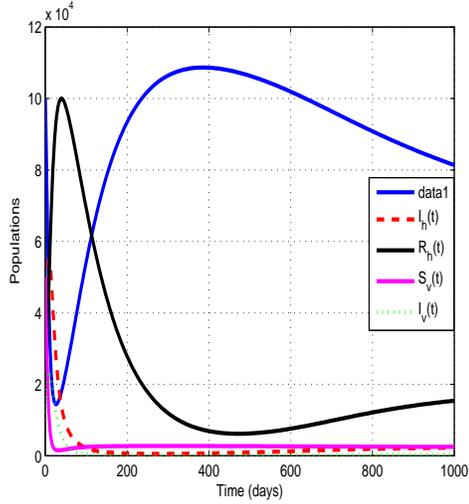}}
\subfigure[ Fumigation with deltamethrin and $q_1=q_2=0$]{\includegraphics[width=7 cm, height=7 cm]{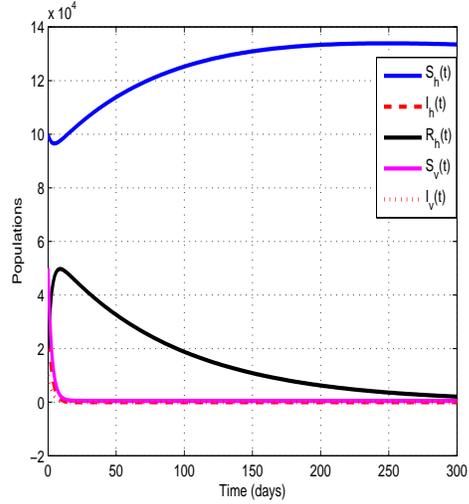}}
\subfigure[Fumigation with cyfluthrin and $q_1=q_2=1$]{\includegraphics[width=7 cm, height=7 cm]{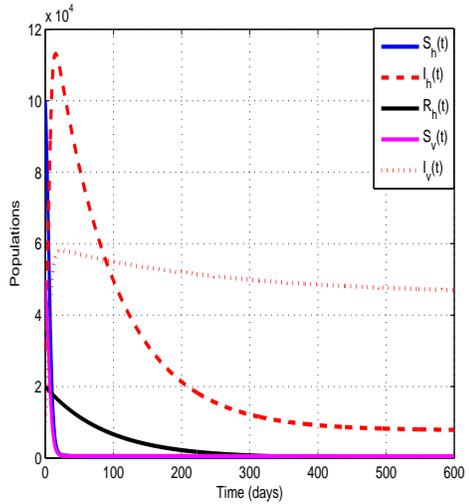}}
\subfigure[Fumigation with deltamethrin and $q_1=q_2=1$]{\includegraphics[width=7 cm, height=7cm]{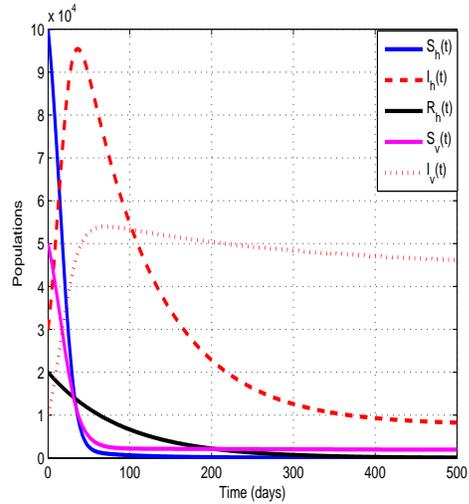}}
\end{center}
\caption{ Total resistance ($q_1=q_2=1$) and no resistance ($q_1=q_2=0$).}
\label{comparacion}
\end{figure}

%%%%%%%%%%%%%%%%%%%%%%%%%%%%%%%%%%%%%%%%%%%%%%%%%%%%%%%%%%%%%%%%%%%%%%%%
%%%%%%%%%%%%%    TWO PATCH %%%%%%%%%%%%%%%%%%%%%%%%%
%%%%%%%%%%%%%%%%%%%%%%%%%%%%%%%%%%%%%%%%%%%%%%%%%%%%%%%%%%%%%%%%%%%%%%%%
\section{Two patch model}
\label{Seccion_dos_parches}
In this section we model the malaria transmission dynamics between humans and mosquitoes within a patch  and their spatial dispersal between two patches. Within a single patch, our model is defined by the equations (\ref{modeloone}), where the subscripts $1$ and $2$ refers to patch $1$ and patch $2$, respectively. The patches are coupled via the resident budgeting time matrix  $\mathbf{R}=
[\lambda_{ij}]_{2\times2}$ for $i,j=1,2$ as in \cite{lee2015role}. Here $\lambda_{ij}\doteq \alpha_{ij}+\beta_{ji}$, being $\alpha_{ij}$ the probability of a human from patch $i$ is visiting the patch $j$ and $\beta_{ji}$ the probability of a mosquito from patch $j$,   is visiting the patch $i$. Some authors prefer not to consider the mobility of mosquitoes due to yours short life cycle (less than two weeks without captivity), in which case we assume  $\beta_{ji}=0$. Each $\lambda_{ij}$ is a constant in $[0,1]$ and $\sum_{j=1}^2\lambda_{ij}=1$ for $i=1,2$.
In this model we include bi--directional motion as in \cite{lee2015role}, that is,   a susceptible human (mosquito) in patch $i$ can be infected by an infected mosquito (human) from patch $i$ as well as by an infected mosquito (human) from patch $j$ who is visiting the patch $i$. Thus, the dynamic in two patches are represented
through the following  system of nonlinear  ODEs:

\begin{equation}\label{modelocomp}
\left\{\begin{array}{ll}
        &\dot{S_{h_i}}  = \Lambda_{h_i}+\omega_iR_{h_i}-S_{h_i}\sum_{j=1}^2\lambda_{ij}\beta_{hj}\epsilon_j\frac{I_{v_j}}{N_{h_j}}-\mu_{h_i}S_{h_i} \\ \\
  &\dot{I_{h_i}} = \sum_{j=1}^2\lambda_{ij}\beta_{hj}\epsilon_j\frac{I_{v_j}}{N_{h_j}}-\xi_{1i}\theta_{1i}(1-q_{1i})I_{h_i}-(\delta_i+\rho_i+\mu_{h_i})I_{h_i} \\ \\
  &\dot{R_{h_i}} = \xi_{1i}\theta_{1i}(1-q_{1i})I_{h_i}+\delta_iI_{h_i}-(\omega_i+\mu_{h_i})R_{h_i}  \\ \\
  &\dot{S_{v_i}} = \Lambda_{v_i}-S_{v_i}\sum_{j=1}^2\lambda_{ji}\beta_{vj}\epsilon_j\frac{I_{h_j}}{N_{h_j}}-\xi_{2i}\theta_{2i}(1-q_{2i})-\mu_{v_i}S_{v_i} \\ \\
   &\dot{I_{v_i}} = S_{v_i}\sum_{j=1}^2\lambda_{ji}\beta_{vj}\epsilon_j\frac{I_{h_j}}{N_{h_j}}-\xi_{2i}\theta_{2i}(1-q_{2i})-\mu_{v_i}I_{v_i},
    \quad \text{for} \quad i=1,2\\ \\
    &(\mathbf{N}_{h_1}(0), \mathbf{N}_{v_1}(0), \mathbf{N}_{h_2}(0), \mathbf{N}_{v_2}(0))= (\mathbf{N}_{h_1}^0, \mathbf{N}_{v_1}^0, \mathbf{N}_{h_2}^0,\mathbf{N}_{v_2}^0),
       \end{array}
\right.
\end{equation}
\noindent
where $(\mathbf{N}_{h_1}(0), \mathbf{N}_{v_1}(0), \mathbf{N}_{h_2}(0), \mathbf{N}_{v_2}(0))$ denotes an initial condition.
Let us define $N_H(t)=N_{h_1}(t)+N_{h_2}(t)$,  $N_V(t)=N_{v_1}(t)+N_{v_2}(t)$ and

\begin{equation}\label{lambdas}
    \begin{array}{ll}
       \Lambda_H=2\max\{\Lambda_{h_1},\;\Lambda_{h_2}\}, & \mu_H=\min\{\mu_{h_1}, \; \mu_{h_2}\} \\
       \Lambda_V=2\max\{\Lambda_{v_1},\;\Lambda_{v_2}\}, & \mu_V=\min\{\mu_{v_1},\;\mu_{v_2}\}.
    \end{array}
\end{equation}
\noindent
A  set of biological interest for the solutions of the system (\ref{modelocomp}) is
\begin{equation}\label{omegabar}
    \bar \Omega=\left\{ (\mathbf{N}_{h_1}, \mathbf{N}_{v_1}, \mathbf{N}_{h_2}, \mathbf{N}_{v_2}) \in \mathbb{R}_{10}^+: \; N_H\leq \frac{\Lambda_H}{\mu_H} \; \text{and} \, N_V\leq \frac{\Lambda_V}{\mu_V} \right\}.
\end{equation}
\noindent
The proof of invariance of $\bar \Omega$ can be be made using the results of  Lemma \ref{teoinvarianzaone}.
%%%%%%%%%%%%%%%%%%%%%%%%%
\subsection{Global basic reproductive number and numerical experiments}
In this subsection, we first compute the global basic reproductive number associated to the system (\ref{modelocomp}). Then,  we  obtain numerical experiments to generate an application of the mathematical model (\ref{modelocomp}) using  data from \cite{romero2018optimal}.  Let us denote as $\mathbf{E}_0=\left(\bar N_{h_1},0,0,\bar N_{v_1},0,\bar N_{h_2} ,0,0,\bar N_{v_2},0  \right)$ with
\begin{equation}\label{barN}
    \bar N_{h_i}=\frac{\Lambda_{h_i}}{\mu_{h_i}} \; \text{and} \; \bar N_{v_i}=\frac{\Lambda_{v_i}}{\xi_{2i}\theta_{2i}(1-q_{2i})+\mu_{v_i}} \; \text{for} \; i=1,2,
\end{equation}
to the DFE associated to the system (\ref{modelocomp}). Using a similar procedure to that Subsection \ref{section_reproductivo_one} with
\begin{equation*}
\mathbf{F}=\left(
    \begin{array}{cccc}
      0 & \lambda_{11}\beta_{h_1}\epsilon_1 & 0 & \lambda_{12}\beta_{h_2}\epsilon_2\frac{\bar N_{h_1}}{\bar N_{h_2}} \\
      \lambda_{11}\beta_{v_1}\epsilon_1\frac{\bar N_{v_1}}{\bar N_{h_1}} & 0 & \lambda_{21}\beta_{v_1}\epsilon_2\frac{\bar N_{v_1}}{\bar N_{h_2}} & 0 \\
      0 & \lambda_{21}\beta_{h_1}\epsilon_1\frac{\bar N_{h_2}}{\bar N_{h_1}} & 0  & \lambda_{22}\beta_{h_2}\epsilon_2 \\
      \lambda_{12}\beta_{v_1}\epsilon_1\frac{\bar N_{v_2}}{\bar N_{h_1}} & 0 & \lambda_{22}\beta_{v_2}\epsilon_2\frac{\bar N_{v_2}}{\bar N_{h_2}} & 0 \\
    \end{array}
  \right) \; \text{and}
  \end{equation*}
  {\footnotesize
\begin{equation*}
\mathbf{V}=\left(
            \begin{array}{cccc}
              \xi_{11}\theta_{11}(1-q_{11})+\delta_1+\rho_1+\mu_{h_1} & 0 & 0 & 0 \\ \\
              0 & \xi_{21}\theta_{21}(1-q_{21})+\mu_{v_1} & 0 & 0 \\ \\
              0 & 0 & \xi_{12}\theta_{12}(1-q_{12})+\delta_2+\rho_2+\mu_{h_2} & 0 \\ \\
              0 & 0 & 0 & \xi_{22}\theta_{22}(1-q_{22})+\mu_{v_2} \\
            \end{array}
          \right),
\end{equation*}}
\noindent
we get the following expression to the global basic reproductive number

\begin{equation}\label{R02}
\mathcal{R}_0=\left(\frac{\eta_1+\sqrt{\eta_2}}{2}\right)^{1/2},
 \end{equation}
 where

 \begin{equation*}
 \begin{array}{ll}
  \eta_1 &= \frac{\beta_{h_1}\epsilon_1}{\xi_{21}\theta_{21}(1-q_{21})+\mu_{v_1}}\frac{\bar N_{v_1}}{\bar N_{h_1}}\left[\frac{\lambda_{21}^2\beta_{v_2}\epsilon_2}{\left(\xi_{12}\theta_{12}(1-q_{12})+\delta_2+\rho_2+\mu_{h_2} \right)}+\frac{\lambda_{11}^2\beta_{v_1}\epsilon_1}{\left(\xi_{11}\theta_{11}(1-q_{11})+\delta_1+\rho_1+\mu_{h_1} \right)}  \right] \\ \\
    &+ \frac{\beta_{h_2}\epsilon_2}{\xi_{22}\theta_{22}(1-q_{22})+\mu_{v_2}}\frac{\bar N_{v_2}}{\bar N_{h_2}}\left[\frac{\lambda_{12}^2\beta_{v_1}\epsilon_1}{\left(\xi_{11}\theta_{11}(1-q_{11})+\delta_1+\rho_1+\mu_{h_1} \right)}+\frac{\lambda_{22}^2\beta_{v_2}\epsilon_2}{\left(\xi_{12}\theta_{12}(1-q_{12})+\delta_2+\rho_2+\mu_{h_2} \right)}  \right] \\ \\
    \eta_2&= \left[\frac{\lambda_{11}^2\beta_{v_1}\beta_{h_1}\epsilon_1^2}{\left(\xi_{21}\theta_{21}(1-q_{21})+\mu_{v_1}\right)\left(\xi_{11}\theta_{11}(1-q_{11})+\delta_1+\rho_1+\mu_{h_1} \right)}\frac{\bar N_{v_1}}{\bar N_{h_1}}\right.
    \left.+\frac{\lambda_{12}^2\beta_{v_2}\beta_{h_2}\epsilon_2^2}{\left( \xi_{22}\theta_{22}(1-q_{22})+\mu_{v_2} \right)\left(\xi_{12}\theta_{12}(1-q_{12})+\delta_2+\rho_2+\mu_{h_2} \right)}\frac{\bar N_{v_2}}{\bar N_{h_2}}  \right]^2\\ \\
   & +\left[\frac{\lambda_{21}^2\beta_{v_1}\beta_{h_1}\epsilon_1\epsilon_2}{\left(
\xi_{21}\theta_{21}(1-q_{21})+\mu_{v_1} \right)\left(\xi_{11}\theta_{11}(1-q_{11})+\delta_1+\rho_1+\mu_{h_1} \right)}\frac{\bar N_{v_1}}{\bar N_{h_1}} \right.
\left. +\frac{\lambda_{22}^2\beta_{v_2}\beta_{h_2}\epsilon_2^2}{\left( \xi_{22}\theta_{22}(1-q_{22})+\mu_{v_2} \right)\left(\xi_{12}\theta_{12}(1-q_{12})+\delta_2+\rho_2+\mu_{h_2} \right)}\frac{\bar N_{v_2}}{\bar N_{h_2}} \right]^2 \\ \\
    & +2\frac{\beta_{v_1}\beta_{v_2}\epsilon_1\epsilon_2}{\left(\xi_{11}\theta_{11}(1-q_{11})+\delta_1+\rho_1+\mu_{h_1} \right)\left(\xi_{12}\theta_{12}(1-q_{12})+\delta_2+\rho_2+\mu_{h_2}\right)}
    \times \left[ \frac{\lambda_{12}^2\lambda_{22}^2\beta_{h_2}^2\epsilon_2^2\bar N_{v_2}^2}{\left( \xi_{22}\theta_{22}(1-q_{22})+\mu_{v_2} \right)^2\bar N_{h_2}^2}+\frac{\lambda_{11}^2\lambda_{21}^2\beta_{h_1}^2\epsilon_1^2 \bar N_{v_1}^2}{\left(
\xi_{21}\theta_{21}(1-q_{21})+\mu_{v_1}\right)^2\bar N_{h_1}^2}  \right] \\ \\
    &+ 2\frac{\beta_{h_1}\beta_{h_2}\beta_{v_1}\beta_{v_2}\epsilon_1^2\epsilon_2^2}
    {\left(\xi_{21}\theta_{21}(1-q_{21})+\mu_{v_1} \right)\left(\xi_{11}\theta_{11}(1-q_{11})+\delta_1+\rho_1+\mu_{h_1}\right)\left(\xi_{22}\theta_{22}(1-q_{22})+\mu_{v_2} \right)\left(\xi_{12}\theta_{12}(1-q_{12})+\delta_2+\rho_2+\mu_{h_2} \right)} \\ \\
    &\times
    \frac{\bar N_{v_1}\bar N_{v_2}}{\bar N_{h_1}\bar N_{h_2}}\left[4\lambda_{11}\lambda_{12}\lambda_{21}\lambda_{22}-\lambda_{11}^2\lambda_{22}^2-\lambda_{12}^2\lambda_{21}^2  \right].
 \end{array}
 \end{equation*}
\noindent
Considering the uncopling system (that is, $\lambda_{11}=1$ and $\lambda_{22}=1$) in (\ref{R02}), we obtain the local basic reproductive number for each patch given in  (\ref{R0one}).
\par
In what follows, we make some  numerical experiments. For this purpose, we are going to consider the following hypothesis: (a) the patch 1 and patch 2 represent  rural  areas (RA) and  urban areas (UA) from the municipality of Tumaco (Colombia) as in \cite{romero2018optimal,leiton2018analisis}, respectively. (b)  The epidemiological outbreak begins in RA and the individuals in UA acquire the infection due to the coupling between the two patches. Therefore (unless otherwise stated), the initial condition  will be $S_{h_1}(0)=100000$, $I_{h_1}(0)= 30000$, $R_{h_1}(0)=20000$, $S_{v_1}(0)=50000$, $I_{v_1}(0)= 10000$, $S_{h_2}(0)= 100000$,  $S_{v_2}(0)=50000$  and all others zero. (c) Mosquitos are fumigated only with cyflutrin (data from Table \ref{tabla_insecticidas}).  (d) The infected patient are treated with ACT (data from Table \ref{tabla_droga}). (d)  The  resistance acquisition ratio in RA is higher than UA due to in RA individuals  are continuously exposed to the parasite, that is, $q_{11}=0.1$, $q_{12}=0.09$, $q_{21}= 0.05$ and $q_{22}=0.04$.
Besides, we will consider the following coupling scenarios poposed by Lee et al. in \cite{lee2015role}:
\begin{itemize}
  \item [(S1)] Uncoupled: when there are no visits between patches, that is, $\lambda_{11}=\lambda_{22}=1$ and others are equal to zero
  \item [(S2)] Weakly--coupling:  small values for $\lambda_{12}$ and $\lambda_{21}$.
  \item [(S3)] Strongly--coupling: when visitors from patch 2 spend quite an amount of time in patch 1, that is, $\lambda_{22}<\lambda_{11}$.
\end{itemize}
\noindent
Table \ref{tabla3} shows the values of the parameters in the residence--time matrix considering  different scenarios of coupling.

\begin{table}[H]
\begin{center}
\caption{ Values of the parameters in the residence--time matrix.}%%%Table caption goes here
\label{tabla3}
\begin{tabular}{lll}\hline
         Scenario             &         Values of the parameters                    \\ \hline
  Uncoupled           & $\lambda_{11}=\lambda_{22}=1$,   $\lambda_{12}=\lambda_{21}=0$  \\
 Weakly--coupling       & $\lambda_{11}=\lambda_{22}=0.9$, $\lambda_{12}=\lambda_{21}=0.1$ \\
 Strongly--coupling     & $\lambda_{11}=\lambda_{22}=0.4$, $\lambda_{12}=\lambda_{21}=0.6$  \\ \hline
\end{tabular}
\end{center}
\end{table}

Figure \ref{desacoplado} shows the behavior of the solutions  when the system (\ref{modelocomp}) is uncoupled.  If the disease begins in patch 1, the disease does not spread to patch 2.

\begin{figure}[htbp]
\centering
\includegraphics[width=7 cm, height=7.0 cm]{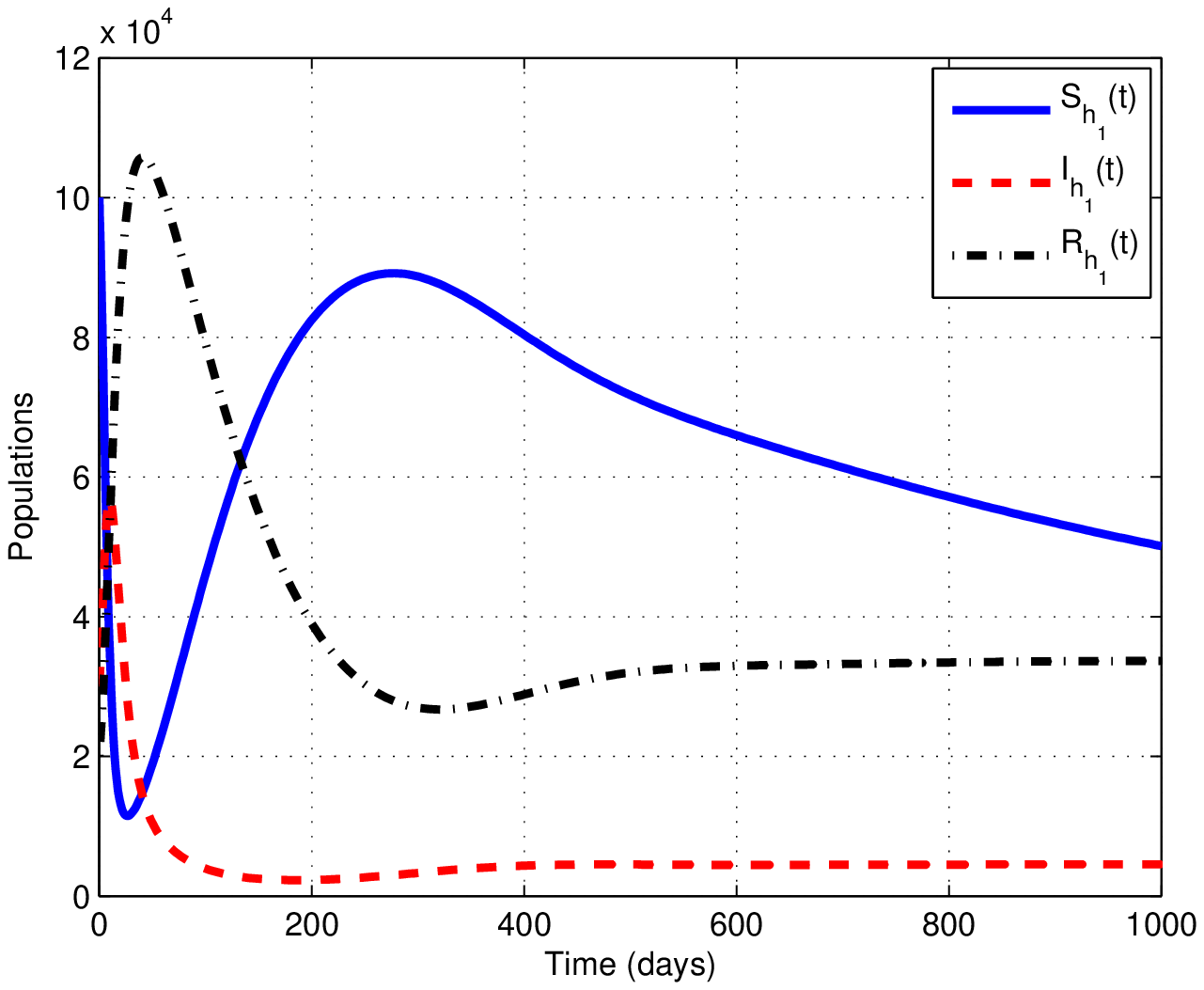}
\includegraphics[width=7 cm, height=7 cm]{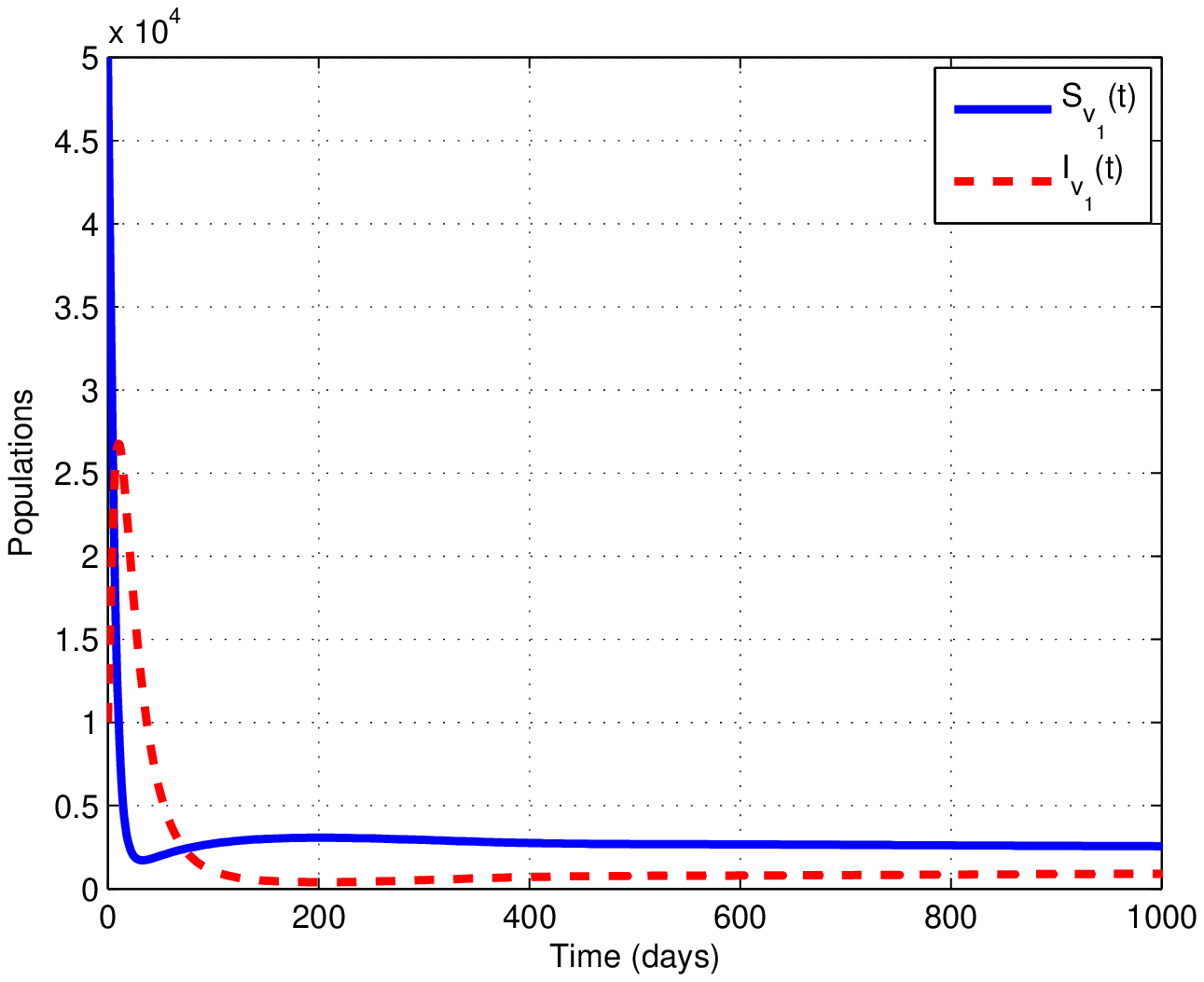} \\
\includegraphics[width=7 cm, height=7.0 cm]{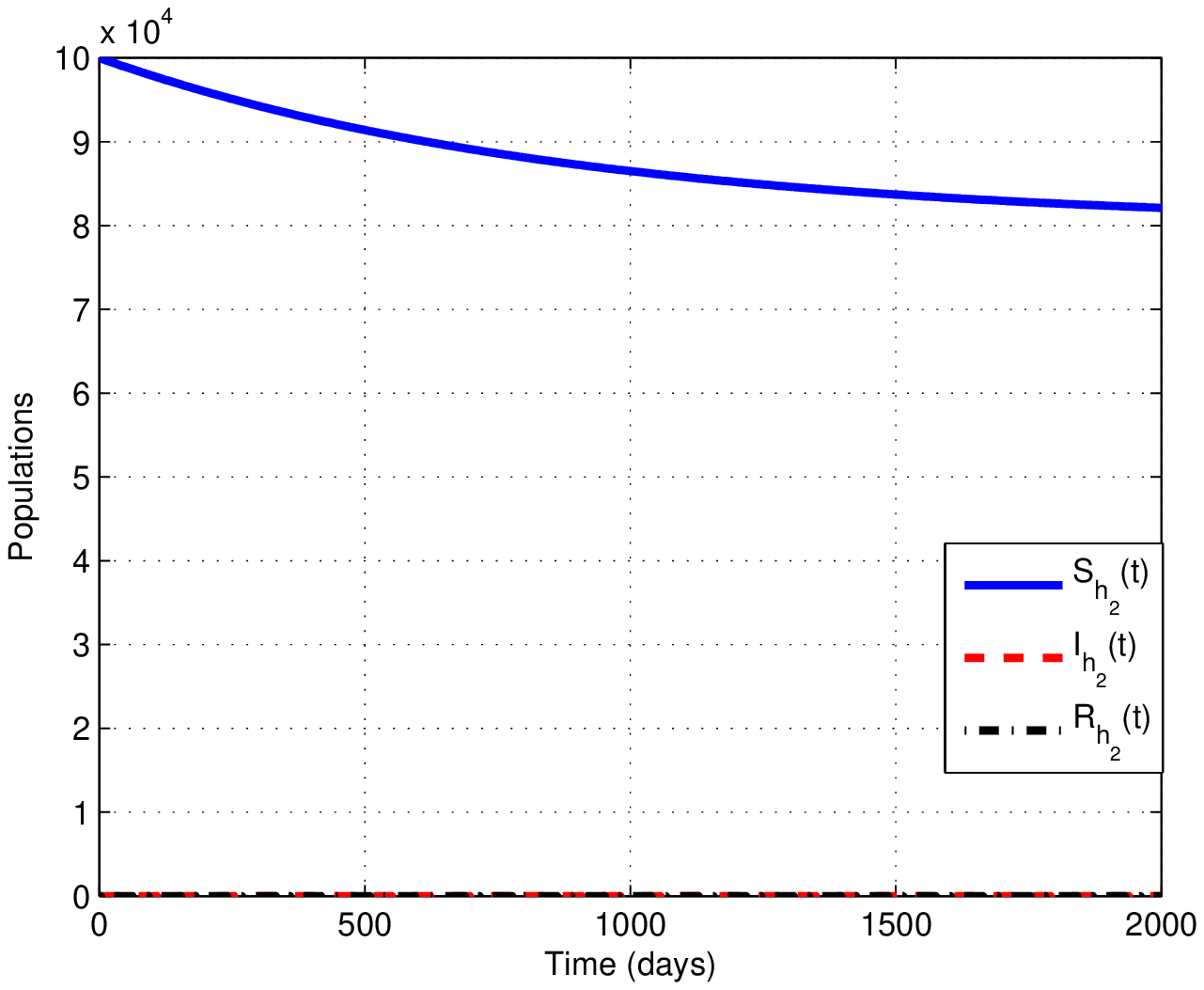}
\includegraphics[width=7 cm, height=7.0 cm]{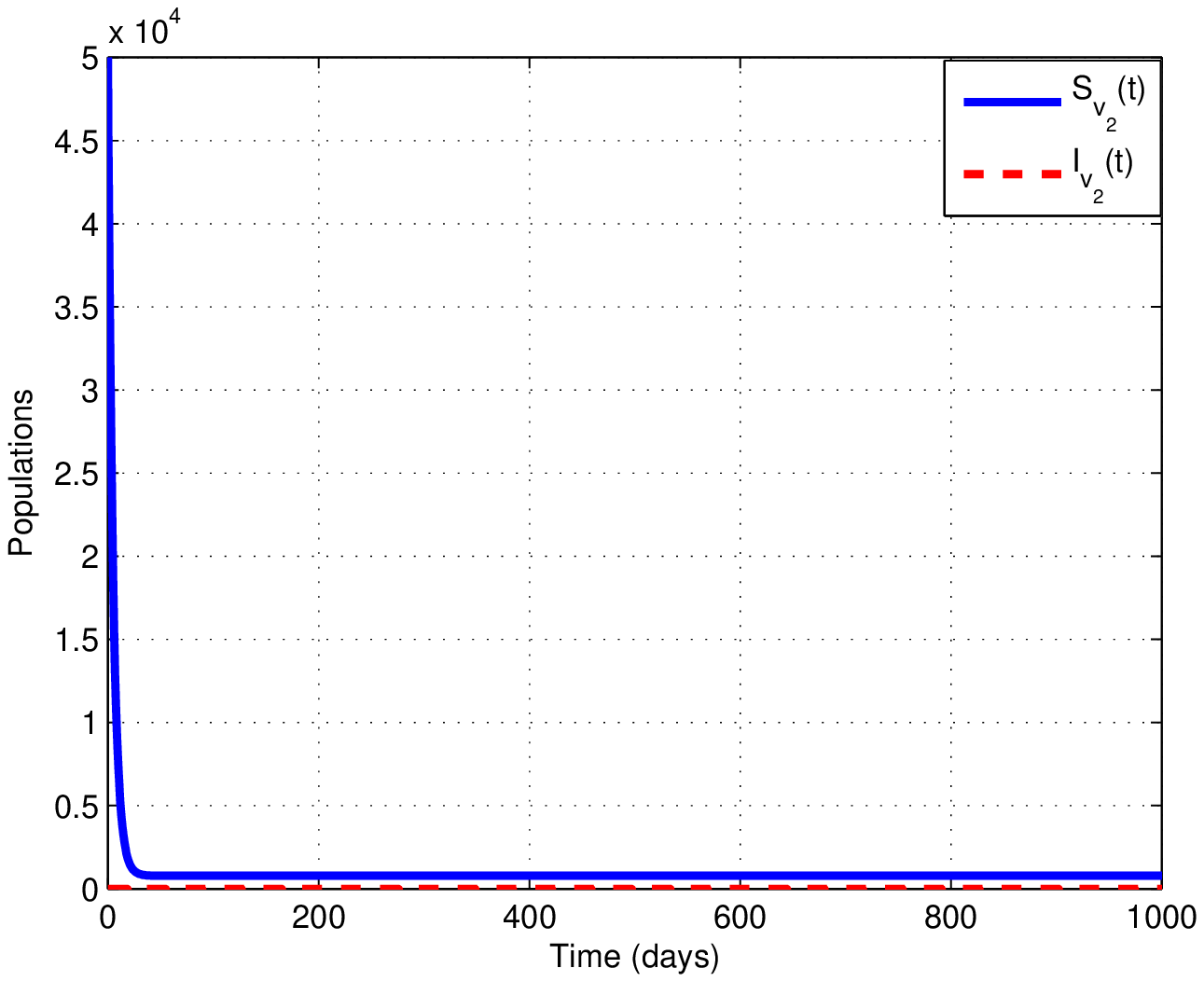}
\caption{ Numerical simulations of uncoupled system (\ref{modelocomp}) using data from \cite{romero2018optimal}  ($\lambda_{11}=\lambda_{22}=1$) .  The  initial condition is (100000, 30000, 20000,  50000, 10000, 100000, 0, 0, 50000, 0).  Here $\mathcal{R}_{0_1}=2.52$, $\mathcal{R}_{0_2}=0.12$ and $\mathcal{\mathcal{R}}_0=2.15$.}
\label{desacoplado}
\end{figure}

Figure \ref{weakly} shows the behavior of the humans and mosquitoes populations in patches 1 and  2, respectively, considering weakly--coupling.  Here,  the disease is spread from patch 1  to patch 2 during the first 50 days, then the disease is eliminated in patch 2, and remains at low load in patch 1.
\begin{figure}[htbp]
\centering
\includegraphics[width=7 cm, height=7 cm]{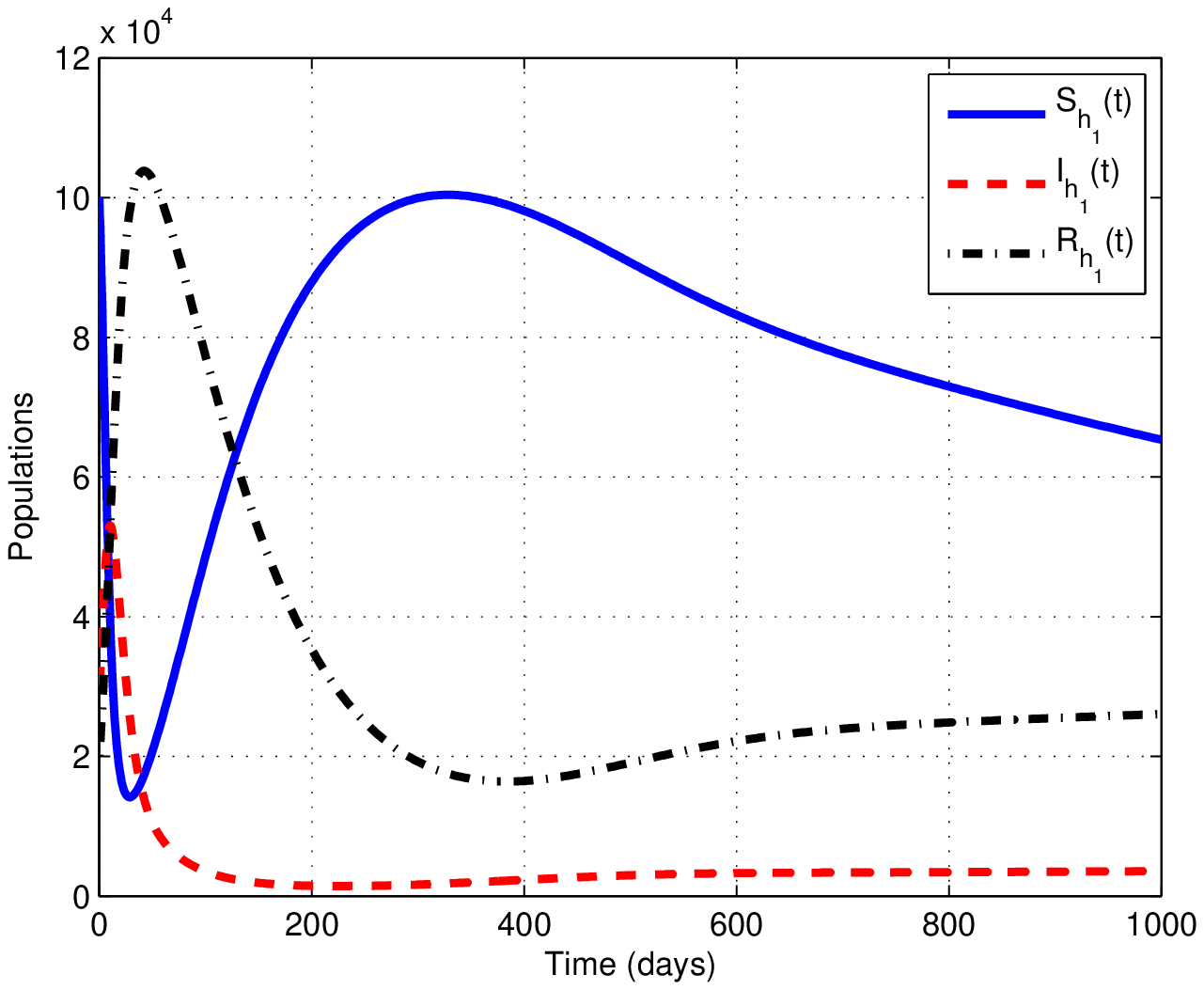}
\includegraphics[width=7 cm, height=7.0 cm]{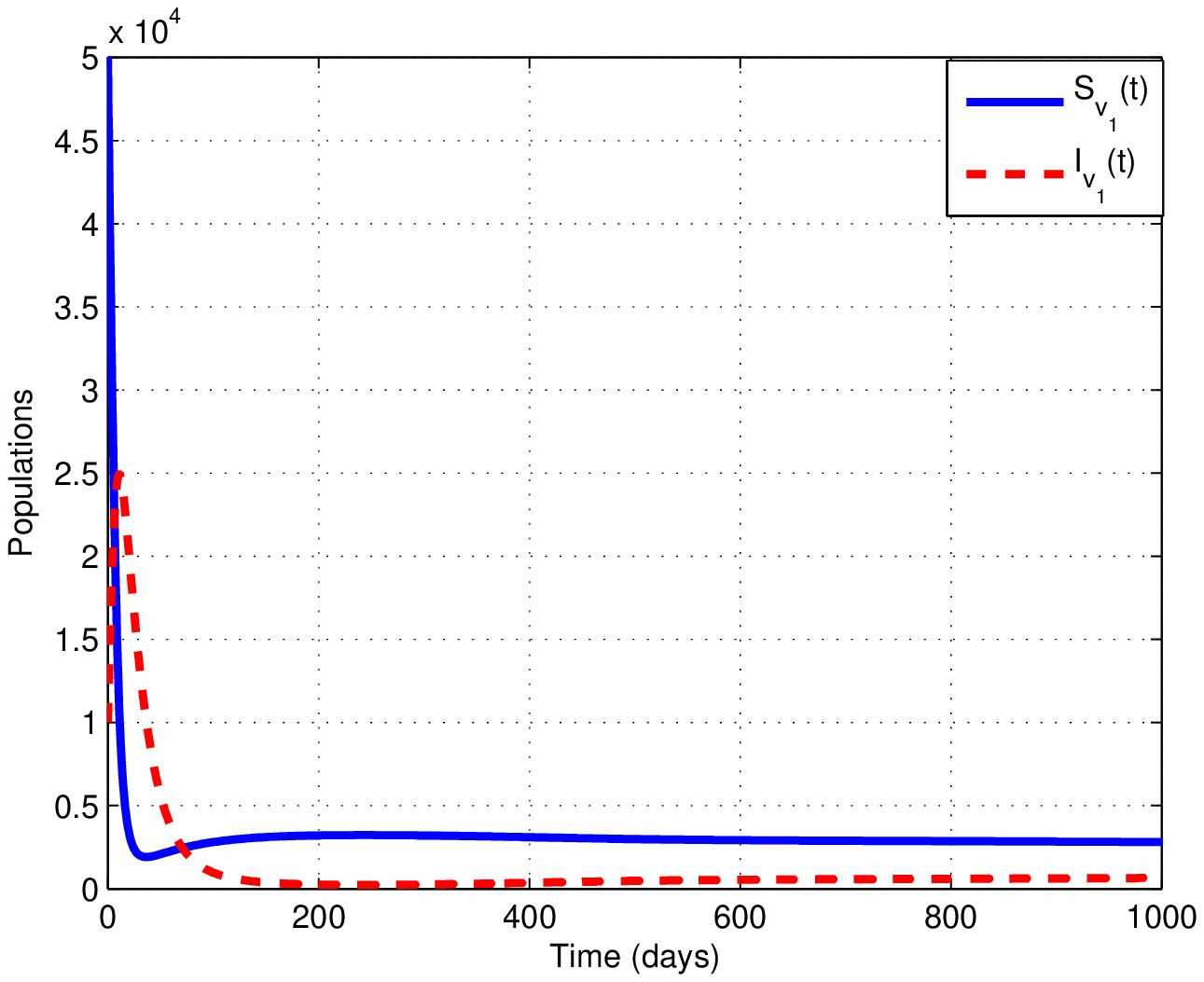} \\
\includegraphics[width=7 cm, height=7.0 cm]{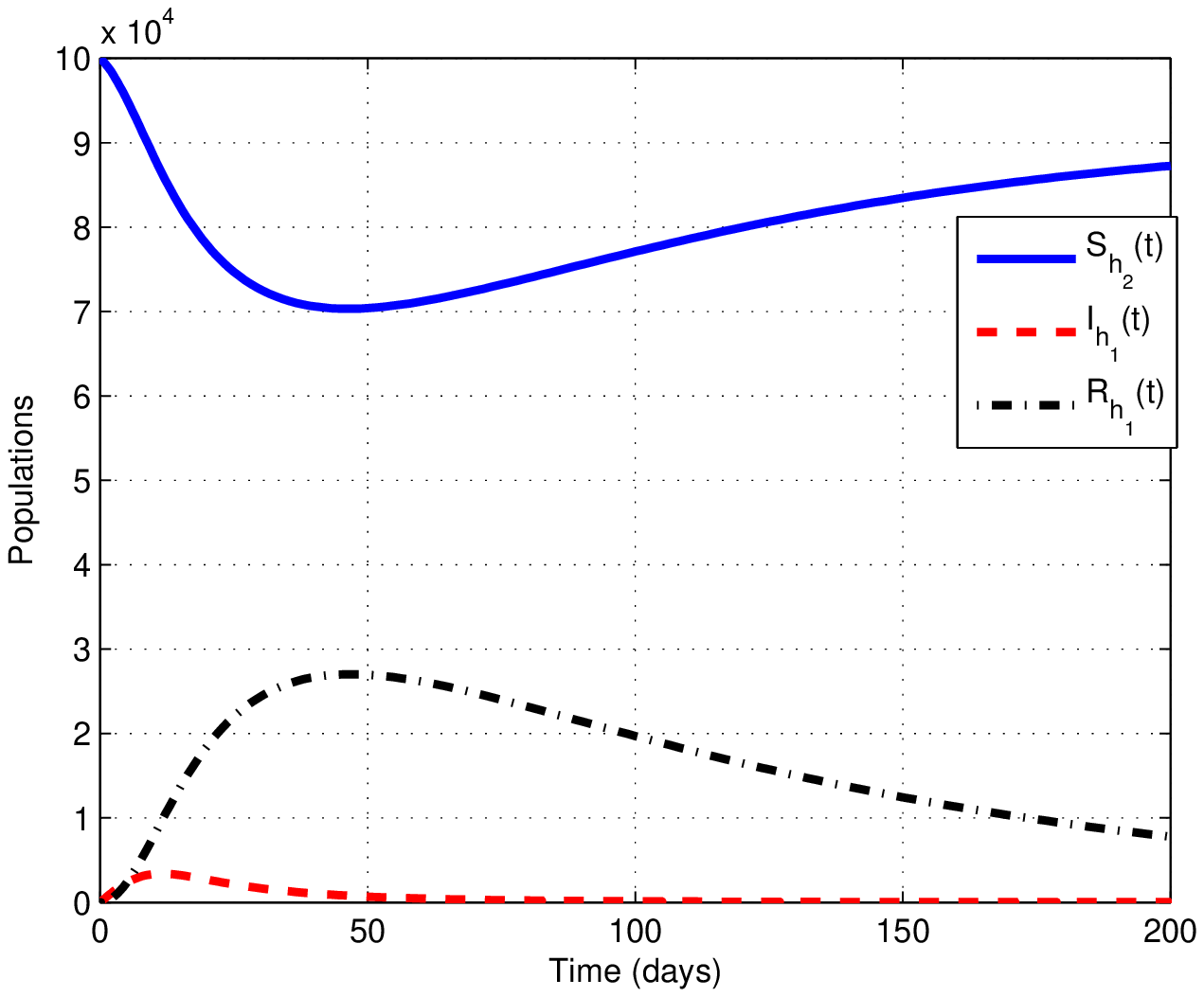}
\includegraphics[width=7 cm, height=7.0 cm]{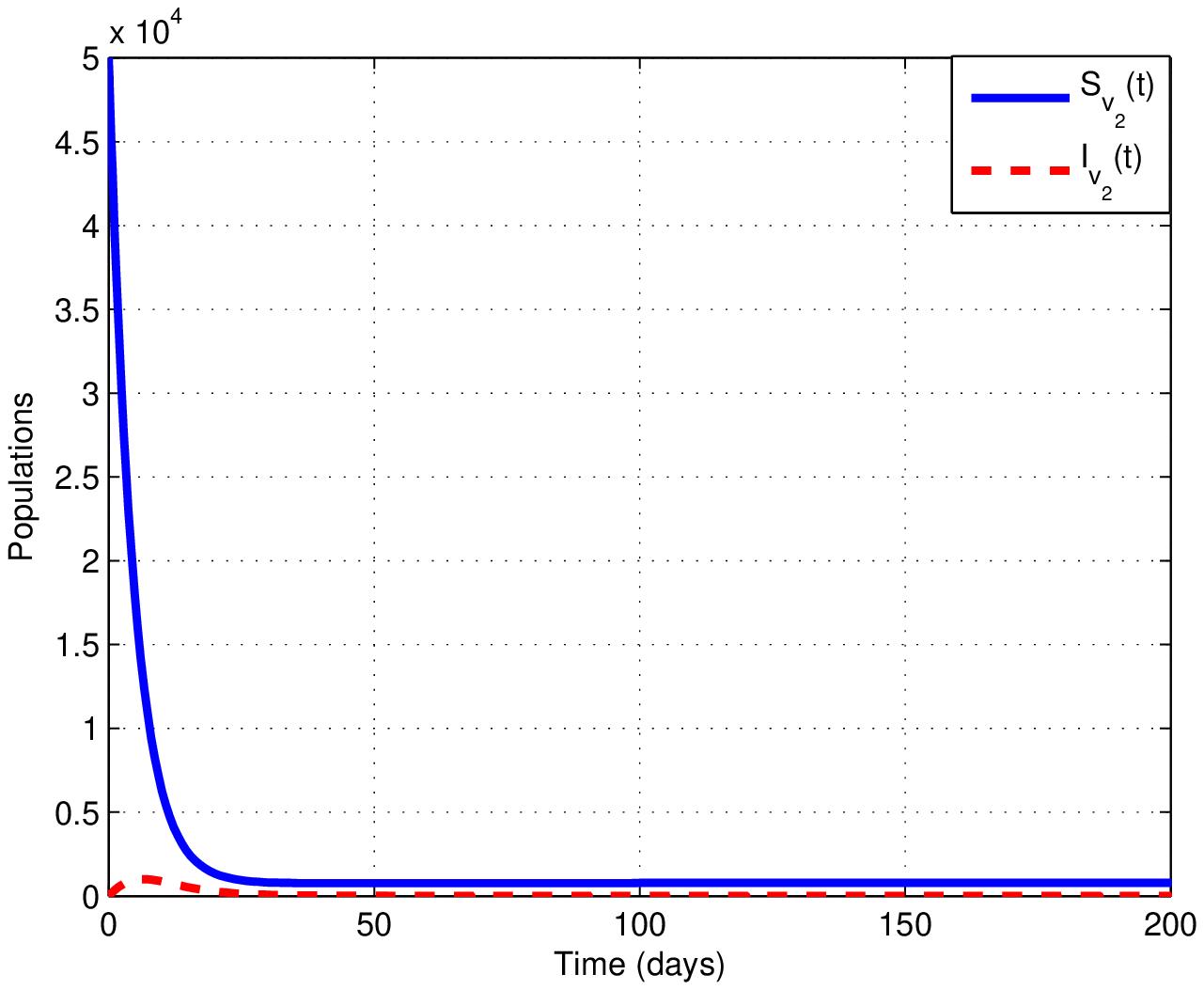}
\caption{Numerical simulations of weakly--coupled system (\ref{modelocomp}) using data from \cite{romero2018optimal} ($\lambda_{11}=\lambda_{22}=0.9$ and $\lambda_{12}=\lambda_{21}=0.1$). The  initial condition is (100000, 30000, 20000,  50000, 10000, 100000, 0, 0, 50000, 0).  Here $\mathcal{R}_{0_1}=1.46$, $\mathcal{R}_{0_2}=0.6$ and $\mathcal{\mathcal{R}}_0=3.47$.}
\label{weakly}
\end{figure}

The strongly--coupling scenario is illustrated in Figure \ref{strong}. Here, the disease is spread from patch 1  to patch 2 during the first 100 days and the infection persists in both patches. After 100 days, the disease is eliminated in both patches.

\begin{figure}[htbp]
\centering
\includegraphics[width=7 cm, height=7.0 cm]{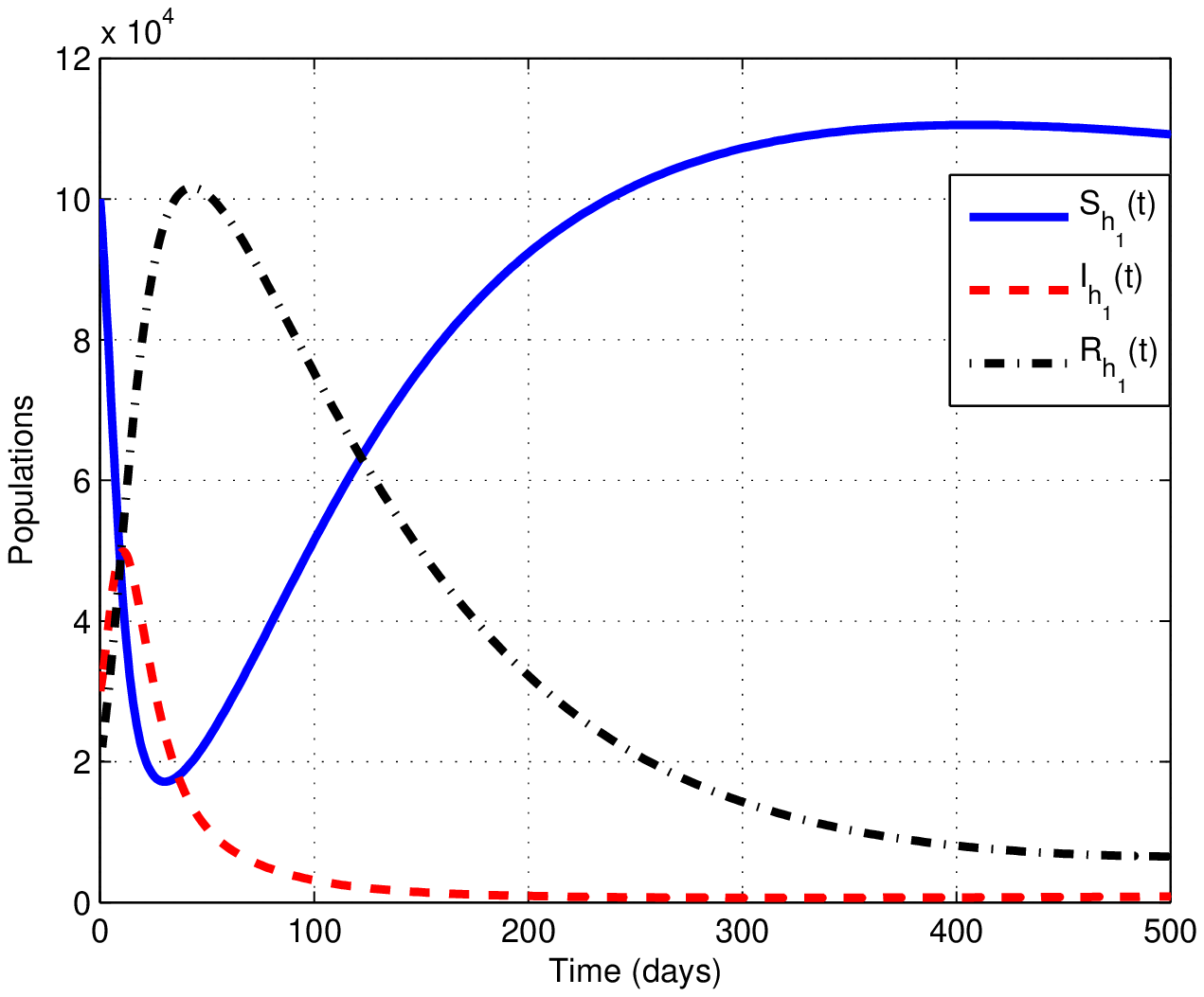}
\includegraphics[width=7 cm, height=7.0 cm]{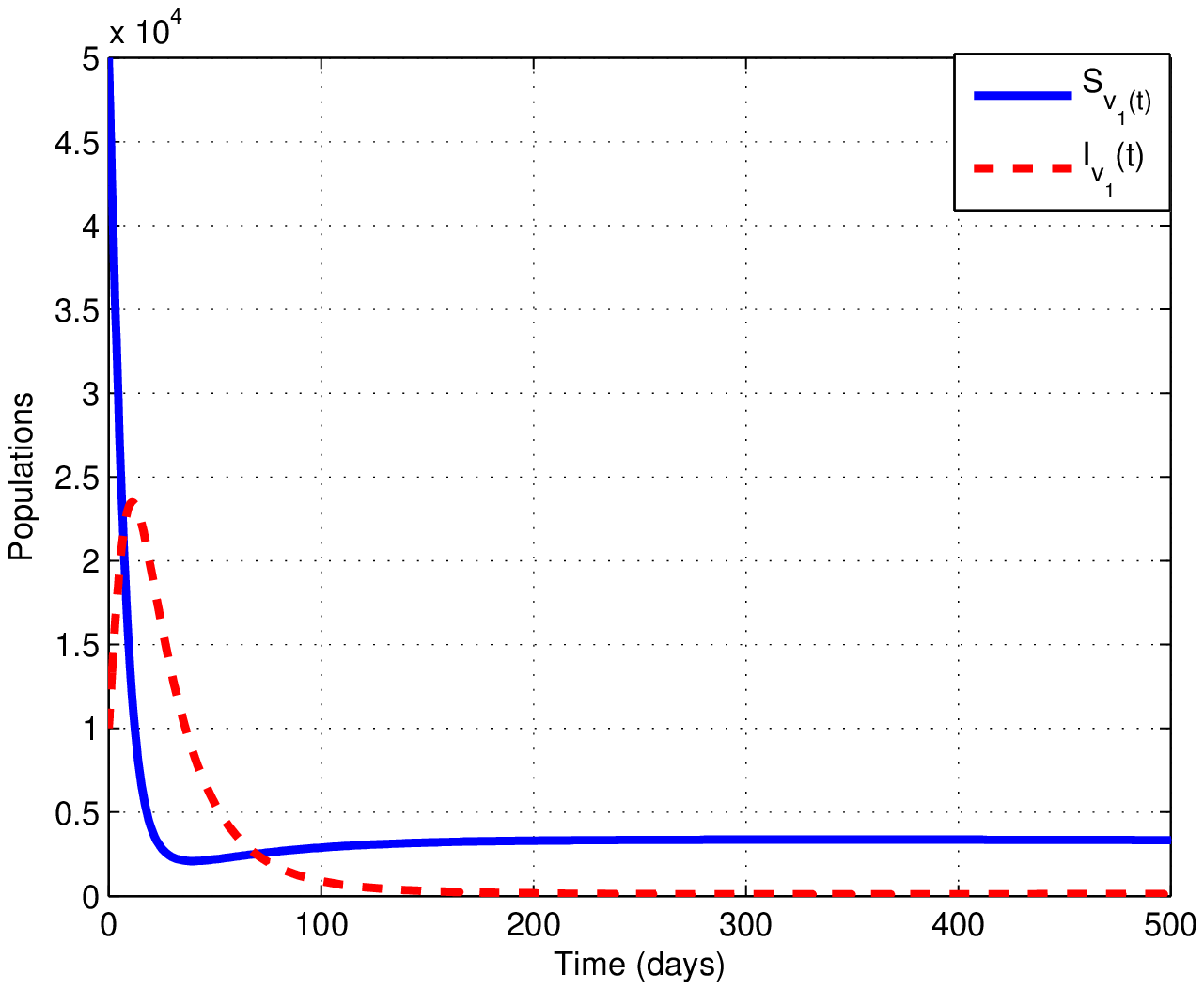} \\
\includegraphics[width=7 cm, height=7.0 cm]{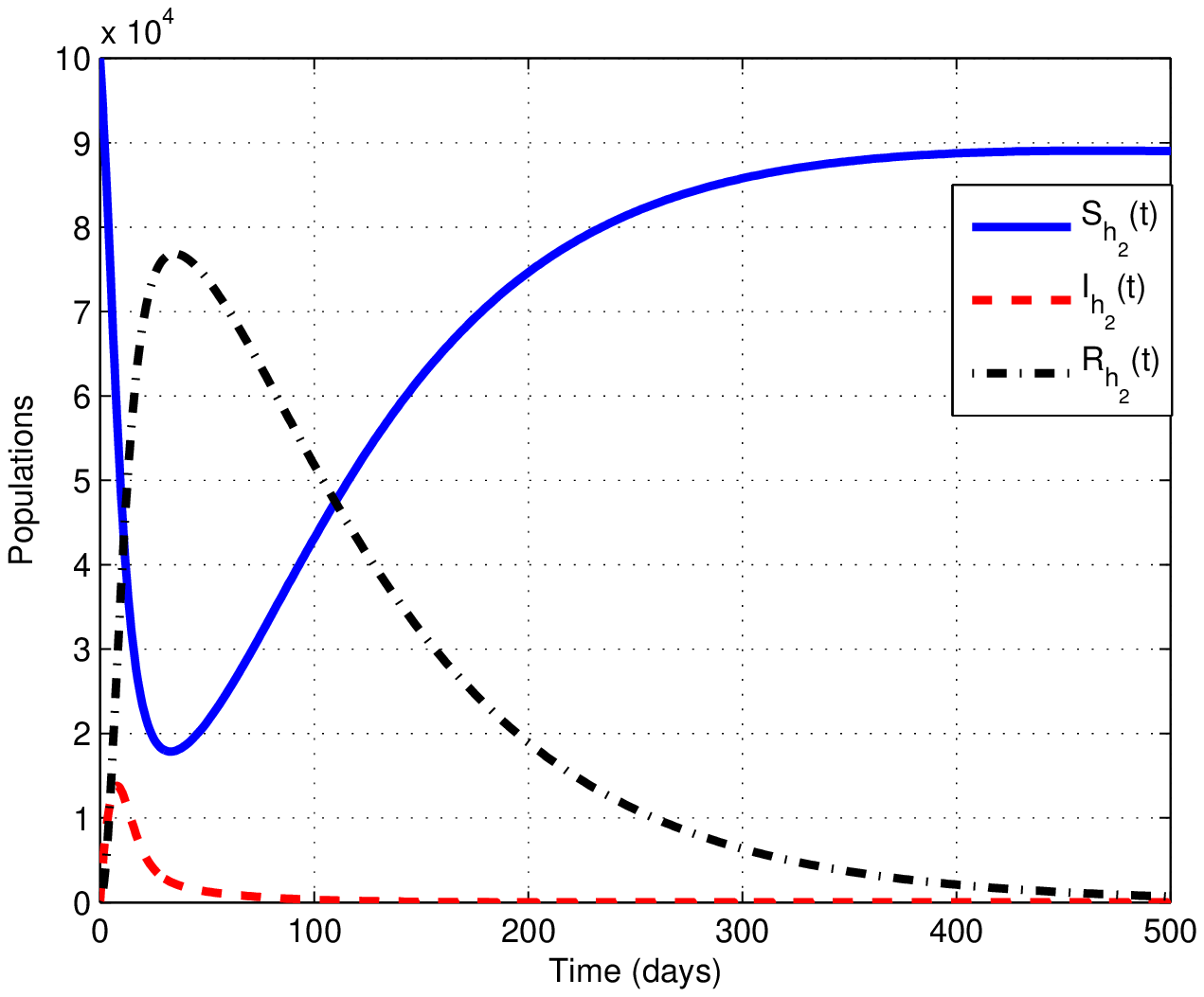}
\includegraphics[width=7 cm, height=7.0 cm]{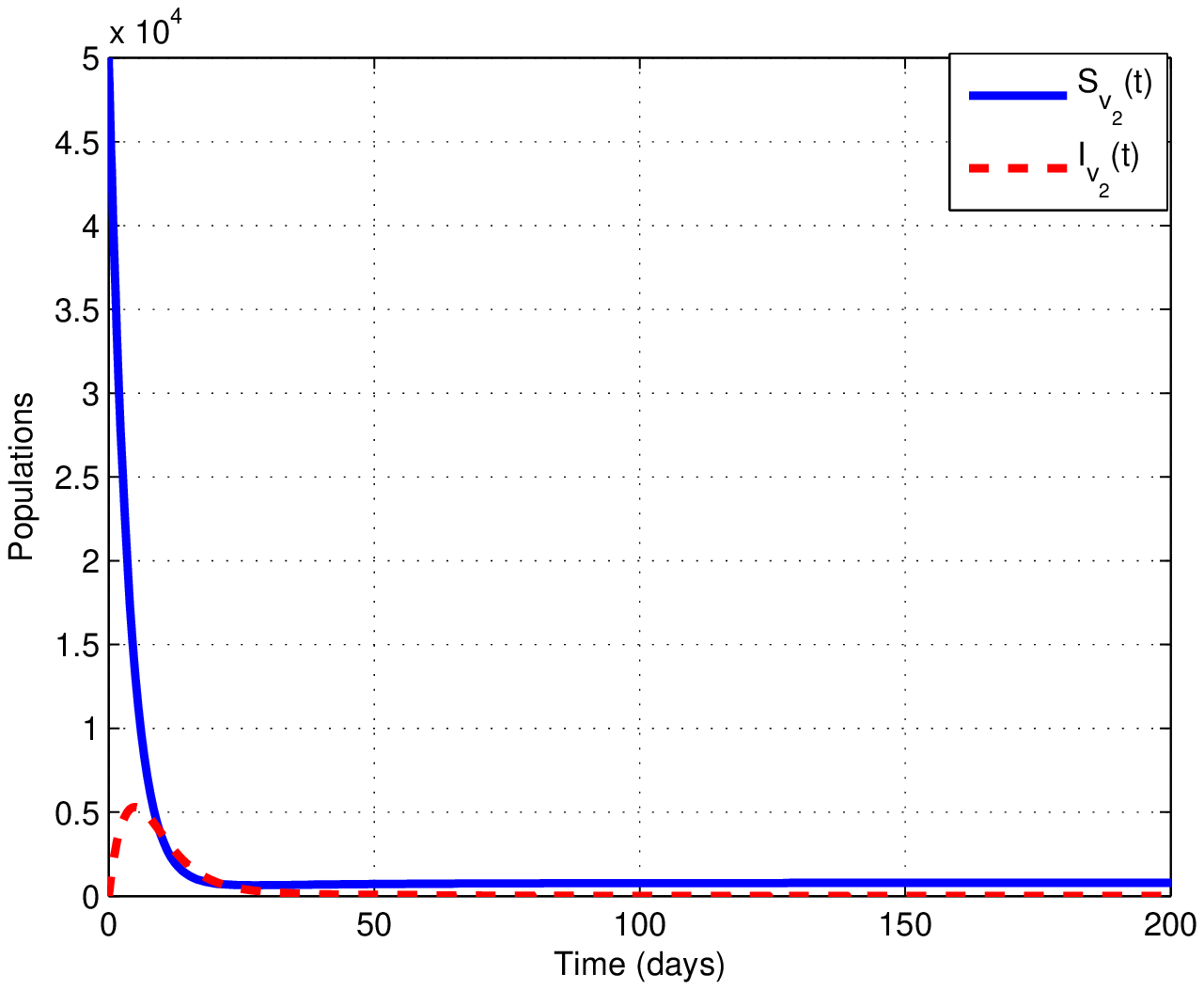}
\caption{ Numerical simulations of strongly--coupled system (\ref{modelocomp}) using data from \cite{romero2018optimal} ($\lambda_{11}=\lambda_{22}=0.4$ and $\lambda_{12}=\lambda_{21}=0.6$).  The  initial condition is (100000, 30000, 20000,  50000, 10000, 100000, 0, 0, 50000, 0). Here,  $\mathcal{R}_{0_1}=1.01$, $\mathcal{R}_{0_2}=0.9$ and $\mathcal{\mathcal{R}}_0=2.9$.}
\label{strong}
\end{figure}
%%%%%%%%%%%%%%%%%%%%%%%%%%%
\subsection{Local sensitivity analysis of parameters}
In this subsection we determine the sensitivity indices of the parameters to the $\mathcal{R}_0$, considering strongly-- coupling and data from  \cite{romero2018optimal} .  The sensitivity indices are computed through  the \textit{normalized forward sensitivity index} \cite{chitnis2008determining},  which allow us to measure the relative change of the variable $\mathcal{R}_0$ when a parameter changes. When the variable is a differentiable function of the parameter, the sensitivity index may be alternatively defined using partial derivatives \cite{chitnis2008determining}.  If we denote the variable as $u$ which depends on a parameter  $p$, the sensitivity index is defined by
\begin{equation}\label{indice}
    \Gamma_p^u\doteq \frac{\partial u}{\partial p}\frac{p}{u}.
\end{equation}
\noindent
Given the explicit formula for $\mathcal{R}_0$ in (\ref{R02}), we determine an analytical expression for the
sensitivity indices of $\mathcal{R}_0$ with respect to each parameter that comprise it.  In Table \ref{tabla3valoresindices} we show the values of the sensitivity indices, where P1 and P2 mean patch 1 and patch 2, respectively.

\begin{table}[H]
\begin{center}
{\footnotesize
\caption{ Sensitivity indices to the $\mathcal{R}_0$ with respect to parameters.}\label{tabla3valoresindices}
\begin{tabular}{lll|lll}
\hline
  Parameter &  Index P1 & Index P2  & Parameter &  Index P1 & Index P2          \\ \hline
 $\Lambda_{h_i}$   & -0.033     & -0.4     &       $\lambda_{11}$   & 0.00012                 &0.0086                  \\
$\omega_i$       & 0  &  0    &       $\lambda_{12}$   & 0.0042                  &0.0042                          \\
  $\beta_{h_i}$   & 0.01     & 0.50         &       $\lambda_{22}$   & 0.0030                  &0.0030                              \\
  $\beta_{v_i}$   & 0.09     & 0.49                & $\lambda_{21}$   &0.3887                  &0.3887                          \\
  $\epsilon_i$     &0.09                 & 0.90                 &$q_{1i}$         & -0.003                &-0.0014           \\
  $\mu_{h_i}$      & 0.012             &-0.49                & $q_{2i}$         & -0.1164                 &   -0.00058            \\
  $\mu_{v_i}$      & -0.011                & -0.012              &$\xi_{1i}$       & -0.3488                         &-0.1405                    \\
  $\rho_i$         & -0.0053               & -0.0082       &           $\xi_{2i}$       &  0.0034                       &  0.0045                 \\
  $\delta_i$       &-0.0033                &-0.0027             & $\theta_{1i}$    &   -0.54                       & -0.65               \\
  $\Lambda_{v_i}$ & 0.0133                 & 0.50               & $\theta_{2i}$    &   -0.45                       &  -0.98           \\ \hline
\end{tabular}}
 \end{center}
\end{table}

From Table \ref{tabla3valoresindices}, in both rural (patch 1) and urban (patch 2) areas, $ \mathcal{R}_0 $ is more sensitive to the parameters corresponding to recovery rate due to the drug $\theta_{1i}$ with $i=1,2$ and death rate due to the insecticides $\theta_{2i}$ with $i=1,2$. An interpretation of these indices is given as follows: in RA, given that $\Gamma_{\theta_{11}}=-0.54$,  increasing (or decreasing) $\theta_{11}$ in  10\% implies  that $\mathcal{R}_0$ decreases (or increases) in 5.4\%. An analogous reasoning can be made for  the others sensitivity indices.
The information provided by the sensitivity indices to the $\mathcal{R}_0$, will be used in the next section, in which we will propose some control strategies for the malaria disease.

%%%%%%%%%%%%%%%%%%%%%%%%
%%%%%%%%%%%%%%%%%%%%%%%%%%%%optimal control problem

\section{Optimal control problem}\label{Seccion:optimalcontrol}
In this section an optimal control problem applied to the model (\ref {modelocomp}) is formulated. Here,  we are going to consider that the parameters corresponding to recovery rate due to the drug and death rate due to the insecticides  $\theta_{ij}$ with $i,j=1,2$ will be the controls, therefore they will be functions depending on time. The first objective will be to minimize a performance index or cost function by the use of drugs and insecticides. For this purpose, we assume that $\theta_{i1}$ with $i=1,2$ and $\theta_{j2}$ with $j=1,2$ are the controls by drugs and insecticides, respectively,  which assume values between $ 0 $ and $ 1 $, where $ \theta_{ij} = 0 $ is assumed if the use of drugs (or insecticides) is ineffective and $ \theta_{ij} = 1 $ if the use of drugs (or insecticides) is completely effective, that is, all individuals recover with medication and all mosquitoes die with insecticides. In this sense, for $i$ and $j$ fixed,  the control variable $\theta_{ij} (t) $ provides information about  amount of drug or insecticides  that must be supplied at time $ t $.
\par
The second objective will be to minimize the number of infected humans and infected mosquitoes in  each patch. For this purpose, the following performance index or cost function is considered:

\begin{equation}\label{funciondecostomodelo1def}
 J[\bm{\theta}]=\int_{0}^{T}\left[c_1(I_{h_1}+I_{v_1})+c_2(I_{h_2}+I_{v_2})+\frac{1}{2}(d_1\theta_{11}^2+d_2\theta_{21}^2+d_3\theta_{12}^2+d_4\theta_{22}^2)  \right]dt,
\end{equation}
\noindent
where $\bm{\theta}=(\theta_{11}, \theta_{21}, \theta_{12}, \theta_{22})$ is the vector of controls, $ c_1 $ and $ c_2 $ represent social costs, which depend on the number of individuals with malaria and the number of mosquitoes with the parasite, and $\frac{1}{2}(d_1\theta_{11}^2+d_2\theta_{21}^2+d_3\theta_{12}^2+d_4\theta_{22}^2) $  defines the absolute costs associated with the control strategies, such as, implementation, ordering, distribution, marketing, among others. For calculation purposes, we will denote to the integrand of the performance index given on (\ref{funciondecostomodelo1def}) as
\begin{equation}\label{f0}
    f_0(t, \mathbf{X}, \bm {\theta})=c_1(I_{h_1}+I_{v_1})+c_2(I_{h_2}+I_{v_2})+\frac{1}{2}(d_1\theta_{11}^2+d_2\theta_{21}^2+d_3\theta_{12}^2+d_4\theta_{22}^2)  ,
\end{equation}
\noindent where $\mathbf{X}$ represents the vector of states.
\par
With the above considerations, the following control problem is formulated.
\begin{equation}\label{modelocontrolcomp}
  \left\{ \begin{array}{ll}
  &\min_{\bm{\theta}\in U} J[\bm{\theta}]=\int \limits_{0}^{T}f_0(t. \mathbf{X}, \bm{\theta})dt\\ \\
          &\dot{S_{h_i}} = \Lambda_{h_i}+\omega_iR_{h_i}-S_{h_i}\sum_{j=1}^2\lambda_{ij}\beta_{hj}\epsilon_j\frac{I_{v_j}}{N_{h_j}}-\mu_{h_i}S_{h_i} \\ \\
  &\dot{I_{h_i}} = \sum_{j=1}^2\lambda_{ij}\beta_{hj}\epsilon_j\frac{I_{v_j}}{N_{h_j}}-\xi_{1i}\theta_{1i}(t)(1-q_{1i})I_{h_i}-(\delta_i+\rho_i+\mu_{h_i})I_{h_i} \\ \\
  &\dot{R_{h_i}} = \xi_{1i}\theta_{1i}(t)(1-q_{1i})I_{h_i}+\delta_i I_{h_i}-(\omega_i+\mu_{h_i})R_{h_i} \\ \\
  &\dot{S_{v_i}} = \Lambda_{v_i}-S_{v_i}\sum_{j=1}^2\lambda_{ji}\beta_{vj}\epsilon_j\frac{I_{h_j}}{N_{h_j}}-\xi_{2i}\theta_{2i}(t)(1-q_{2i})-\mu_{v_i}S_{v_i} \\  \\
   &\dot{I_{v_i}} = S_{v_i}\sum_{j=1}^2\lambda_{ji}\beta_{vj}\epsilon_j\frac{I_{h_j}}{N_{h_j}}-\xi_{2i}\theta_{2i}(t)(1-q_{2i})-\mu_{v_i}I_{v_i}, \quad \text{for} \quad i=1,2 \\ \\
   &\mathbf{X}(0) = ( \mathbf{N}_{h_1}(0),  \mathbf{N}_{v_1}(0),  \mathbf{N}_{h_2}(0),  \mathbf{N}_{v_2}(0))=\mathbf{X}_0 \\ \\
 &\mathbf{ X}(T) = (\mathbf{N}_{h_1}(T),  \mathbf{N}_{v_1}(T),  \mathbf{N}_{h_2}(T), \mathbf{N}_{v_2}(T))=\mathbf{X}_1.
          \end{array}
  \right.
\end{equation}

In above the formulation, we assume an initial time $ t_0 = 0 $, a final time $T$ fixed which represents the implementation time of the control strategies,  free dynamic variables $ \mathbf{X}_1$  in the final time, and the initial condition $ \mathbf{X}_0$  being a non--trivial equilibrium of the system (\ref {modelocomp}). Additionally, we assume that the  controls  are in a \textit{set of admissible controls} $\mathcal{U}$ which contains to all Lebesgue measurables  functions with values in the interval $[0,1]$ and $t\in[0,T]$.

%%%%%%%%%%%%%%%%%%%%%%%exstence optimal controls

\subsection{Existence of an optimal control}
In this section,  we use the classic existence theorem proposed by Lenhart and Workman  \cite{lenhart2007optimal} to prove the existence of an  optimal control $\bm {\theta}^*$ for the formulation (\ref{modelocontrolcomp}). Let
$U=[0,1]^4$ the set where  $\bm{\theta}$ assumes its values (set of controls), and $f(t,\mathbf{X},\bm\theta)$ the state equations of the right side of (\ref{modelocontrolcomp}). To guarantee the existence of optimal controls, hypotheses (H1) to (H5) from \cite{lenhart2007optimal} must be verified, that is,
\begin{itemize}\label{condicionescontrolpercapita}
\item [(H1)]
\begin{itemize}
        \item [(a)]$|f(t,\mathbf{0},\mathbf{0})| \leq C$
        \item [(b)] $|f_{\mathbf{X}}(t,\mathbf{X},\bm\theta)|  \leq C(1+|\bm\theta|)$
        \item [(c)] $|f_{\bm{\theta}}(t,\mathbf{X},\bm\theta)|\leq C$.
 \end{itemize}
  \item [(H2)] The set of controls   $U$ is  convex.
  \item [(H3)] $f(t,\mathbf{X},\bm\theta)=\alpha(t,\mathbf{X})+\beta(t,\mathbf{X})\bm\theta$.
  \item [(H4)] The integrand of the performance index $f_0(t,\mathbf{X},\bm\theta)$ defined in (\ref{f0}) is convex for $\bm\theta\in U$.
  \item [(H5)] $f_0(t,\mathbf{X},\bm\theta) \geq c_1 |\bm\theta|^b-c_2$ with $c_1>0$ and $b>1$.
\end{itemize}

We will proof the hypothesis (H1)(a) and (H5), since the others are obvious. For this purpose,  the following results are enunciated and proved.

\begin{lemma}\label{cota}
Let
\begin{equation}\label{xi}
    \frac{\xi}{2}=\{\max{\xi_{11}^2(1-q_{11})^2, \xi_{12}^2(1-q_{12})^2, \xi_{21}^2(1-q_{21})^2, \xi_{22}^2(1-q_{22})^2}\}.
\end{equation}
Then

\begin{equation}\label{eqcota}
 |f_{\bm\theta}(t,\mathbf{X},\bm\theta)|\leq \sqrt{\xi\left[\left(\frac{\Lambda_H}{\mu_H}  \right)^2+\left(\frac{\Lambda_V}{\mu_V}  \right)^2 \right]},
\end{equation}

where $\Lambda_H$, $\Lambda_V$, $\mu_H$ and $\mu_V$ are defined on (\ref{lambdas}), and $f_{\bm\theta}(t,\mathbf{X},\bm\theta)$ is the matrix obtained by differentiating  of the state equations of the right side of the system (\ref{modelocontrolcomp}) with respect to $\bm\theta$, whic is given by

\begin{equation}\label{fsubv}
    f_{\bm\theta}(t,\mathbf{X},\bm\theta)=\left(
                 \begin{array}{cccc}
                   0  &  0  & 0 & 0 \\
                -\xi_{11}(1-q_{11})I_{h_1}  & 0 & 0 & 0 \\
                   \xi_{11}(1-q_{11})I_{h_1} & 0 & 0 & 0 \\
                   0 & -\xi_{21}(1-q_{11})S_{v_1} & 0 & 0 \\
                   0 & -\xi_{21}(1-q_{11})S_{v_1} & 0 & 0 \\
                  0 & 0 & 0 & 0 \\
                 0 & 0 & -\xi_{12}(1-q_{12})I_{h_2} & 0 \\
                  0 &  0 &  \xi_{12}(1-q_{12})I_{h_2} & 0 \\
                  0 & 0 & 0 & -\xi_{22}(1-q_{22})S_{v_2} \\
                  0 & 0 & 0 & -\xi_{22}(1-q_{22})I_{v_2}
                 \end{array}
               \right).
\end{equation}
\end{lemma}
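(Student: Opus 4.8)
The plan is to interpret $|f_{\bm\theta}(t,\mathbf{X},\bm\theta)|$ as the Euclidean (Frobenius) norm of the $10\times4$ matrix displayed in (\ref{fsubv}), so that $|f_{\bm\theta}|^2$ equals the sum of the squares of its entries. First I would read those entries off (\ref{fsubv}): only eight of the forty entries are nonzero, each being (up to sign) a product of one nonnegative state component with a factor of the form $\xi_{k}(1-q_{\ell})$, and each coefficient $\xi_{k}^2(1-q_\ell)^2$ is bounded by $\xi/2$ directly from the definition (\ref{xi}). Summing the eight squared entries and pulling out this common bound gives an estimate of the shape
\begin{equation*}
|f_{\bm\theta}(t,\mathbf{X},\bm\theta)|^2 \;\leq\; \frac{\xi}{2}\Big( 2I_{h_1}^2 + 2I_{h_2}^2 + 2S_{v_1}^2 + S_{v_2}^2 + I_{v_2}^2 \Big),
\end{equation*}
where the coefficients $2,2,2,1,1$ simply count how many times each component occurs in (\ref{fsubv}).

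The remaining task is to bound the state components. Here I would invoke the invariance of the region $\bar\Omega$ --- which holds by the remark following (\ref{omegabar}) and is obtained from Lemma \ref{teoinvarianzaone} --- together with the nonnegativity of every compartment: along any solution, each human compartment is at most $N_H\leq\Lambda_H/\mu_H$ and each mosquito compartment is at most $N_V\leq\Lambda_V/\mu_V$, with $\Lambda_H,\mu_H,\Lambda_V,\mu_V$ as in (\ref{lambdas}). Since $I_{h_1},I_{h_2}\geq 0$ with $I_{h_1}+I_{h_2}\leq N_H$, this gives $I_{h_1}^2+I_{h_2}^2\leq N_H^2\leq(\Lambda_H/\mu_H)^2$, and likewise $S_{v_1}^2+S_{v_2}^2+I_{v_2}^2\leq N_V^2\leq(\Lambda_V/\mu_V)^2$. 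Substituting these (after first bounding $2S_{v_1}^2+S_{v_2}^2+I_{v_2}^2$ by $2(S_{v_1}^2+S_{v_2}^2+I_{v_2}^2)$) yields
\begin{equation*}
|f_{\bm\theta}(t,\mathbf{X},\bm\theta)|^2 \;\leq\; \frac{\xi}{2}\Big( 2(\Lambda_H/\mu_H)^2 + 2(\Lambda_V/\mu_V)^2 \Big) = \xi\Big[(\Lambda_H/\mu_H)^2 + (\Lambda_V/\mu_V)^2\Big],
\end{equation*}
and taking square roots gives (\ref{eqcota}).

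The computation is elementary once the matrix norm is fixed; the only point that requires care --- and the step I would spell out --- is the grouping in the last paragraph, where one must use $a^2+b^2\leq(a+b)^2$ for nonnegative $a,b$ rather than crudely bounding each square by the global maximum of the compartments, so that the accumulated constant collapses to exactly $\xi$ and (\ref{eqcota}) emerges with the stated constant rather than a larger one.
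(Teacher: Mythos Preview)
Your proposal is correct and follows essentially the same route as the paper: compute the Frobenius (Euclidean) norm of the matrix in (\ref{fsubv}), bound each coefficient $\xi_{k}^2(1-q_{\ell})^2$ by $\xi/2$ via (\ref{xi}), and then control the state components through the invariance of $\bar\Omega$. Your explicit use of $a^2+b^2\leq(a+b)^2$ for nonnegative compartments to collapse the constant to exactly $\xi$ is in fact a bit more careful than the paper's own presentation of the intermediate inequality.
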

\begin{proof}
Computing the \textit{Euclidean norm} of matrix (\ref{fsubv}), we obtain
{\scriptsize
\begin{equation*}
    \begin{array}{ll}
    | f_{\bm\theta}(t,\mathbf{X},\bm\theta)|   & \left(2\xi_{11}^2(1-q_{11})^2I_{h_1}^2+\xi_{21}^2(1-q_{21})^2(S_{v_1}^2+I_{v_1}^2)+2\xi_{12}^2(1-q_{12})^2I_{h_2}^2+\xi_{22}^2(1-q_{22}^2)(S_{v_2}^2+I_{v_2}^2)\right)^{1/2}\\ \\
      &\leq \left(2[\xi_{11}^2(1-q_{11})^2+\xi_{12}^2(1-q-{12})^2]\left(\frac{\Lambda_H}{\mu_H}  \right)^2+2[\xi_{21}^2(1-q_{21})^2+\xi_{22}^2(1-q_{22})^2]\left(\frac{\Lambda_V}{\mu_V}  \right)^2\right)^{1/2} \\ \\
      &\leq \left( \xi\left[ \left(\frac{\Lambda_H}{\mu_H}\right)^2+\left(\frac{\Lambda_V}{\mu_V}\right)^2 \right]\right)=|f_{\bm\theta}(t,X,\bm\theta)|.
    \end{array}
\end{equation*}}
\end{proof}

\begin{lemma}\label{cota7}
The integrand of the performance index  satisfies
\begin{equation*}\label{eqcota7}
  f_0(t,\mathbf{X},\bm\theta) \geq \frac{1}{2}\min{\{d_1,d_2,d_3,d_4 \}}(\theta_{11}^2+\theta_{21}^2+\theta_{12}^2+\theta_{22}^2).
\end{equation*}
\end{lemma}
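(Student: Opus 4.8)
The plan is to read the bound straight off the structure of $f_0$, using only nonnegativity of the infected compartments together with a trivial lower bound on the quadratic control penalty. First I would invoke Lemma \ref{teoinvarianzaone}, applied componentwise to the two--patch system (\ref{modelocomp}) (whose invariant region $\bar\Omega$ was already noted to follow from that lemma): along any admissible trajectory issued from a nonnegative initial condition all state variables remain nonnegative on $[0,T]$, in particular $I_{h_1},I_{v_1},I_{h_2},I_{v_2}\ge 0$. Since the social--cost weights $c_1,c_2$ are positive, this already gives
\[
 c_1(I_{h_1}+I_{v_1})+c_2(I_{h_2}+I_{v_2})\ge 0 ,
\]
so in $f_0$ (see (\ref{f0})) only the control term $\frac12\bigl(d_1\theta_{11}^2+d_2\theta_{21}^2+d_3\theta_{12}^2+d_4\theta_{22}^2\bigr)$ needs a genuine estimate.

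Next I would set $m:=\min\{d_1,d_2,d_3,d_4\}$ and bound each coefficient below by $m$. Because $d_k\ge m$ for each $k$ and each squared control is nonnegative, the termwise inequalities $d_1\theta_{11}^2\ge m\theta_{11}^2$, $d_2\theta_{21}^2\ge m\theta_{21}^2$, $d_3\theta_{12}^2\ge m\theta_{12}^2$, $d_4\theta_{22}^2\ge m\theta_{22}^2$ hold; summing them and multiplying by $\frac12$ yields
\[
 \frac12\bigl(d_1\theta_{11}^2+d_2\theta_{21}^2+d_3\theta_{12}^2+d_4\theta_{22}^2\bigr)\ \ge\ \frac12\,m\,\bigl(\theta_{11}^2+\theta_{21}^2+\theta_{12}^2+\theta_{22}^2\bigr).
\]
Adding the two displayed estimates gives $f_0(t,\mathbf{X},\bm\theta)\ge \frac12\min\{d_1,d_2,d_3,d_4\}(\theta_{11}^2+\theta_{21}^2+\theta_{12}^2+\theta_{22}^2)$, which is the assertion.

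There is no real obstacle here; it is a short chain of inequalities once the right facts are lined up. The only points worth flagging are that the $d_k$ must be assumed strictly positive (so that $m>0$ and the bound is non--vacuous) and that the nonnegativity of the infected states is a genuine consequence of the invariance lemma rather than an extra hypothesis. Finally I would observe that, writing $|\bm\theta|^2=\theta_{11}^2+\theta_{21}^2+\theta_{12}^2+\theta_{22}^2$ and taking $b=2$, this lemma is precisely what is needed to verify hypothesis (H5) of the existence theorem of \cite{lenhart2007optimal}, with coercivity constant $\frac12 m$ and additive constant $0$.
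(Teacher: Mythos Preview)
Your argument is correct and follows exactly the route the paper takes: drop the nonnegative state--dependent term $c_1(I_{h_1}+I_{v_1})+c_2(I_{h_2}+I_{v_2})$ and then bound each weight $d_k$ below by $\min\{d_1,d_2,d_3,d_4\}$. If anything, you are more careful than the paper in justifying the nonnegativity of the infected compartments via the invariance lemma and in making explicit the link to hypothesis (H5) with $b=2$; the paper simply writes the chain of inequalities without comment (and in fact omits the $d_k$ in its first displayed line, a typo your version avoids).
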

\begin{proof}
\begin{equation}\label{H5}
       \begin{array}{ll}
          f_0(t,\mathbf{X},\bm\theta) & = c_1(I_{h_1}+I_{v_1})+c_2(I_{h_2}+I_{v_2})+\frac{1}{2}(\theta_{11}^2+\theta_{21}^2+\theta_{12}^2+\theta_{22}^2) \\ \\
                 &\geq \frac{1}{2}(\theta_{11}^2+\theta_{21}^2+\theta_{12}^2+\theta_{22}^2) \\ \\
                 &\geq \frac{1}{2}\min{\{d_1,d_2,d_3,d_4 \}}(\theta_{11}^2+\theta_{21}^2+\theta_{12}^2+\theta_{22}^2).
       \end{array}
\end{equation}
\end{proof}
\begin{remark}\label{notacota7}
Hypothesis (H5) is fullfied by taking $b=2$, $c_2=0$ and $c_1=1/2\min{\{d_1,d_2,d_3,d_4 \}}$ in the last expression of (\ref{H5}).
\end{remark}

\subsection{Deduction of an optimal solution}
In this section, the Pontryaguin Principle for bounded controls \cite{lenhart2007optimal} is  used to compute the optimal controls of the problem (\ref{modelocontrolcomp}). First, let us observe that the Hamiltonian associated to (\ref{modelocontrolcomp}), is given by

\begin{equation}\label{hamiltonianomodelo1denso}
    \begin{array}{ll}
      &H(t, \mathbf{X}(t), \bm\theta(t), \mathbf{Z}(t))=f_0(t, \mathbf{X}, \bm\theta)+\mathbf{Z}\cdot f(t,\mathbf{X},\bm\theta)= \\ \\
      &c_1I_{h_1}+c_1I_{v_1}+c_2I_{h_2}+c_2I_{v_2}+\frac{1}{2}\left[d_1\theta_{11}^2+d_2\theta_{21}^2+d_3\theta_{12}^2+d_4\theta_{22}^2  \right]  \\\\
   +&z_1\left[ \Lambda_{h_1}+\omega_1 R_{h_1}-\left[\lambda_{11}\beta_{h_1}\epsilon_1\frac{I_{v_1}}{N_{h_1}}+\lambda_{12}\beta_{h_2}\epsilon_2\frac{I_{v_2}}{N_{h_2}}\right]S_{h_1}-\mu_{h_1}S_{h_1} \right] \\ \\
   +&z_2\left[\left[\lambda_{11}\beta_{h_1}\epsilon_1\frac{I_{v_1}}{N_{h_1}}+\lambda_{12}\beta_{h_2}\epsilon_2\frac{I_{v_2}}{N_{h_2}}\right]S_{h_1}-\xi_{11}\theta_{11}(t)(1-q_{11})I_{h_1}-(\delta_1+\rho_1+\mu_{h_1})I_{h_1} \right] \\ \\
   +&z_3\left[\xi_{11}\theta_{11}(t)(1-q_{11})I_{h_1}+\delta_1I_{h_1}-(\omega_1+\mu_{h_1})R_{h_1}  \right]\\\\
   +&z_4\left[\Lambda_{v_1}-\left[\lambda_{11}\beta_{v_1}\epsilon_1\frac{I_{h_1}}{N_{h_1}}+\lambda_{21}\beta_{v_2}\epsilon_2\frac{I_{h_2}}{N_{h_2}}  \right]S_{v_1} -\xi_{21}\theta_{21}(t)(1-q_{21})S_{v_1}-\mu_{v_1}S_{v_1} \right] \\ \\
   +&z_5\left[\left[\lambda_{11}\beta_{v_1}\epsilon_1\frac{I_{h_1}}{N_{h_1}}+\lambda_{21}\beta_{v_2}\epsilon_2\frac{I_{h_2}}{N_{h_2}}  \right]S_{v_1} -\xi_{21}\theta_{21}(t)(1-q_{21})I_{v_1}-\mu_{v_1}I_{v_1} \right]\\ \\
  + & z_6\left[\Lambda_{h_2}+\omega_2 R_{h_2}-\left[\lambda_{22}\beta_{h_2}\epsilon_2\frac{I_{v_2}}{N_{h_2}}+\lambda_{21}\beta_{h_1}\epsilon_1\frac{I_{v_1}}{N_{h_1}}\right]S_{h_2}-\mu_{h_2}S_{h_2} \right] \\ \\
  + & z_7
   \left[\left[\lambda_{22}\beta_{h_2}\epsilon_2\frac{I_{v_2}}{N_{h_2}}+\lambda_{21}\beta_{h_1}\epsilon_1\frac{I_{v_1}}{N_{h_1}}\right]S_{h_2}-\xi_{12}\theta_{12}(t)(1-q_{12})I_{h_2}-(\delta_2+\rho_2+\mu_{h_2})I_{h_2} \right]
   \\ \\
  + &z_8\left[\xi_{12}\theta_{12}(t)(1-q_{12})I_{h_2}+\delta_2I_{h_2}-(\omega_2+\mu_{h_2})R_{h_2} \right] \\ \\
  + &z_9\left[\Lambda_{v_2}-\left[\lambda_{22}\beta_{v_2}\epsilon_2\frac{I_{h_2}}{N_{h_2}}+\lambda_{12}\beta_{v_1}\epsilon_1\frac{I_{h_1}}{N_{h_1}}  \right]S_{v_2}-\xi_{22}\theta_{22}(t)(1-q_{22})S_{v_2} -\mu_{v_2}S_{v_2}  \right] \\ \\
  + &z_{10}\left[\left[\lambda_{22}\beta_{v_2}\epsilon_2\frac{I_{h_2}}{N_{h_2}}+\lambda_{12}\beta_{v_1}\epsilon_1\frac{I_{h_1}}{N_{h_1}}  \right]S_{v_2}-\xi_{22}\theta_{22}(t)(1-q_{22})I_{v_2}-\mu_{v_2}I_{v_2}  \right],
    \end{array}
\end{equation}
\noindent
where $\mathbf{Z}=(z_1, z_2,...,z_{10})$ is the vector of  \textit{adjoint variables} which determine the adjoint system. The adjoint system and the state equations of (\ref{modelocontrolcomp}) define  the optimal system. The main result of this section is summarized in the following theorem.

\begin{theorem}\label{teo6}
There are an optimal solution $\mathbf{X}^*(t)$ that minimize $J$ in $[0, T]$, and an adjoint vector of adjoint functions  $\mathbf{Z}$ such that
{\footnotesize
\begin{equation}\label{sistemaadjuntomodelo11}
\left\{ \begin{array}{ll}
% \nonumber to remove numbering (before each equation)
&\dot{z_1} = \mu_{h_1}z_1+\lambda_{12}\beta_{h_2}\epsilon_2\frac{I_{v_2}}{N_{h_2}}(z_1-z_2)+\lambda_{11}\beta_{h_1}\epsilon_1I_{v_1}\frac{N_{h_1}-S_{h_1}}{N_{h_1}^2}(z_1-z_2) \\ &+\lambda_{11}\beta_{v_1}\epsilon_1\frac{I_{h_1}S_{v_1}}{N_{h_1}^2}(z_5-z_4)
+\lambda_{21}\beta_{h_1}\epsilon_1\frac{I_{v_1}S_{h_2}}{N_{h_1}^2}(z_7-z_6)+ \lambda_{12}\beta_{v_1}\epsilon_1 \frac{I_{h_1}S_{v_2}}{N_{h_1}^2}(z_{10}-z_9)\\ \\
 &\dot{z_2} = -c_1-[\xi_{11}\theta_{11}(1-q_{11})+\delta_1]z_3+\left[\delta_1+\rho_1+\mu_{h_1}-\xi_{11}\theta_{11}(1-q_{11})\right]z_2 \\
 &+\lambda_{11}\beta_{h_1}\epsilon_1\frac{I_{v_1}S_{h_1}}{N_{h_1}^2}(z_2-z_1)
 \lambda_{11}\beta_{v_1}\epsilon_1S_{v_1}\frac{(N_{h_1}-I_{h_1})}{N_{h_1}^2}(z_4-z_5)+\lambda_{21}\beta_{h_1}\epsilon_1\frac{I_{v_1}S_{h_2}}{N_{h_1}^2}(z_7-z_6)\\
 &+\lambda_{12}\beta_{v_1}\epsilon_1S_{v_2}\frac{N_{h_1}-I_{h_1}}{N_{h_1}^2}(z_9-z_{10}) \\ \\
& \dot{z_3}= -\omega_1z_1+(\omega_1+\mu_{h_1})z_3+\lambda_{11}\beta_{h_1}\epsilon_1\frac{I_{v_1}S_{h_1}}{N_{h_1^2}}(z_2-z_1)\\
    &+\lambda_{11}\beta_{v_1}\epsilon_1\frac{I_{h_1}S_{v_1}}{N_{h_1}^2}(z_5-z_4)+\lambda_{21}\beta_{h_1}\epsilon_1\frac{I_{v_1}S_{h_2}}{N_{h_1}^2}(z_7-z_6)+
\lambda_{12}\beta_{v_1}\epsilon_1\frac{I_{h_1}S_{v_2}}{N_{h_1}^2}(z_{10}-z_9)\\ \\
&\dot{z_4}= [\xi_{21}\theta_{21}(1-q_{21})+\mu_{v_1}]z_4+\left[\lambda_{11}\beta_{v_1}\epsilon_1\frac{I_{h_1}}{N_{h_1}}+\lambda_{21}\beta_{v_2}\epsilon_2\frac{I_{h_2}}{N_{h_2}}\right](z_4-z_5)\\
  \\ \\
&\dot{z_5}= -c_1+[\xi_{21}\theta_{21}(1-q_{21})+\mu_{v_1}]z_5+\lambda_{11}\beta_{h_1}\epsilon_1\frac{S_{h_1}}{N_{h_1}}(z_1-z_2)+\lambda_{21}\beta_{h_1}\epsilon_1\frac{S_{h_2}}{N_{h_1}}(z_6-z_7)\\ \\ &\dot{z_6}= -\mu_{h_2}z_6+\lambda_{12}\beta_{h_2}\epsilon_2\frac{I_{v_2}S_{h_1}}{N_{h_2}^2}(z_2-z_1)+\lambda_{21}\beta_{v_2}\epsilon_2\frac{I_{h_2}S_{v_1}}{N_{h_2}^2}(z_5-z_4) \\ &+\lambda_{22}\beta_{h_2}\epsilon_2I_{v_2}\frac{N_{h_2}-S_{h_2}}{N_{h_2}^2}(z_6-z_7)+\lambda_{21}\beta_{h_1}\epsilon_1\frac{I_{v_1}}{N_{h_1}}(z_6-z_7)+
  \lambda_{22}\beta_{v_2}\epsilon_2\frac{I_{h_2}S_{v_2}}{N_{h_2}^2}(z_9-z_{10})\\
 \\  &\dot{z_7}=-c_1-(\xi_{12}\theta_{12}(1-q_{12})+\delta_2)z_8+[\xi_{12}\theta_{12}(1-q_{12})+\delta_2+\rho_2+\mu_{h_2}]z_7 \\&+\lambda_{12}\beta_{h_2}\epsilon_2\frac{I_{v_2}S_{h_1}}{N_{h_2}^2}(z_2-z_1)+
 \lambda_{21}\beta_{v_2}\epsilon_2S_{v_1}\frac{(N_{h_2}-I_{h_2})}{N_{h_2}^2}(z_5-z_4) \\&+\lambda_{22}\beta_{h_2}\epsilon_2\frac{I_{v_2}S_{h_2}}{N_{h_2}^2}(z_7-z_6)+
 \lambda_{22}\beta_{v_2}\epsilon_2S_{v_2}\frac{N_{h_2}-I_{h_2}}{N_{h_2}^2}(z_9-z_{10}) \\
   \\ &\dot{z_8}= -\omega_2z_6+(\omega_2+\mu_{h_2})z_8+\lambda_{12}\beta_{h_2}\epsilon_2\frac{I_{v_2}S_{h_1}}{N_{h_2}^2}(z_2-z_1)+
    \\&\lambda_{21}\beta_{v_2}\epsilon_2\frac{I_{h_2}S_{v_1}}{N_{h_2}^2}(z_5-z_4)+\lambda_{22}\beta_{h_2}\epsilon_2\frac{I_{v_2}S_{h_2}}{N_{h_2}^2}(z_7-z_6)+
\lambda_{22}\beta_{v_2}\epsilon_2\frac{I_{h_2}S_{v_2}}{N_{h_2}^2}(z_{10}-z_9)\\
\\ &\dot{z_9}=[\xi_{22}\theta_{22}(1-q_{22})+\mu_{v_2}]z_9+\left[\lambda_{22}\beta_{v_2}\epsilon_2\frac{I_{h_2}}{N_{h_2}}+\lambda_{12}\beta_{v_1}\epsilon_1\frac{I_{h_1}}{N_{h_1}}\right](z_9-z_{10})\\
  \\
\\ &\dot{z_{10}}= -c_2+[\xi_{22}\theta_{22}(1-q_{22})+\mu_{v_2}]z_{10}+\lambda_{22}\beta_{h_2}\epsilon_2\frac{S_{h_2}}{N_{h_2}}(z_6-z_7)+\lambda_{12}\beta_{h_2}\epsilon_2\frac{S_{h_1}}{N_{h_2}}(z_1-z_2),\\
\\
 \end{array}
  \right.
\end{equation}}

with transversality condition $\mathbf{Z}(t)=\mathbf{0}$ and the following characterization of the controls
\begin{equation}\label{micaracterizacionmodelo1}
 \left\{ \begin{array}{lll}
                \theta_{11}^*= & \min\left\lbrace\max\left\lbrace0,\frac{\xi_{11}(1-q_{11})I_{h_1}(z_2-z_3)}{d_1} \right\rbrace,1\right \rbrace    \\
             \\ \theta_{21}^*= & \min\left\lbrace\max\left\lbrace0, \frac{\xi_{21}(1-q_{21})(S_{v_1}z_4+I_{v_1}z_5)}{d_2}\right\rbrace, 1\right \rbrace    \\
             \\ \theta_{12}^*= & \min\left\lbrace\max\left\lbrace0, \frac{\xi_{12}(1-q_{12})I_{h_2}(z_7-z_8)}{d_3}  \right\rbrace, 1\right\rbrace  \\
             \\ \theta_{22}^*= &  \min\left\lbrace\max\left\lbrace0, \frac{\xi_{22}(1-q_{22})(S_{v_2}z_9+I_{v_2}z_{10})}{d_4}  \right\rbrace, 1\right\rbrace  .
             \end{array}
   \right.
\end{equation}
\end{theorem}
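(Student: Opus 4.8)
The plan is to establish the three assertions of the theorem in turn: (i) existence of an optimal pair $(\mathbf{X}^*,\bm\theta^*)$, (ii) the form of the adjoint system with its transversality condition, and (iii) the explicit characterization of $\bm\theta^*$.

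For (i), I would invoke the classical existence theorem of Fleming--Rishel in the form stated by Lenhart and Workman \cite{lenhart2007optimal}, which requires hypotheses (H1)--(H5). Hypotheses (H2)--(H4) are immediate from the structure of the problem: $U=[0,1]^4$ is compact and convex; the right--hand side $f(t,\mathbf{X},\bm\theta)$ of (\ref{modelocontrolcomp}) is affine in $\bm\theta$, so (H3) holds with $\alpha(t,\mathbf{X})$ and $\beta(t,\mathbf{X})$ read off directly; and $f_0$ in (\ref{f0}) is the sum of a term linear in $\mathbf{X}$ and a positive--definite quadratic form in $\bm\theta$, hence convex in $\bm\theta$. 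Hypothesis (H5) is precisely Lemma \ref{cota7} together with Remark \ref{notacota7}. For the growth conditions (H1), I would first use the two--patch analogue of Lemma \ref{teoinvarianzaone} to ensure that every trajectory issued from $\bar\Omega$ remains in the compact set $\bar\Omega$, so that all state components and the incidence ratios $I_{v_j}/N_{h_j}$, $I_{h_j}/N_{h_j}$ are bounded uniformly in $t$; this yields (H1)(a), the Lipschitz--type estimate (H1)(b) follows by differentiating these bounded rational expressions, and (H1)(c) is Lemma \ref{cota}. Since the admissible control set is non--empty and the state system admits solutions on $[0,T]$, the theorem produces an optimal control $\bm\theta^*$ with associated optimal state $\mathbf{X}^*$.

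For (ii), with existence in hand I would apply the Pontryagin Principle for bounded controls \cite{lenhart2007optimal}. The Hamiltonian is (\ref{hamiltonianomodelo1denso}), and the adjoint vector $\mathbf{Z}=(z_1,\dots,z_{10})$ must satisfy $\dot z_k=-\partial H/\partial x_k$ with the free--terminal--state transversality condition $\mathbf{Z}(T)=\mathbf{0}$. Carrying out the ten partial derivatives reproduces (\ref{sistemaadjuntomodelo11}). The only delicate point is the differentiation of the incidence terms $\beta_{hj}\epsilon_j I_{v_j}/N_{h_j}$ and $\beta_{vj}\epsilon_j I_{h_j}/N_{h_j}$ with respect to $S_{h_j}$, $I_{h_j}$, $R_{h_j}$: since $N_{h_j}=S_{h_j}+I_{h_j}+R_{h_j}$, the quotient rule produces the factors $(N_{h_j}-S_{h_j})/N_{h_j}^2$, $(N_{h_j}-I_{h_j})/N_{h_j}^2$ and $1/N_{h_j}$ that appear in the statement, while the cross--patch couplings contribute the pieces weighted by $\lambda_{ij}$. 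The source terms $-c_1$ and $-c_2$ enter only in the equations for the infected compartments $I_{h_i}$ and $I_{v_i}$, since they come from $\partial f_0/\partial x_k$. For (iii), the minimization condition requires that $\bm\theta^*$ minimize $\theta\mapsto H(t,\mathbf{X}^*,\theta,\mathbf{Z})$ over $U$ pointwise in $t$; because $H$ is separable and strictly convex in each component $\theta_{ij}$ (the coefficient of each squared control is $d_\ell/2>0$), the interior stationarity condition $\partial H/\partial\theta_{ij}=0$ has a unique root, namely $\theta_{11}=\xi_{11}(1-q_{11})I_{h_1}(z_2-z_3)/d_1$ and analogously for $\theta_{21},\theta_{12},\theta_{22}$. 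Projecting each root onto $[0,1]$ — legitimate precisely because of the strict convexity — gives the $\min\{\max\{0,\cdot\},1\}$ formulas in (\ref{micaracterizacionmodelo1}).

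The main obstacle is computational rather than conceptual: the bookkeeping in the ten adjoint equations, where every incidence term splits into several pieces through the quotient rule and links the two patches via the $\lambda_{ij}$. It is also worth checking carefully the sign convention $\dot z_k=-\partial H/\partial x_k$ and the placement of the cost source terms. Beyond the statement, uniqueness of the optimal control on a sufficiently short horizon would be the natural complement, obtained by a standard contraction argument on the coupled optimality system; I would not pursue it here.
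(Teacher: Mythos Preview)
Your proposal is correct and follows essentially the same route as the paper: existence via the Lenhart--Workman hypotheses (H1)--(H5) (which the paper establishes in the subsection preceding the theorem, using exactly Lemmas~\ref{cota} and~\ref{cota7}), then the Pontryagin Principle to obtain the adjoint system from $\dot z_k=-\partial H/\partial x_k$ with $\mathbf{Z}(T)=\mathbf{0}$, and finally the control characterization from the stationarity condition $\partial H/\partial\theta_{ij}=0$ projected onto $[0,1]$. Your discussion of the quotient--rule bookkeeping and the convexity justification for the projection step is slightly more explicit than the paper's, but the argument is the same.
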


\begin{proof}
The  Pontryaguin Principle guarantees the existence of the vector of adjoint variables $\mathbf{Z}$
whose components satisfy
\begin{eqnarray}\label{echamiltonmodelo1denso}
   \dot{z_i}=\frac{dz_i}{dt}&=&-\frac{\partial H}{\partial x_i} \nonumber\\
   z_i(T) &=& 0, \quad i=1,2,...,10 \nonumber \\
   H(t, \mathbf{X}, \bm\theta, \mathbf{Z}) &=& \max_{\bm\theta\in U}H(t, \mathbf{X}, \bm\theta, \mathbf{Z}).
\end{eqnarray}
\noindent
Thus, the derivatives of the adjoint variables are
\begin{align*}
  \dot{z_1} &= -\frac{\partial H}{\partial S_{h_1}}, \quad z_1(T)=0 & \dot{z_6} &= -\frac{\partial H}{\partial S_{h_2}}, \quad z_6(T)=0  \\
  \dot{z_2} &= -\frac{\partial H}{\partial I_{h_1}}, \quad z_2(T)=0   & \dot{z_7} &= -\frac{\partial H}{\partial I_{h_2}}, \quad z_7(T)=0 \\
  \dot{z_3} &= -\frac{\partial H}{\partial R_{h_1}}, \quad z_3(T)=0   & \dot{z_8} &= -\frac{\partial H}{\partial R_{h_2}}, \quad
   z_8(T)=0 \\
   \dot{z_4} &= -\frac{\partial H}{\partial S_{v_1}}, \quad z_4(T)=0   & \dot{z_9} &= -\frac{\partial H}{\partial S_{v_2}}, \quad
  z_9(T)=0 .\\
  \dot{z_5} &= -\frac{\partial H}{\partial I_{v_1}}, \quad z_5(T)=0   & \dot{z_{10}} &= -\frac{\partial H}{\partial I_{v_2}}, \quad
  z_{10}(T)=0 .\\
\end{align*}
\noindent
Replacing the derivatives of $H$ with respect to the state equations in above equalities, we obtain  the system (\ref{sistemaadjuntomodelo11}). Additionally, the optimality conditions for the Hamiltonian are given by
\begin{equation*}\label{derivadahamiltoniamoodelo1}
    \frac{\partial H}{\partial \theta_{11}^*}= \frac{\partial H}{\partial \theta_{12}^*}= \frac{\partial H}{\partial \theta_{21}^*}= \frac{\partial H}{\partial \theta_{22}^*}=0,
\end{equation*}
\noindent
from where

 \begin{eqnarray*}\label{usubistar}
 % \nonumber to remove numbering (before each equation)
   \theta_{11}^* &=&  \frac{\xi_{11}(1-q_{11})I_{h_1}(z_2-z_3)}{d_1} \nonumber\\
   \theta_{21}^* &=& \frac{\xi_{21}(1-q_{21})(S_{v_1}z_4+I_{v_1}z_5)}{d_2}\nonumber\\
   \theta_{12}^* &=& \frac{\xi_{12}(1-q_{12})I_{h_2}(z_7-z_8)}{d_3} \nonumber \\
   \theta_{22}^* &=& \frac{\xi_{22}(1-q_{22})(S_{v_2}z_9+I_{v_2}z_{10})}{d_4}.
 \end{eqnarray*}
In consequence,  $\theta_{11}^*$ satisfies

\begin{equation*}\label{u1star}
\theta_{11}^*= \left\{ \begin{array}{ccc}
             1 &   if  &\frac{\xi_{11}(1-q_{11})I_{h_1}(z_2-z_3)}{d_1} >0 \\ \frac{\xi_{11}(1-q_{11})I_{h_1}(z_2-z_3)}{d_1}  &  if & 0\leq \frac{\xi_{11}(1-q_{11})I_{h_1}(z_2-z_3)}{d_1}\leq1\\
             0 &  if  & \frac{\xi_{11}(1-q_{11})I_{h_1}(z_2-z_3)}{d_1} <0,
             \end{array}
   \right.
\end{equation*}
 or equivalently
\begin{equation}\label{u1starcompacto}
    \theta_{11}^*=  \min\left\lbrace\max\left\lbrace0,\frac{\xi_{11}(1-q_{11})I_{h_1}(z_2-z_3)}{d_1} \right\rbrace,1\right \rbrace.
\end{equation}
Using a similar reasoning for $\theta_{21}^*$, $\theta_{12}^*$ and $\theta_{22}^*$ we obtain the characterization (\ref{micaracterizacionmodelo1}) which completes the proof.
\end{proof}
%%%%%%%%%%%%%%%%%%%%%%%%%%%%%%%%%%%%%%%%%%%%%%%%%%%%%%%%%%%numerical simulations

\section{Numerical experiments}

In this section  we present some numerical simulations associated with the implementation of drugs and insecticides as control strategies,  as well as their effects on the infected individuals under uncoupled and strongly--coupling scenarios.
For the simulations, we use the \textit{forward-backward sweep method} proposed  by Lenhart and Workman \cite{lenhart2007optimal}. The implementation time of the control strategies will be approximately 10 days, which is the duration of a malaria treatment. The values of the relative weights associated with the control, will be those of  Table 7 from \cite{romero2018optimal}.
\par
Figure \ref{Control_desacoplado} shows the behavior of the infected individuals in patches 1 and 2 under uncoupled scenario. Due to in this scenario, the disease only remains in patch 1 and does not spread to patch 2,  the density of infected individuals decreases with control in patch 1 and the effects of the controls in patch 2 are not necessary.
\begin{figure}[h]
\centering
\includegraphics[width=7 cm, height=7.0 cm]{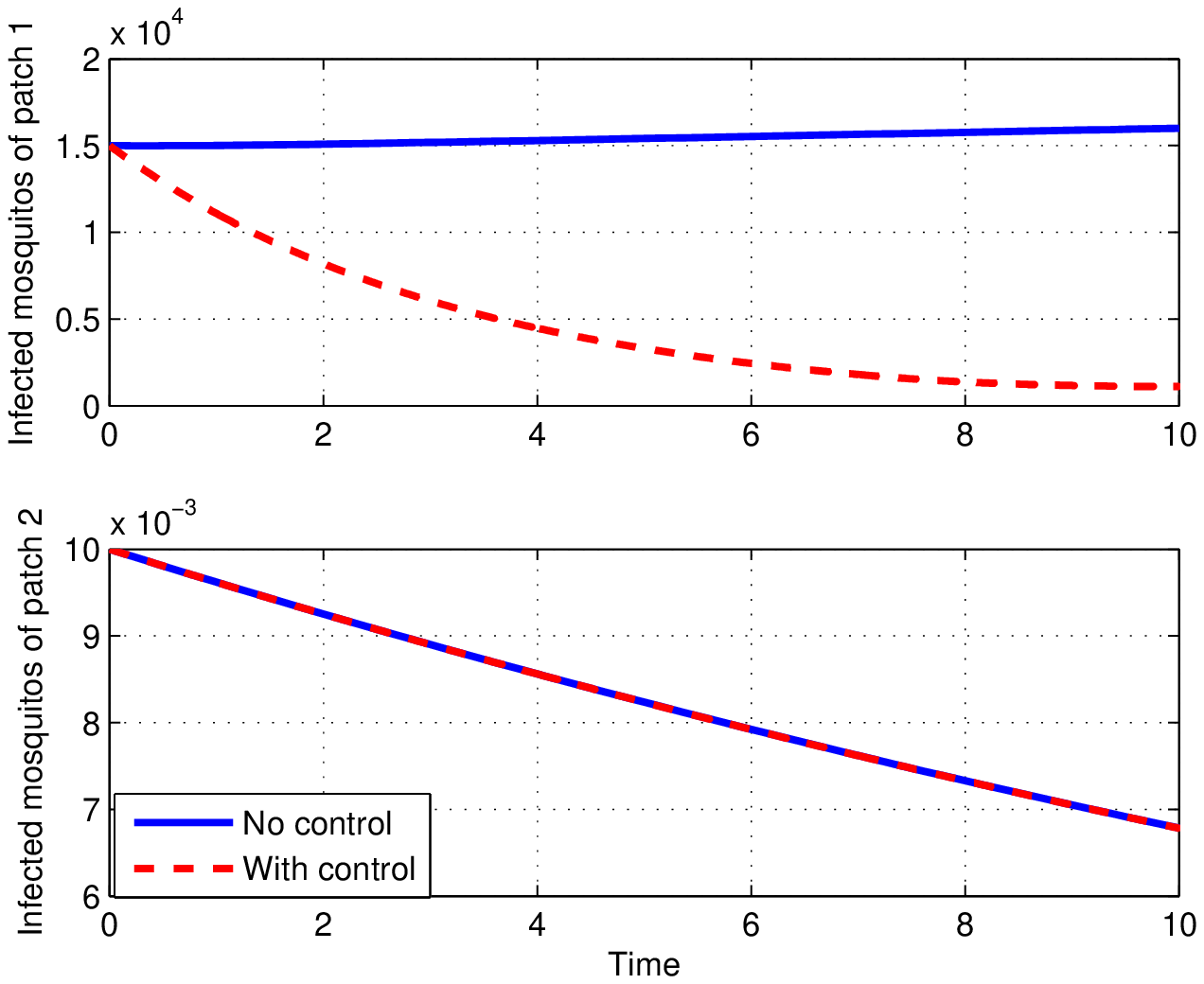}
\includegraphics[width=7 cm, height=7.0 cm]{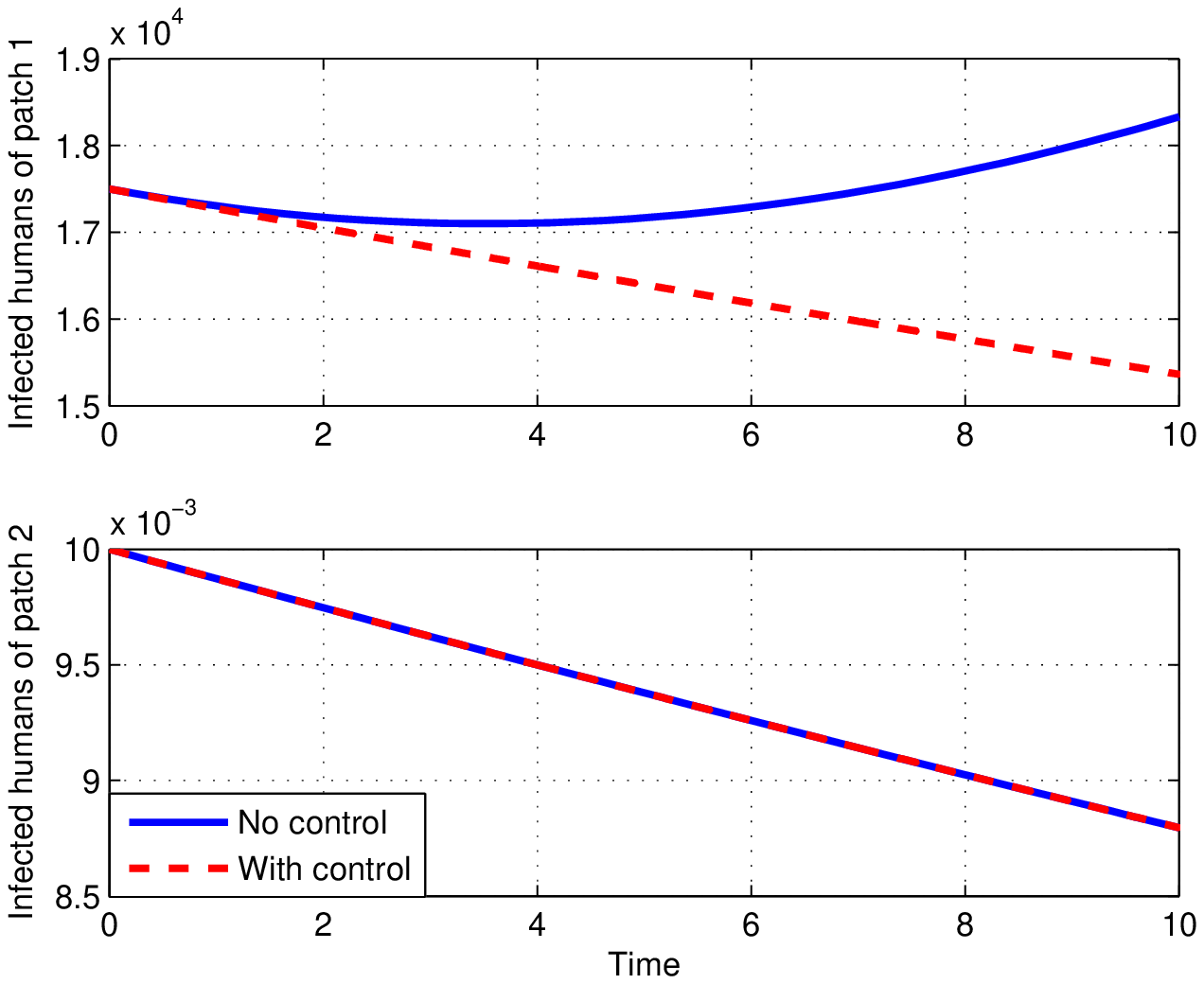} \\
\includegraphics[width=7 cm, height=3.5 cm]{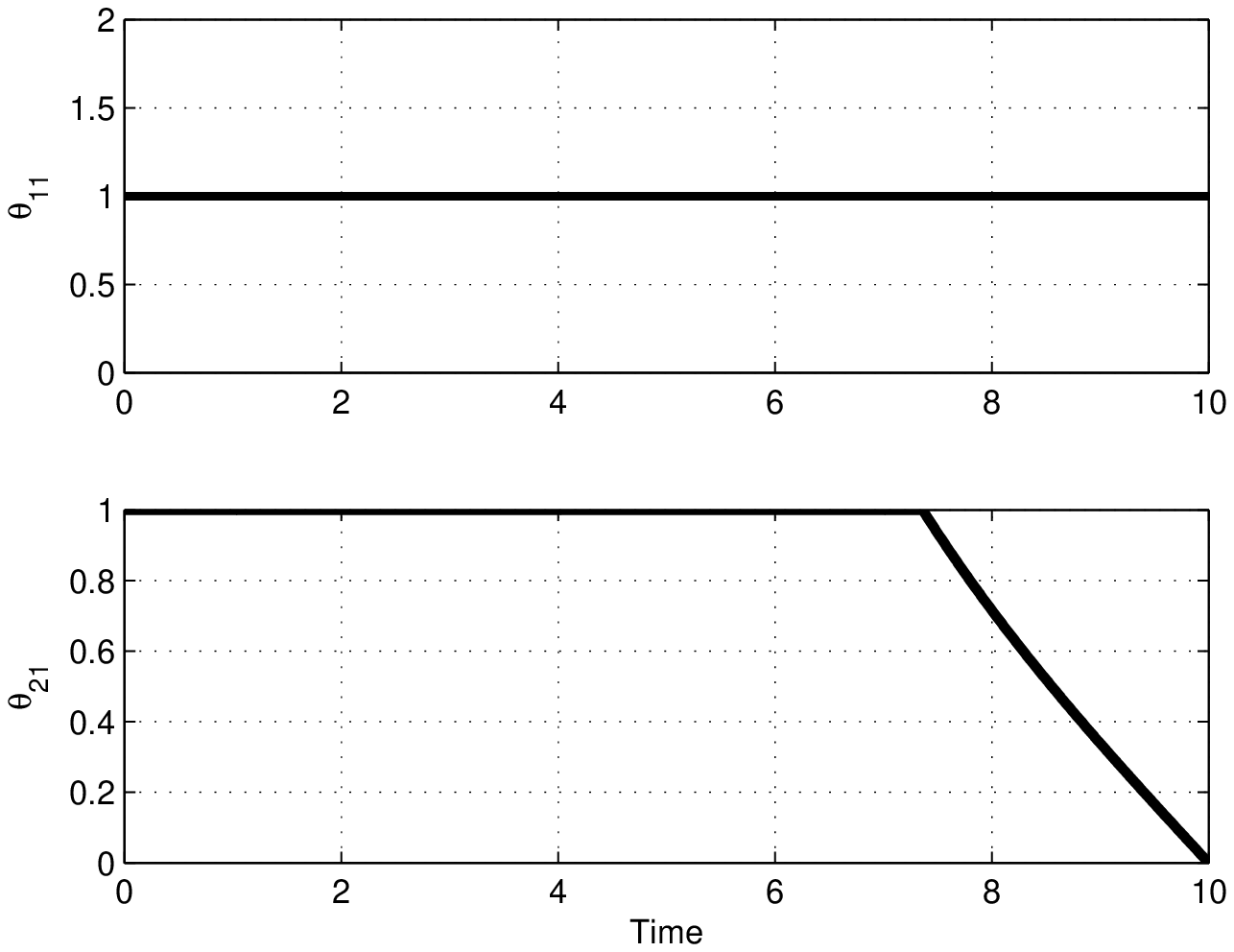}
\includegraphics[width=7 cm, height=3.5 cm]{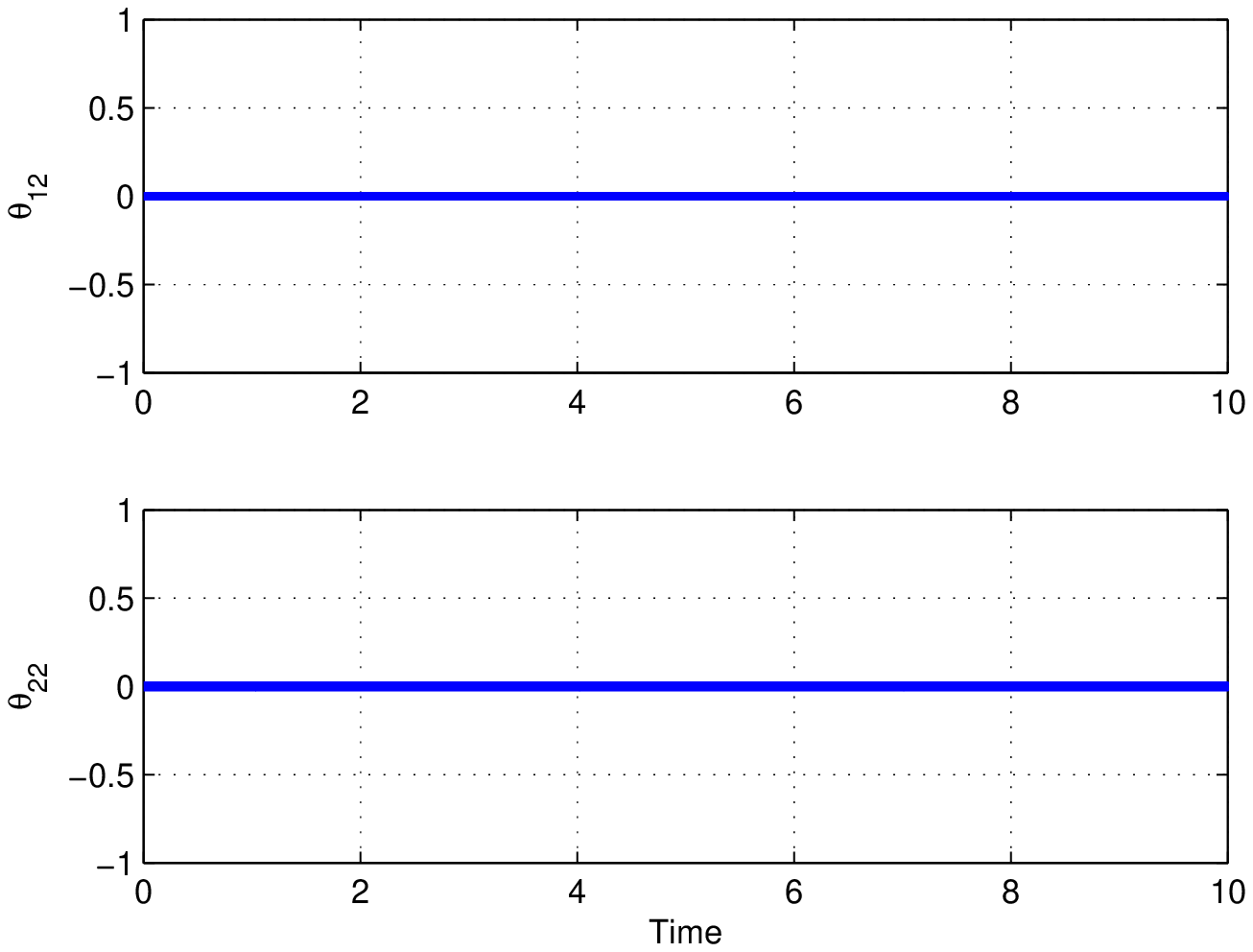}
\caption{{\scriptsize Control under uncoupled scenario.}}
\label{Control_desacoplado}
\end{figure}
In Figure \ref{Control_strongly} we can see the behavior of infected individuals in patches 1 and 2 under strongly--coupling scenario. Here, the infection decreases with control in both patches,
but the efforts are greater in patch 1 than in patch 2.
\begin{figure}[h]
\centering
\includegraphics[width=7 cm, height=7.0 cm]{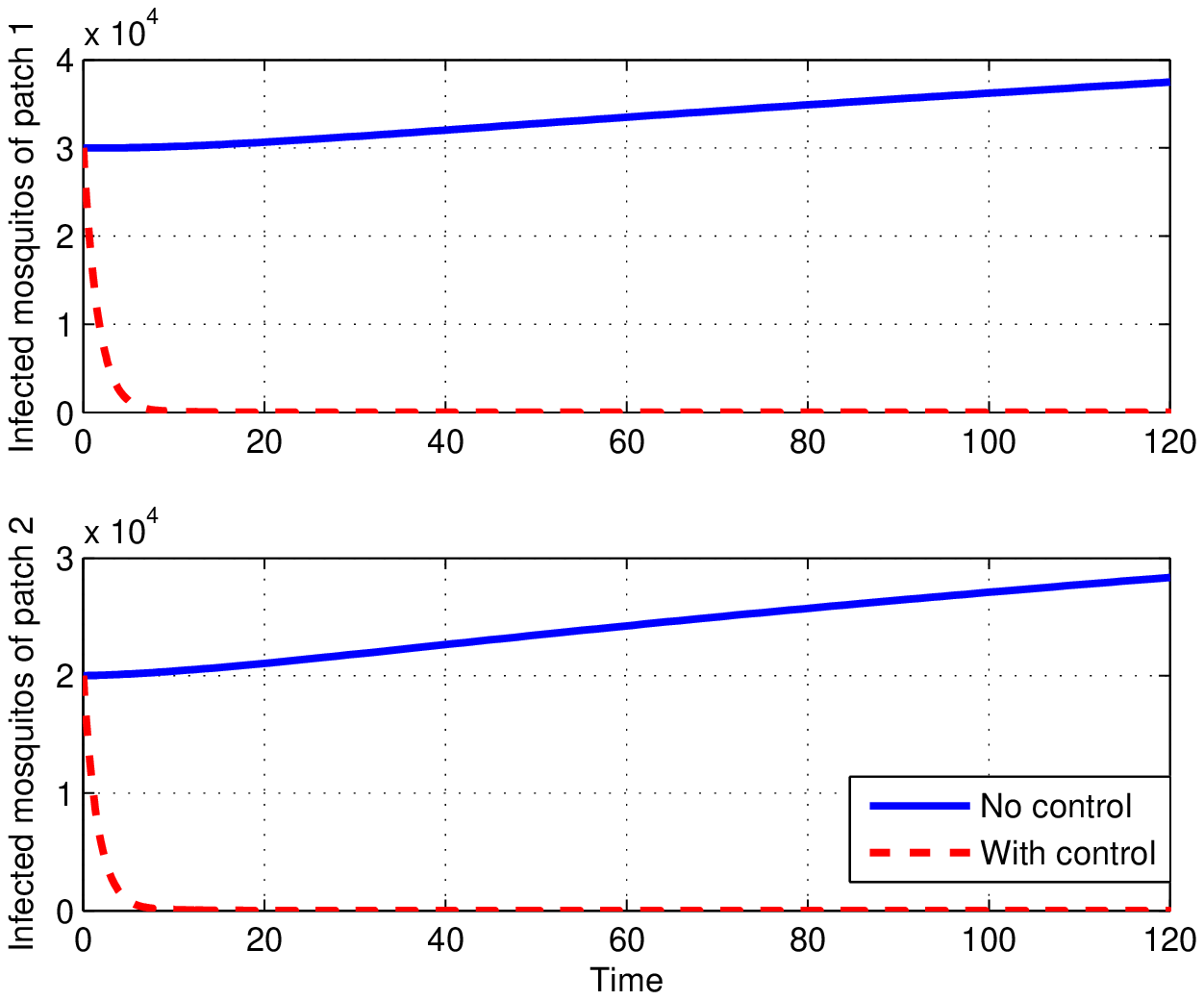}
\includegraphics[width=7 cm, height=7.0 cm]{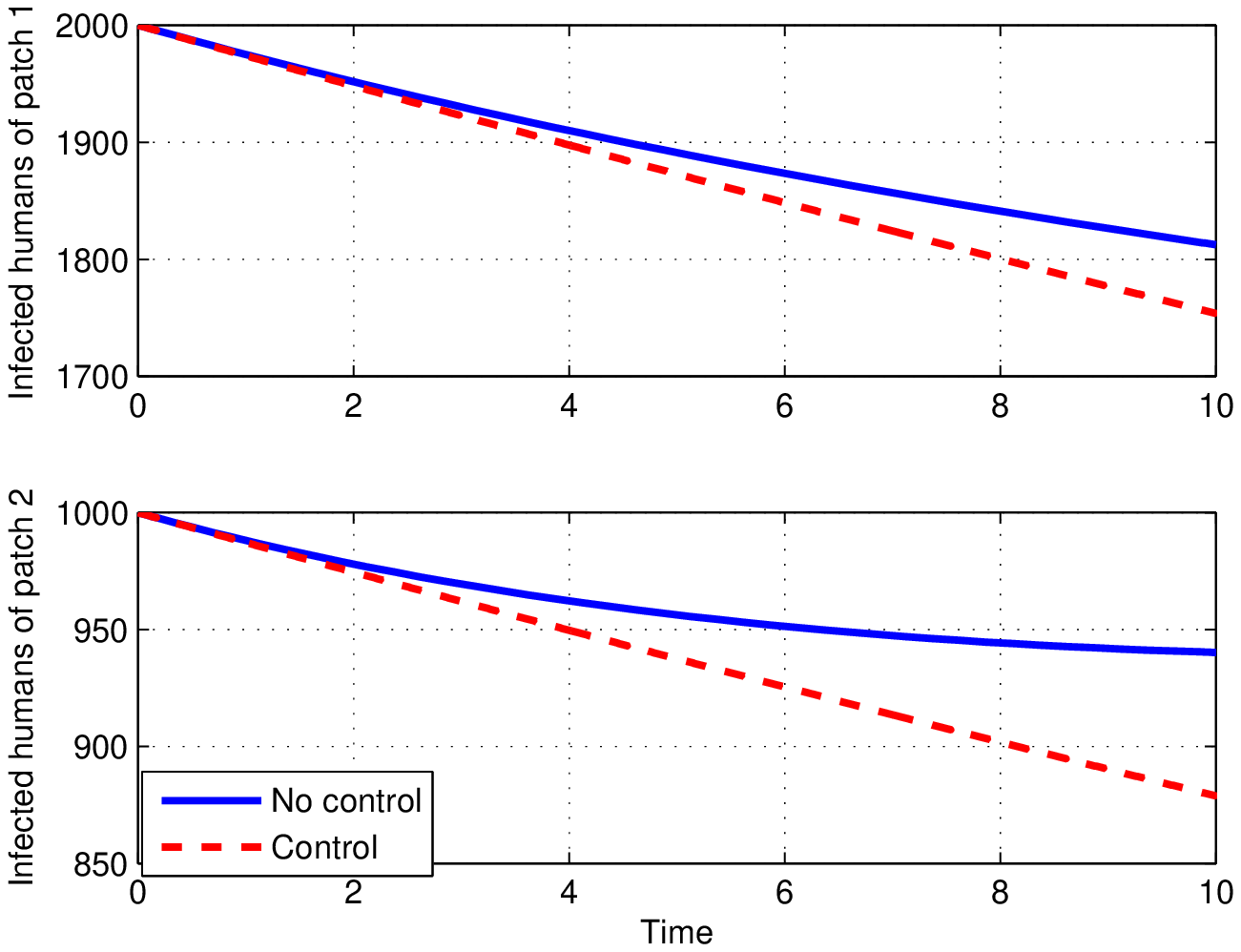} \\
\includegraphics[width=7 cm, height=3.5 cm]{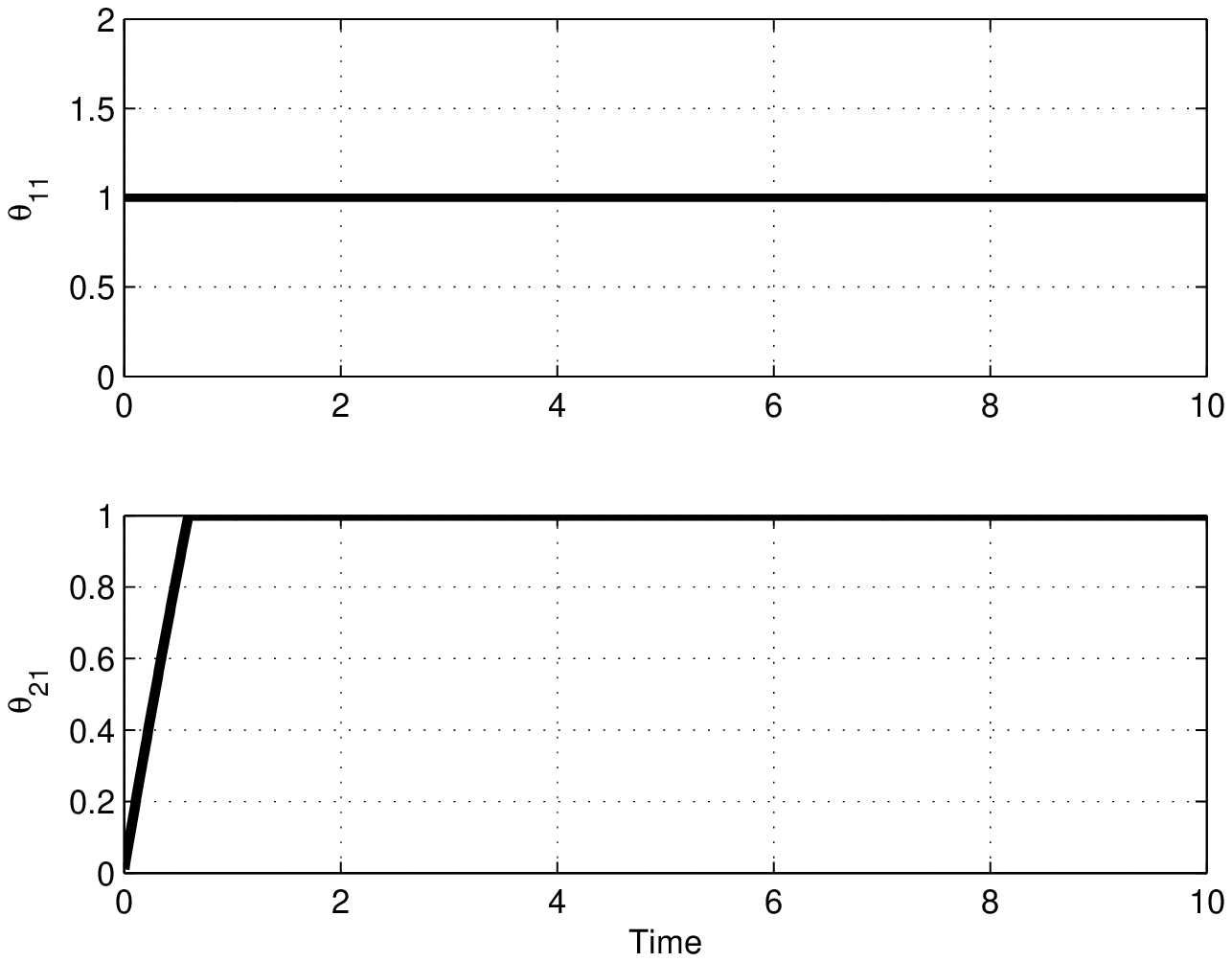}
\includegraphics[width=7 cm, height=3.5 cm]{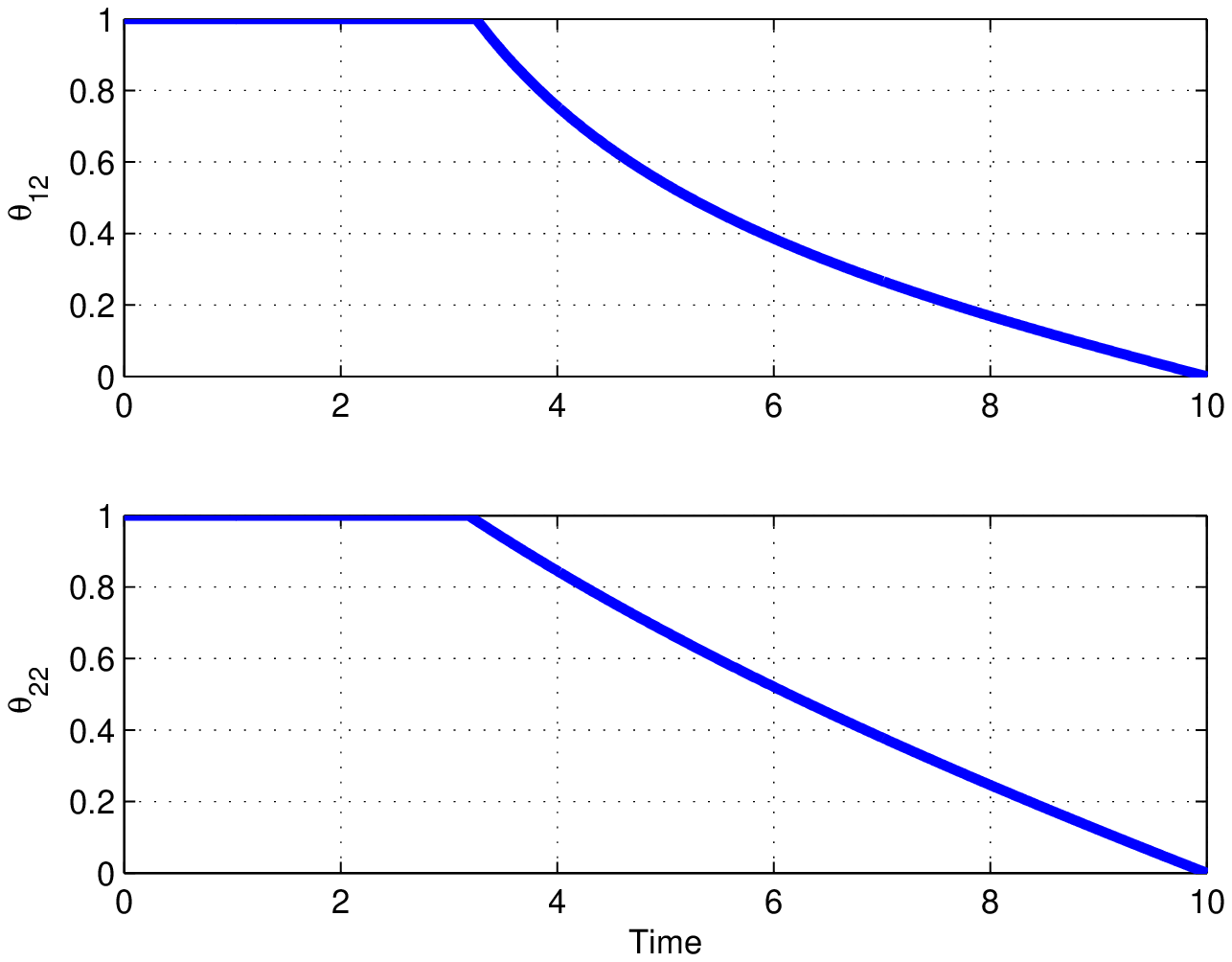}
\caption{{\scriptsize Control under strongly--coupled scenario.}}
\label{Control_strongly}
\end{figure}
\par
In both cases, uncoupled and strongly--coupling scenario,  the effects of the controls
are highly effective and fast to eliminate the disease in patch 1, while in patch 2 the elimination depends of the coupling scenario.

%%%%%%%%%%%%%%%%%%%%%%%%%%%%%%%discusion
\section{Discusion}

In this work,  we model the malaria transmission dynamics, considering three factors that hinder its control: resistance to drugs, resistance to  insecticides and population movement.
\noindent
To illustrate the above factors, we divide our work into two mathematical models. (a) A mathematical model in a patch under the hypothesis that the  parasites are resistant to the drugs, and the  mosquitoes are resistant to the insecticides. In this first model, we make a qualitative analysis of the solutions of the system, which reveal the existence of a forward bifurcation and the global stability of the DFE. From the biological point of view, the existence of a forward bifurcation indicates that the disease can be controlled by keeping the local $\mathcal{R}_{0_{one}} $ below of one. Since the expression for $ \mathcal{R}_{0_{one}} $ given on (\ref{R0one}) depends directly on the resistance acquisition ratios $ q_1 $ and $ q_2 $, then at lower levels of resistance acquisition, the value of $ \mathcal{R}_{0_{one}} $ decreases, which implies that the infection levels decrease. On the other hand, since $ \mathcal{R}_{0_{one}} $ depends inversely on the effects of the drugs and insecticides, then an increase in the recovery rate of humans due to drugs and the death of mosquitoes due insecticides, implies a decrease of $ \mathcal{R}_{0_{one}} $ and therefore the burden of infection.
\noindent
The numerical experiments for this first model corroborate the theoretical results. Here,  we assume that the infected patients are treated with ACT (artemisinin--based combination therapy) and to contrast the fumigation of mosquitoes with deltamethrin and cifluthrin, where the first insecticide is more effective than the second one. With total resistance to the drugs and insecticides ($ q_1 = q_2 = 1 $), we verify that the burden of infection persists regardless of the type of drug and insecticide used, while without resistance ($ q_1 = q_2 = 0 $), the burden of infection decreases with the use of deltamethrin and is maintained at low levels with the use of cifluthrin. These results are alarms in  public health, because despite the pharmaceutical industry is taking care day after day to create new drugs and new insecticides, if the phenomenon of resistance acquisition is not counteracted, the problem of malaria control will be increasingly difficult, and in some cases impossible.
\par
(b) For the model in two patches, we consider the same hypotheses of the model in a single patch, and additionally, movement of populations between  two patches. For this case, we determine the global basic reproductive number $ \mathcal{R}_0 $, and through numerical experiments, we illustrate the behavior of the solutions when the infection starts in the patch 1 (rural areas of Tumaco \cite{romero2018optimal}), and under three coupling scenarios: (1) uncoupled scenario.  When there is no movement  between patches, the infection remains endemic in the patch 1 and does not spread to the patch 2. (2) Weakly--coupling. If the  probabilities of visiting  between both patches are low, the disease is endemic in the patch 1 and remains at a very low load in the patch 2. (3) Strongly--coupling. If the probabilities of visiting  between both patches is high, the disease remains endemic in both patches.   These results corroborate the phenomenon of reinfection in areas where malaria has been eradicated and is not endemic, as is the case of urban malaria. Here, a new alarm in public health is created, because if malaria has been completely eradicated in a sector and is not endemic there, the movement of humans (or mosquitoes) from endemic areas can activate the infection alarm again.
\par
Finally, using results of a local sensitivity analysis of parameters to the global $ \mathcal{R}_0 $, we formulated an optimal control problem by using of  drugs and insecticides as control strategies. The results of the theoretical and numerical analysis of the optimal control problem reveal that under uncoupled scenario, the control is effective and necessary in patch 1 but not in patch 2, while under strongly--coupling, greater efforts are required to control the disease in patch 1 than in patch 2.
\par
An open problem through this research is to incorporate prophylaxis as a control strategy for the disease, that is, patient education campaigns both in the use of drugs and in the use of insecticides. In this way, the resistance phenomenon will be mitigated and the control campaigns for the disease will be more effective and less expensive.

%%%%%%%


\begin{thebibliography}{10}

\bibitem{agusto2014malaria}
\newblock F.~B. Agusto,
\newblock Malaria drug resistance: The impact of human movement and spatial
  heterogeneity,
\newblock \emph{Bulletin of Mathematical Biology}, \textbf{76} (2014),
  1607--1641,

\bibitem{aldila2014mathematical}
\newblock D.~Aldila, N.~Nuraini, E.~Soewono and A.~Supriatna,
\newblock Mathematical model of temephos resistance in aedes aegypti mosquito
  population,
\newblock in \emph{AIP Conference Proceedings}, vol. 1589,
\newblock AIP, 2014,
\newblock 460--463.

\bibitem{alphey2014managing}
\newblock N.~Alphey, P.~G. Coleman, C.~A. Donnelly and L.~Alphey,
\newblock Managing insecticide resistance by mass release of engineered
  insects,
\newblock \emph{Journal of Economic Entomology}, \textbf{100} (2014),
  1642--1649,

\bibitem{anderson1998descartes}
\newblock B.~Anderson, J.~Jackson and M.~Sitharam,
\newblock Descartes' rule of signs revisited,
\newblock \emph{The American Mathematical Monthly}, \textbf{105} (1998),
  447--451.

\bibitem{aneke2002mathematical}
\newblock S.~Aneke,
\newblock Mathematical modelling of drug resistant malaria parasites and vector
  populations,
\newblock \emph{Mathematical Methods in the Applied Sciences}, \textbf{25}
  (2002), 335--346,

\bibitem{bacaer2005reaction}
\newblock N.~Baca{\"e}r and C.~Sokhna,
\newblock A reaction-diffusion system modeling the spread of resistance to an
  antimalarial drug,
\newblock \emph{Mathematical Biosciences and Engineering}, \textbf{2} (2005),
  227--238,

\bibitem{barrios2018assessing}
\newblock E.~Barrios, S.~Lee and O.~Vasilieva,
\newblock Assessing the effects of daily commuting in two-patch dengue
  dynamics: A case study of cali, colombia,
\newblock \emph{Journal of Theoretical Biology}, \textbf{453} (2018), 14--39,

\bibitem{Bichara-Abderrahman2018}
\newblock D.~Bichara and A.~Iggidr,
\newblock Multi-patch and multi-group epidemic models: a new framework,
\newblock \emph{Journal of Mathematical Biology}, \textbf{77} (2018), 107--134.

\bibitem{bloland2001drug}
\newblock P.~B. Bloland,
\newblock \emph{Drug resistance in malaria},
\newblock Technical report, Geneva: World Health Organization, 2001.

\bibitem{Bock-Yashira2018}
\newblock W.~Bock and Y.~Jayathunga,
\newblock Optimal control and basic reproduction numbers for a compartmental
  spatial multipatch dengue model,
\newblock \emph{Mathematical Methods in the Applied Sciences}, \textbf{41}
  (2018), 3231--3245.

\bibitem{castillo2004dynamical}
\newblock C.~Castillo-Chavez and B.~Song,
\newblock Dynamical models of tuberculosis and their applications,
\newblock \emph{Mathematical Biosciences and Engineering}, \textbf{1} (2004),
  361--404.

\bibitem{chitnis2008determining}
\newblock N.~Chitnis, J.~M. Hyman and J.~M. Cushing,
\newblock Determining important parameters in the spread of malaria through the
  sensitivity analysis of a mathematical model,
\newblock \emph{Bulletin of Mathematical Biology}, \textbf{70} (2008), 1272,

\bibitem{dejesus1987routh}
\newblock E.~X. DeJesus and C.~Kaufman,
\newblock Routh--hurwitz criterion in the examination of eigenvalues of a
  system of nonlinear ordinary differential equations,
\newblock \emph{Physical Review}, \textbf{35} (1987), 5288.

\bibitem{esteva2009qualitative}
\newblock L.~Esteva, A.~B. Gumel and C.~V. De~Le{\'o}N,
\newblock Qualitative study of transmission dynamics of drug-resistant malaria,
\newblock \emph{Mathematical and Computer Modelling}, \textbf{50} (2009),
  611--630,

\bibitem{Gao-Ruan2012}
\newblock D.~Gao and S.~Ruan,
\newblock A multipatch {M}alaria model with logistic growth populations,
\newblock \emph{SIAM Journal on Applied Mathematics}, \textbf{72} (2012),
  819--841.

\bibitem{gourley2011slowing}
\newblock S.~A. Gourley, R.~Liu and J.~Wu,
\newblock Slowing the evolution of insecticide resistance in mosquitoes: a
  mathematical model,
\newblock \emph{Proceedings of the Royal Society: mathematical, physical and
  engineering sciences}, \textbf{467} (2011), 2127--2148,

\bibitem{guinovart2006malaria}
\newblock C.~Guinovart, M.~Navia, M.~Tanner and P.~Alonso,
\newblock Malaria: burden of disease,
\newblock \emph{Current Molecular Medicine}, \textbf{6} (2006), 137--140,

\bibitem{Hasler2016}
\newblock J.~Hasler,
\newblock \emph{Stochastic and deterministic multipatch epidemic models},
\newblock ProQuest LLC, Ann Arbor, MI, 2016,
\newblock Thesis (Ph.D.)--University of Illinois at Urbana-Champaign.

\bibitem{koella2003epidemiological}
\newblock J.~Koella and R.~Antia,
\newblock Epidemiological models for the spread of anti--malarial resistance,
\newblock \emph{Malaria Journal}, \textbf{2} (2003), 3,

\bibitem{lee2015role}
\newblock S.~Lee and C.~Castillo-Chavez,
\newblock The role of residence times in two-patch dengue transmission dynamics
  and optimal strategies,
\newblock \emph{Journal of Theoretical Biology}, \textbf{374} (2015), 152--164,

\bibitem{lenhart2007optimal}
\newblock S.~Lenhart and J.~T. Workman,
\newblock \emph{Optimal control applied to biological models},
\newblock Crc Press, 2007.

\bibitem{luz2009impact}
\newblock P.~Luz, C.~Codeco, J.~Medlock, C.~Struchiner, D.~Valle and
  A.~Galvani,
\newblock Impact of insecticide interventions on the abundance and resistance
  profile of aedes aegypti,
\newblock \emph{Epidemiology and Infection}, \textbf{137} (2009), 1203--1215.

\bibitem{Mishra-Sunita2018}
\newblock A.~Mishra and S.~Gakkhar,
\newblock Non-linear dynamics of two-patch model incorporating secondary dengue
  infection,
\newblock \emph{International Journal of Applied and Computational
  Mathematics}, \textbf{4} (2018), Art. 19, 22,

\bibitem{okosun2011modelling}
\newblock K.~Okosun and O.~D. Makinde,
\newblock Modelling the impact of drug resistance in malaria transmission and
  its optimal control analysis,
\newblock \emph{International Journal of the Physical Sciences}, \textbf{6}
  (2011), 6479--6487,

\bibitem{palomino2008eficacia}
\newblock M.~Palomino, P.~Villaseca, F.~C{\'a}rdenas, J.~Ancca and M.~Pinto,
\newblock Eficacia y residualidad de dos insecticidas piretroides contra
  triatoma infestans en tres tipos de viviendas: Evaluaci{\'o}n de campo en
  arequipa, per{\'u},
\newblock \emph{Revista Peruana de Medicina Experimental y Salud P{\'u}blica},
  \textbf{25} (2008), 9--16.

\bibitem{Prosper2012}
\newblock O.~Prosper, N.~Ruktanonchai and M.~Martcheva,
\newblock Assessing the role of spatial heterogeneity and human movement in
  malaria dynamics and control,
\newblock \emph{Journal of Theoretical Biology}, \textbf{303} (2012), 1--14.

\bibitem{romero2018optimal}
\newblock J.~Romero-Leiton and E.~Ibarg{\"u}en-Mondrag{\'o}n,
\newblock Stability analysis and optimal control intervention strategies of a
  malaria mathematical model,
\newblock \emph{Applied Sciences}, \textbf{21} (2019), 184--218.

\bibitem{smith2018efficacy}
\newblock S.~J. Smith, A.~R. Kamara, F.~Sahr, M.~Samai, A.~S. Swaray, D.~Menard
  and M.~Warsame,
\newblock Efficacy of artemisinin-based ombination therapies and prevalence of
  molecular markers associated with artemisinin, piperaquine and
  sulfadoxine-pyrimethamine resistance in sierra leone,
\newblock \emph{Actar Tropica}, \textbf{185} (2018), 363--370.

\bibitem{tchuenche2011mathematical}
\newblock J.~M. Tchuenche, C.~Chiyaka, D.~Chan, A.~Matthews and G.~Mayer,
\newblock A mathematical model for antimalarial drug resistance,
\newblock \emph{Mathematical Medicine and Biology: a Journal of the IMA},
  \textbf{28} (2011), 335--355,

\bibitem{van2002reproduction}
\newblock P.~Van~den Driessche and J.~Watmough,
\newblock Reproduction numbers and sub-threshold endemic equilibria for
  compartmental models of disease transmission,

\bibitem{wei2011controlling}
\newblock D.~Wei, X.~Luo and Y.~Qin,
\newblock Controlling bifurcation in power system based on lasalle invariant
  principle,
\newblock \emph{Nonlinear Dynamics}, \textbf{63} (2011), 323--329,

\bibitem{malaria2002community}
\newblock WHO,
\newblock Community involvement in rolling back malaria,
\newblock \emph{Geneva: World Health Organization}.

\bibitem{Zhang2018}
\newblock J.~Zhang, C.~Cosner and H.~Zhu,
\newblock Two--patch model for the spread of {W}est {N}ile virus,
\newblock \emph{Bulletin of Mathematical Biology}, \textbf{80} (2018),
  840--863,

\end{thebibliography}
\end{document}